\title{Dependences in Strategy Logic}
\author[1]{Patrick Gardy}
\author[1]{Patricia Bouyer}
\author[1,2]{Nicolas Markey}
\affil[1]{LSV, CNRS \& ENS Paris-Saclay, Universit\'e Paris-Saclay,
  France}
\affil[2]{IRISA, CNRS \& INRIA \& Universit\'e Rennes 1, France}
\authorrunning{P. Gardy, P. Bouyer, N.Markey} 
\subjclass{F.3.1 Specifying and Verifying and Reasoning about Programs, F.4.1 Mathematical Logic} 
\keywords{Game theory, strategy logic, dependence}
\begin{document}
\hfuzz=3pt\relax  
  \maketitle
  
    
  \begin{abstract}
    Strategy Logic~(\SL) is a very expressive logic for specifying and
verifying properties of multi-agent systems: in~\SL, one can quantify
over strategies, assign them to agents, and express properties of the
resulting plays. Such a powerful framework has two drawbacks: first,
model checking \SL has non-elementary complexity; second, the exact
semantics of \SL is rather intricate, and may not correspond to what
is expected. In~this paper, we~focus on \emph{strategy dependences}
in~\SL, by tracking how existentially-quantified strategies in a
formula may (or~may~not) depend on other strategies selected in the
formula. We~study different kinds of dependences, refining the
approach of~[Mogavero \emph{et~al.}, Reasoning about strategies:
On~the model-checking problem,~2014], and prove that they give rise to
different satisfaction relations. In~the setting where strategies may
only depend on what they have observed, we~identify a large fragment
of~\SL for which we prove model checking can be performed
in \EXPTIME[2].

  \end{abstract}

  \section{Introduction}
  \label{se::Introduction}
  \subparagraph{Temporal logics.}
Since Pnueli's seminal paper~\cite{PnueliIntoLTL} in~1977, temporal
logics have been widely used in theoretical computer science,
especially by the formal-verification community. Temporal logics
provide powerful languages for expressing properties of reactive
systems, and enjoy efficient algorithms for satisfiability and model
checking~\cite{clarke1999model}.
%
Since the early 2000s, new temporal logics have appeared to
address \emph{open} and \emph{multi-agent systems}. While classical
temporal logics (e.g.~\CTL~\cite{CE81,QS82a}
and~\LTL~\cite{PnueliIntoLTL}) could only deal with one or all the
behaviours of the whole system, \ATL~\cite{IntroductionATLetStar}
expresses properties of (executions generated~by) behaviours of
individual components of the system.
\ATL~has been extensively studied since then, both
about its expressiveness and about its verification
algorithms~\cite{IntroductionATLetStar,GvD06,LMO08}. 

\subparagraph{Strategic interactions in \ATL.}  Strategies in~\ATL are
handled in a very limited way, and there are no real \emph{strategic
interactions} in that logic (which, in~return, enjoys a
polynomial-time model-checking algorithm).  Over the last 10 years,
various extensions have been defined and studied in order to allow for
more interactions~\cite{IATL2007,CHATTERJEE2010677,
ATLscWithBoundedMemory2009,Mogavero:2014:RSM:2656934.2631917,BSIL2011}.
\emph{Strategy Logic}~(\SL for
short)~\cite{CHATTERJEE2010677,Mogavero:2014:RSM:2656934.2631917} is
such a powerful approach, in which strategies are first-class
objects; formulas can quantify (universally and
existentially) over strategies, store those strategies in variables,
assign them to players, and express properties of the resulting
plays. As~a simple example, the existence of a winning strategy for
Player~$A$ (with objective~$\phi_A$) against any strategy of
Player~$B$ would be written as \( \exists \sigma_A.\ \forall\sigma_B.\
\textsf{assign}(A\mapsto \sigma_A; B\mapsto \sigma_B).\ \phi_A \).
This makes the logic both expressive and easy to use (at~first sight),
at the expense of a very high complexity: \SL~model checking has
non-elementary complexity, and satisfiability is
undecidable~\cite{Mogavero:2014:RSM:2656934.2631917,LM15}.

\subparagraph{Strategy dependences in~\SL.}
It~has been noticed in recent works that the nice expressiveness
of~\SL comes with unexpected phenomena. One recently-identified
phenomenon~\cite{bouyer2016semantics} is induced by the separation of
strategy quantification and strategy assignment: are the events
between strategy quantifications and strategy assignments part of the
\emph{memory} of the strategy? While both options may make sense
depending on the applications, only one of them makes model checking
decidable~\cite{bouyer2016semantics}.

\looseness=-1
A~second phenomenon---which is the main focus of the present
paper---concerns \emph{strategy
dependences}~\cite{Mogavero:2014:RSM:2656934.2631917}: in a formula
such as $\forall\sigma_A.\ \exists \sigma_B.\ \xi$, the
existentially-quantified strategy~$\sigma_B$ may depend on \emph{the
whole} strategy~$\sigma_A$; in~other terms, the action returned by
strategy~$\sigma_B$ after some finite history~$\rho$ may depend on
what strategy~$\sigma_A$ would play on any other
history~$\rho'$. Again, this may be desirable in some contexts, but
it~may also make sense to require that strategy~$\sigma_B$ after
history~$\rho$ can be
\emph{computed} based solely on
what has been observed along~$\rho$. This~approach was initiated
in~\cite{Mogavero:2014:RSM:2656934.2631917,mogavero2014behavioral},
conjecturing that large fragments of~\SL (subsuming~\ATL*) would have
\EXPTIME[2] model-checking algorithms with such limited
dependences.

%


\subparagraph{Our contributions.}
We~follow this line of work by performing a more thorough exploration
of strategy dependences in (a~fragment of)~\SL. We~mainly follow the
framework of~\cite{mogavero2014behavioral}, based on a kind of
Skolemization of the formula: for instance, a~formula of the form
$(\forall x_i\exists y_i)_i.\ \xi$ is satisfied if there exists
a \emph{dependence map}~$\theta$ defining each
existentially-quantified strategy~$y_j$ based on the
universally-quantified strategies~$(x_i)_i$. In~order to recover the
classical semantics of~\SL, it~is only required that the strategy
$\theta((x_i)_i)(y_j)$ (i.e. the strategy assigned to the
existentially-quantified variable~$y_j$ by~$\theta((x_i)_i)$) only
depends on $(x_i)_{i<j}$.

Based on this definition, other constraints can be imposed on
dependence maps, in order to refine the dependences of
existentially-quantified strategies on universally-quantified ones.
\emph{Elementary dependences}~\cite{mogavero2014behavioral} only
allows existentially-quantified strategy~$y_j$ to depend on the values
of~$(x_i)_{i<j}$ along the current history. This gives rise to two
different semantics in general, but fragments of~\SL have been defined
on which the classic and elementary semantics would
coincide~\cite{Mogavero:2012:MAD:2403555.2403575,SLCGandDG2013Faux}.

We~introduce yet another kind of dependences, which we coin
\emph{timeline dependences}, and which extends elementary dependences
by allowing existentially-quantified strategies to additionally depend
on \emph{all} universally-quantified strategies along \emph{strict
  prefixes} of the current history. This we believe is even more
relevant than elementary dependences. 


We~study and compare those three dependences (classic, elementary and
timeline), showing that they correspond to three distinct
semantics. Because the semantics based on dependence maps is defined
in terms of the \emph{existence} of a witness map, we~show that the
syntactic negation of a formula may not correspond to its semantic
negation: there are cases where both a formula~$\phi$ and its
syntactic negation~$\neg\phi$ fail to hold (i.e., none of them has a
witness map). This phenomenon is already present, but had not been
formally identified,
in~\cite{Mogavero:2014:RSM:2656934.2631917,mogavero2014behavioral}. The
main contribution of the present paper is the definition of a fragment
of~\SL for which syntactic and semantic negations coincide for the
timeline semantics. As~an (important) side result, we show that model
checking this fragment under the timeline semantics is
\EXPTIME[2]-complete.






  \section{Definitions}
  \label{se::Definitions}
  \subsection{Concurrent game structures}
\label{se::Definitions:::ss:CGS}
For the rest of this paper, we~fix a finite set~$\AP$ of atomic propositions,
a finite set~$\calV$ of variables, and a finite set~$\Agt$ of agents (or~players).

A \emph{concurrent game structure}
is a tuple $\calG = \tuple{\Act,\Q,\Delta,\labels}$
where $\Act$~is a
finite set of actions, $\Q$~is a finite set of states,
$\Delta\colon \Q \times\Act^\Agt\to \Q$ is the transition function,
and $\labels\colon \Q\to2^{\AP}$ is a labelling function. An~element
of $\Act^\Agt$ will be called a \emph{move vector}.
For~any~${q \in \Q}$, we let $\succs(q)$ be the set $\{q'\in
Q\mid \exists m\in\Act^\Agt.\ q'=\Delta(q,m)\}$. For~the sake of
simplicity, we~assume in the sequel that $\succs(q)\not=\varnothing$ for
any~$q\in Q$.
%
A~game~$\calG$ is said \emph{turn-based} whenever for every
state~$q \in \Q$, there is a player $\own(q)\in \Agt$ (named
the \emph{owner} of~$q$) such that for any two move vectors~$m_1$
and~$m_2$ with $m_1(\own(q))=m_2(\own(q))$, it~holds
$\Delta(q,m_1)=\Delta(q,m_2)$. 
Figure~\ref{fig-ex} displays an example of a (turn-based) game.


Fix a state~$q\in \Q$.  A~\emph{play} in~$\calG$ from~$q$ is an
infinite sequence $\pi=(q_i)_{i \in \bbN}$ of states in~$Q$ such that
$q_0=q$ and $q_i\in\succs(q_{i-1})$ for all~$i>0$.  We~write
$\Play_{\calG}(q)$
for the set of plays in~$\calG$ from~$q$.
In this and all similar notations, we
might omit to mention~$\calG$ when it is clear from the context,
and~$q$ when we consider the union over all~$q\in Q$.
A~(strict) prefix of a
play~$\pi$ is a finite sequence $\rho=(q_i)_{0\leq i\leq L}$, for
some~$L\in\bbN$. We~write $\Prefx(\pi)$ for the set of strict prefixes
of play~$\pi$. Such finite prefixes are called \emph{histories}, and
we~let $\Hist_{\calG}(q)=\Prefx(\Play_\calG(q))$. We~extend the notion of
strict prefixes and the notation~$\Prefx$ to histories in the natural
way, requiring in particular that $\rho\notin\Prefx(\rho)$. A~(finite)
extension of a history~$\rho$ is any history~$\rho'$ such that
$\rho\in\Prefx(\rho')$.
%
Let $\rho=(q_i)_{i\leq L}$ be a history. We~define $\fst\rho=q_0$ and
$\lst{\rho}=q_L$. Let $\rho'=(q'_j)_{j\leq L'}$ be a history
from~$\lst\rho$. The~\emph{concatenation} of~$\rho$ and~$\rho'$ is
then defined as the path $\rho\cdot\rho'=(q''_k)_{k\leq L+L'}$ such
that $q''_k=q_k$ when $k\leq L$ and $q''_k=q'_{k-L}$ when $L\geq k$
(notice that we required $q'_0=q_L$).



%


A~\emph{strategy} from~$q$ is a mapping
$\delta\colon \Hist_\calG(q) \to \Act$.  We~write~$\Strat_{\calG}(q)$
for the set of strategies in~$\calG$ from~$q$.  Given a
strategy~$\delta\in\Strat(q)$ and a history~$\rho$ from~$q$,
the~\emph{translation}~$\delta_{\overrightarrow \rho}$ of~$\delta$
by~$\rho$ is the strategy $\delta_{\overrightarrow \rho}$ from
$\lst\rho$ defined by
$\delta_{\overrightarrow \rho}(\rho')=\delta(\rho \cdot \rho')$ for
any $\rho'\in\Hist(\lst{\rho})$.
%
%
A~\emph{valuation} from~$q$ is a
partial function $\chi\colon \calV \cup \Agt\rightharpoonup\Strat(q)$.
As~usual, for
any partial function~$f$, we~write~$\dom{f}$ for the domain of~$f$.

Let $q\in Q$ and $\chi$ be a valuation
from~$q$. If~$\Agt\subseteq\dom\chi$, then $\chi$ induces a unique
play from~$q$, called its \emph{outcome}, and defined as
$\out(q,\chi)=(q_i)_{i\in\bbN}$ such that $q_0=q$ and for every~${i
\in \bbN}$, we~have $q_{i+1}=\Delta(q_i,m_{i})$ with 
$m_{i}(A)=\chi(A)((q_{j})_{j\leq i})$ for every~$A \in \Agt$.


\subsection{Strategy Logic with boolean goals}
\label{se::Definitions:::ss:SLBGdef}
Strategy Logic~(\SL for short) was introduced
in~\cite{CHATTERJEE2010677}, and further extended and studied
in~\cite{fsttcs2010-MMV,Mogavero:2014:RSM:2656934.2631917}, as a rich
logical formalism for expressing properties of games. \SL~manipulates
strategies as first-order elements, assigns them to players, and
expresses \LTL properties on the outcomes of the resulting strategic
interactions.
This results in a very expressive temporal logic, for which
satisfiability is undecidable~\cite{fsttcs2010-MMV} and model checking
is \TOWER-complete~\cite{Mogavero:2014:RSM:2656934.2631917,BGM-fsttcs15}.
%
In~this paper, we focus on
a restricted fragment of~\SL, called~\SLBGf
(where \textsf{BG} stands for \emph{boolean
  goals}~\cite{Mogavero:2014:RSM:2656934.2631917}, and the
symbol~$\flat$ indicates that we do not allow nesting of (closed)
subformulas; we~discuss this latter restriction below).




\subparagraph{Syntax.}
Formulas in~\SLBGf are built along the following grammar
\begin{xalignat*}2
\SLBGf \ni \phi & \coloncolonequals \exists x.\ \phi \mid \forall x.\ \phi \mid \xi
  & 
  \xi & \coloncolonequals \non\xi\mid \xi\wedge\xi \mid \xi\vee\xi \mid \beta 
  \\
  \beta & \coloncolonequals \assign{\sigma}.\ \psi
  & 
  \psi & \coloncolonequals \neg\psi \mid \psi\ou\psi \mid \psi\et\psi\mid \X\psi
  \mid \psi\Until\psi \mid p
\end{xalignat*}
where $x$ ranges over~$\calV$, $\sigma$~ranges over the set~$\calV^\Agt$
of \emph{full assignments}, and
$p$ ranges over~$\AP$.
A~\emph{goal} is a formula of the form~$\beta$ in the grammar above;
it~expresses an \LTL property~$\psi$ on the outcome of the
mapping~$\sigma$. Formulas in~\SLBGf are thus made of an initial block
of first-order quantifiers (selecting strategies for variables
in~$\calV$), followed by a boolean combination of such goals.

\subparagraph{Free variables.}
With any subformula~$\zeta$ of some formula~$\phi\in\SLBGf$,
we~associate its~set of \emph{free agents and variables}, which we
write~$\free \zeta$. It~contains the agents and variables that have to
be associated with a strategy in order to unequivocally evaluate~$\zeta$
(as~will be seen from the definition of the semantics of \SLBGf
below). The~set $\free\zeta$ is defined inductively:
\begin{xalignat*}2
  \free p &= \varnothing \quad \text{for all $p\in\AP$}\qquad &
  \free{\X\psi} &= \Agt\cup\free{\psi} \\
  \noalign{\pagebreak[2]}
  \free{\neg\alpha} &= \free\alpha &
  \free{\psi_1\Until\psi_2} &= \Agt\cup\free{\psi_1}\cup \free{\psi_2} \\
  \noalign{\pagebreak[2]}
  \free{\alpha_1\ou\alpha_2} &= \free{\alpha_1} \cup \free{\alpha_2} &
  \free{\exists x.\ \phi} &= \free\phi \setminus\{x\} \\
  \free{\alpha_1\et\alpha_2} &= \free{\alpha_1} \cup \free{\alpha_2} &
  \free{\forall x.\ \phi} &= \free\phi \setminus\{x\} \\
    \free{\assign{\sigma}.\ \phi} &
    \multicolumn{3}{l}{${} =(\free\phi\cup \sigma(\Agt\cap\free\phi))\setminus \Agt
    $}
\end{xalignat*}
Subformula $\zeta$ is said to be \emph{closed} whenever
$\free{\zeta}=\varnothing$. 
We~can now comment on our choice of considering the flat fragment of
\SLBG: the~full fragment, as defined
in~\cite{Mogavero:2014:RSM:2656934.2631917}, allows for nesting
\emph{closed} \SLBG formulas in place of atomic propositions.  The
meaning of such nesting in our setting is ambiguous, because our
semantics (in~Sections~\ref{sec-depend} to~\ref{sec-sleg}) are defined
in terms of the \emph{existence of a witness}, which does not easily
propagate in formulas.
In~particular, as~we explain later in the paper, the semantics of the
negation of a formula (there exist a witness for~$\non\phi$) does not
coincide with the negation of the semantics (there is no witness
for~$\phi$);
thus substituting a subformula and substituting its negation may
return different results.






\subparagraph{Semantics.}
Fix a state~$q\in Q$, and a valuation
$\chi\colon \calV\cup\Agt\to \Strat(q)$.  We~inductively define the
semantics of a subformula~$\alpha$ of a formula of \SLBGf at~$q$ under
valuation~$\chi$, requiring $\free\alpha\subseteq \dom\chi$.
We~omit the easy cases of boolean combinations and atomic propositions.

Given a mapping $\sigma\colon \Agt\to\calV$, the semantics of strategy
assignments is defined as follows:
\[
\calG,q\models_{\chi} \assign{\sigma}.\ \psi
\quad \iff \quad 
\calG,q\models_{\chi[A\in\Agt\mapsto\chi(\sigma(A))]} \psi.
\]
Notice that, writing
$\chi'=\chi[A\in\Agt\mapsto\chi(\sigma(A))]$, we~have
$\free\psi\subseteq \dom{\chi'}$ if $\free\alpha\subseteq\dom\chi$, so
that our inductive definition is sound.

We~now consider path formulas $\psi=\X\psi_1$ and
$\psi=\psi_1\Until\psi_2$.  Since
$\Agt\subseteq\free\psi\subseteq \dom\chi$, the~valuation~$\chi$
induces a unique outcome $\out(q,\chi)=(q_i)_{i\in\bbN}$ from~$q$.
For~$n\in\bbN$, we~write $\out_n (q,\chi)=(q_i)_{i\leq n}$, and define
$\chi_{\overrightarrow n}$ as the valuation obtained by shifting all
the strategies in the image of~$\chi$ by~$\out_n (q,\chi)$.  Under the
same conditions, we~also define $q_{\overrightarrow n}=\lst{\out_n
(q,\chi)}$.  We~then~set
\begin{alignat*}1
  \calG,q\models_{\chi} \X\psi_1
  &\quad\iff\quad 
  \calG, q_{\overrightarrow{1}} \models_{\chi_{\overrightarrow 1}} \psi_1
  \\
  \calG,q\models_{\chi} \psi_1\Until\psi_2
  &\quad\iff\quad 
  \exists k\in\bbN.\ \calG, q_{\overrightarrow{k}} \models_{\chi_{\overrightarrow k}} \psi_2
  \quad \text{and} \quad 
  \forall 0\leq j< k.\ \calG, q_{\overrightarrow{j}} \models_{\chi_{\overrightarrow j}} \psi_1.
\end{alignat*}

It remains to define the semantics of the strategy quantifiers. This
is actually what this paper is all about. We~provide here the original
semantics, and discuss alternatives in the following sections:
\begin{align*}
  \calG,q \models_{\chi} \exists x.\phi 
  \quad\iff\quad 
  \exists \delta\in\Strat(q).\ \calG,q\models_{\chi[x\mapsto \delta]} \phi.
\end{align*}
In the sequel, we use classical shorthands, such as
$\top$ for $p \ou \non p$ (for any~$p\in\AP$), 
$\F\psi$ for $\top\Until\psi$ (\emph{eventually}~$\psi$), and $\G\psi$ for $\non\F\non\psi$ (\emph{always}~$\psi$). 


\begin{example}
  \label{first-example} 
  We consider the (turn-based) game~$\calG$ is depicted on
  Fig.~\ref{jhvgjjgjh}. We~name the players after the shape of the
  state they control.
  The \SLBG formula~$\phi$ to the right of
  Fig.~\ref{jhvgjjgjh} has four quantified variables and
  two goals.
  We~show that this formula evaluates to~true at~$q_0$:
  fix a strategy~$\delta_y$ (to be played by
  player~\CircleFill[\FigColA]{}); because~$\calG$ is turn-based,
  we~identify the actions of the owner of a state with the resulting
  target state, so that $\delta_y(q_0q_1)$ will be either~$p_1$
  or~$p_2$. We~then define strategy~$\delta_z$ (to be played
  by~\DiamondFill[\FigColC]{}) as
  $\delta_z(q_0q_2)=\delta_y(q_0q_1)$. Then~clearly, for~any strategy
  assigned to player~\BoxFill[\FigColB]{}, one of the goals of
  formula~$\phi$ holds true, so that~$\phi$ itself evaluates to true.
%
\begin{figure}[t]
\centering
    \begin{tikzpicture} [xscale=1, yscale=.7, every node/.style={draw}]
      \node [rectangle, minimum size=6mm,draw=\FigColB, fill=\FigColB!40!white] (root) at (0,0) {$q_0$};
      \node [circle, inner sep=0pt, minimum size=7mm,draw=\FigColA, fill=\FigColA!40!white] (sqr) at (1,-0.75) {$q_1$};
      \node [diamond, inner sep=0pt, minimum size=8mm,draw=\FigColC, fill=\FigColC!40!white] (diam) at (1,0.75) {$q_2$};
      \node [draw=none] (A) at (2.2,-0.75) {$\ p_1$};
      \node [draw=none] (B) at (2.2,0.75) {$\ p_2$};
      \draw [-latex'] (root) to (sqr); \draw [-latex'] (root) to (diam);
      \draw [-latex'] (sqr) to (A); \draw [-latex'] (sqr) to (B);
      \draw [-latex'] (diam) to (B); \draw [-latex'] (diam) to (A);
   \path[use as bounding box] (13,0);
   \path(8.4,0.3) node[draw=none]{\begin{minipage}{.85\linewidth}
   \begin{alignat*}1
      \phi = \forall y.\exists z.\forall x_A.\forall x_B.\ \bigvee
      \begin{cases}
	\assign{\BoxFill[\FigColB]{}\mapsto x_A; \CircleFill[\FigColA]{}\mapsto y;
          \DiamondFill[\FigColC]{} \mapsto z}.\ \F p_1
	\\
	\assign{\BoxFill[\FigColB]{}\mapsto x_B; \CircleFill[\FigColA]{}\mapsto y;
          \DiamondFill[\FigColC]{} \mapsto z}.\ \F p_2
      \end{cases}
    \end{alignat*}
    \end{minipage}};
  \end{tikzpicture}
\caption{A game and a \protect\SLBG formula.}
  \label{jhvgjjgjh}\label{fig-ex}
\end{figure}
\end{example}


\subparagraph{Subclasses of \protect\SLBG.}
Because of the high complexity and subtlety of reasoning with~\SL and \SLBG, 
several restrictions of \SLBG have been considered in the
literature~\cite{Mogavero:2012:MAD:2403555.2403575,SLCGandDG2013Faux,mogavero2014behavioral},
by adding further restrictions to boolean combinations in the grammar
defining the syntax:
\begin{itemize}
\item \SLOneG restricts \SLBG to a unique goal. 
  \SLOneGf is then defined from the grammar of \SLBGf by setting
  $\xi \coloncolonequals \beta$ in the grammar;
\item the larger fragment \SLCG allows for \emph{conjunctions} of goals.
  \SLCGf~corresponds to formulas defined with
  $\xi \coloncolonequals \xi\et\xi \mid \beta$;
\item similarly, \SLDG only allows \emph{disjunctions} of goals, i.e.
  $\xi \coloncolonequals \xi\vee \xi \mid \beta$;
\item finally, \SLAG mixes conjunctions and disjunctions in a
  restricted way. Goals in~\SLAGf can be combined using the following grammar:
  $\xi \coloncolonequals \beta\et\xi \mid  \beta\ou\xi \mid\beta$.
\end{itemize}

\noindent 
In the sequel, we write a generic \SLBGf formula $\phi$ as
  $(Q_i x_i)_{1\leq i \le l} .\ \xi(\beta_j.\ \psi_j)_{j \le n}$ where:
    \begin{itemize}
    \item $(Q_i x_i)_{i \le l}$ is a block of quantifications, with
      $\{x_i \mid 1 \le i \le l\}\subseteq \calV$
      and $Q_i\in\{\exists,  \forall\}$, for every $1 \le i \le l$;
    \item $\xi(g_1,...,g_n)$ is a boolean combination of its arguments;
    \item for all $1 \le j \le n$,  $\beta_j.\ \psi_j$ is a goal:
      $\beta_j$ is a full assignment and $\psi_j$ is an \LTL formula.
    \end{itemize}

  \section{Strategy dependences}
  \label{se::Framework}\label{sec-depend}

We~now follow the framework
of~\cite{
Mogavero:2014:RSM:2656934.2631917,
mogavero2014behavioral}
and define
the semantics of \SLBGf in terms of \emph{dependence maps}.  This
approach provides a fine way of controlling how
\emph{existentially-quantified} strategies depend on previously
selected strategies (in a quantifier block). Considering again
Example~\ref{first-example}, we~notice that the value of the
existentially-quantified strategy~$\delta_z$ after history~$q_0q_2$
depends on the value of strategy~$\delta_y$ on history~$q_0q_1$, which
may not be realistic.
Using dependence maps, we can limit such dependences.



\subparagraph{Dependence maps.}
%
Consider an \SLBGf formula $\phi = (Q_i x_i)_{1\leq i \le l} .\ \xi(\beta_j.\
\varphi_j)_{j \le n}$, assuming w.l.o.g. that
$\{ x_i \mid 1\leq i\leq l\}=\calV$. 
We~let $\calV^\forall =\{ x_i \mid Q_i=\mathord\forall \}\subseteq\calV$
be the set of universally-quantified variables of~$\phi$.
A function~$\theta\colon \Strat^{\calV^\forall} \to \Strat^{\calV}$
is a \emph{$\phi$-map} (or~\emph{map}
when $\phi$ is clear from the context) if
$\theta(w)(x_i)(\rho)=w(x_i)(\rho)$ for any
$w \in \Strat^{\calV^\forall}$, any~$x_i\in\calV^\forall$, and any
history~$\rho$. In~other words, $\theta(w)$ extends~$w$ to~\calV.
This general notion allows any existentially-quantified variable to
depend on \emph{all} universally-quantified ones (dependence on
existentially-quantified variables is implicit: all
existentially-quantified variables are assigned through a single map,
hence they all depend on the others); we~add further restrictions
later on.
Using maps, we~may then define new semantics for~\SLBGf:
generally speaking, formula $\phi = (Q_i x_i)_{1\leq i \le l} .\
\xi(\beta_j.\ \varphi_j)_{j \le n}$ holds true if there exists a
$\phi$-map~$\theta$
such that, for any $w\colon \calV^\forall \to \Strat$, the valuation
$\theta(w)$ makes $\xi(\beta_j.\ \varphi_j)_{j \le n}$ hold true.

\emph{Classic maps} are dependence maps in which the order of
quantification is respected:
\begin{multline}
\forall w_1,w_2\in\Strat^{\calV^\forall}.\
\forall x_i\in\calV\setminus\calV^\forall.\
\\
\bigl(\forall x_j\in \calV^\forall\cap\{x_j\mid j<i\}.\ w_1(x_j)=w_2(x_j)\bigr)
\Rightarrow
\bigl(\theta(w_1)(x_i)=\theta(w_2)(x_i)\bigr).
\tag{C}
\label{eq-C}
\end{multline}
In words, if $w_1$ and~$w_2$ coincide on
$\calV^\forall\cap\{x_j\mid j<i\}$, then $\theta(w_1)$ and
$\theta(w_2)$ coincide on~$x_i$.

\emph{Elementary
maps}~\cite{Mogavero:2014:RSM:2656934.2631917,
Mogavero:2012:MAD:2403555.2403575} have to satisfy a more
restrictive condition: for those maps, the value of an
existentially-quantified strategy at any history~$\rho$ may only
depend on the value of earlier universally-quantified
strategies \emph{along~$\rho$}. 
This may be written as:
\begin{multline}
\forall w_1,w_2\in\Strat^{\calV^\forall}.\
\forall x_i\in\calV\setminus\calV^\forall.\
\forall \rho\in\Hist.\ 
\\
\bigl(\forall x_j\in \calV^\forall\cap\{x_k\mid k<i\}.\
\forall \rho'\in\Prefx(\rho)\cup\{\rho\}.\ 
w_1(x_j)(\rho')=w_2(x_j)(\rho')\bigr)
\Rightarrow\\
\bigl(\theta(w_1)(x_i)(\rho)=\theta(w_2)(x_i)(\rho)\bigl).
\tag{E}
\label{eq-E}
\end{multline}
In this case, for any history~$\rho$,
if two valuations $w_1$ and $w_2$ of the universally-quantified variables
coincide on the variables quantified before~$x_i$ all along~$\rho$, 
then $\theta(w_1)(x_i)$ and $\theta(w_2)(x_i)$ have to coincide at~$\rho$.

The difference between both kinds of dependences is illustrated on
Fig.~\ref{fig-dep}: for classic maps, the existentially-quantified
strategy~$x_2$ may depend on the whole strategy~$x_1$, while it may
only depend on the value of~$x_1$ along the current history for
elementary maps. Notice that a map satisfying~\eqref{eq-E} also
satisfies~\eqref{eq-C}.

\def\mytree#1#2#3{%
  \draw (0,0) node[moyrond,vert,inner sep=2pt] (#1a) {$#2$};
  \path (#1a) node[coordinate] (#1A) {};
  \draw (#1a.-35) -- +(-72:2.2cm);
  \draw (#1a.-145) -- +(-108:2.2cm);
  \draw (-70:1cm) node[minicarre] (#1b) {};
  \draw (-110:1cm) node[minicarre] (#1b') {};
  \path (#1b) node[coordinate] (#1B) {};
  \path (#1b') node[coordinate] (#1B') {};
  \path (#1b) -- +(-77:8mm) node[minicarre] (#1c) {};
  \path (#1b) -- +(-103:8mm) node[minicarre,#3] (#1c') {};
  \path (#1b') -- +(-77:8mm) node[minicarre] (#1c'') {};
  \path (#1b') -- +(-103:8mm) node[minicarre] (#1c''') {};
  \draw (#1a) -- (#1b);
  \draw (#1a) -- (#1b');
  \draw (#1b) -- (#1c);
  \draw (#1b) -- (#1c');
  \draw (#1b') -- (#1c'');
  \draw (#1b') -- (#1c''');
  \foreach \i in {c,c',c'',c'''}
    {\draw[dashed] (#1\i) -- +(-80:6mm);\draw[dashed] (#1\i) -- +(-100:6mm);}
}

\begin{figure}[b]
  \centering
  \begin{tikzpicture}
    \begin{scope}
    \begin{scope}
      \mytree{1}{\forall x_1}{line width=.5mm}
      \fill[opacity=.4,orange] (1a.-35) -- ++(-72:2.2cm) -- ++(-2,0) --
        (1a.-145) ++ (-108:22mm) -- ++(2,0);
      \mytree{1}{\forall x_1}{line width=.5mm}
    \end{scope}
    \begin{scope}[xshift=2.2cm]
      \mytree{2}{\exists x_2}{line width=.5mm}
    \end{scope}
    \begin{scope}[xshift=4.4cm]
      \mytree{3}{\forall x_3}{line width=.5mm}
    \end{scope}
    \draw[dotted,line width=.5mm] (0,-.6) edge[bend left,-latex'] (2c');
    \end{scope}
    \begin{scope}[xshift=7.5cm]
    \begin{scope}
      \mytree{1}{\forall x_1}{line width=.5mm}
      \draw[opacity=.4,line width=2mm,orange] (1A) -- (1B)
        node[pos=.8,inner sep=0pt] (1d) {} -- (1c');
      \mytree{1}{\forall x_1}{line width=.5mm}
    \end{scope}
    \begin{scope}[xshift=2.2cm]
      \mytree{2}{\exists x_2}{line width=.5mm}
    \end{scope}
    \begin{scope}[xshift=4.4cm]
      \mytree{3}{\forall x_3}{line width=.5mm}
    \end{scope}
    \draw[dotted,line width=.5mm] (1d) edge[bend left,-latex'] (2c');
    \end{scope}
  \end{tikzpicture}
  \caption{Classical (left) vs elementary (right) dependences for a formula $\forall x_1.\ \exists x_2.\ \forall x_3.\ \xi$}
  \label{fig-dep}
\end{figure}
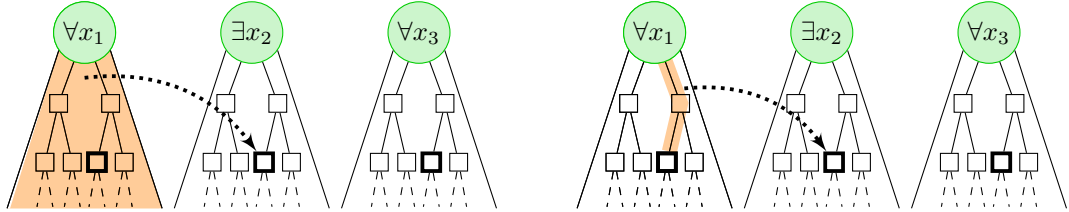

\subparagraph{Satisfaction relations.}
%
Pick a formula $\phi=(Q_i x_i)_{1\leq i\leq l} .\ \xi \bigl( \beta_j .\
\formuleLTL_j \bigr)_{j\leq n}$ in~\SLBGf. We~define:
\begin{alignat*}1
  \calG, q \models^C 
  \phi \quad \textup{iff} \quad \exists \theta\text{ satisfying~\eqref{eq-C}}.\
  \forall w \in
  \Strat^{\calV^\forall } 
  .\ \calG,q\models_{\theta(w)} \xi \big( \beta_j \formuleLTL_j
  \big)_{j\leq n}
\end{alignat*}
As explained above, this actually corresponds to the usual semantics
of \SLBGf as~given in
Section~\ref{se::Definitions}~\cite[Theorem~4.6]{Mogavero:2014:RSM:2656934.2631917}.
When $\calG,q\models^C\phi$, a~map~$\theta$ satisfying the conditions above is
called a $\ref{eq-C}$-\emph{witness} of~$\phi$ for~$\calG$ and~$q$.
%
Similarly, we~define the \emph{elementary
semantics}~\cite{Mogavero:2014:RSM:2656934.2631917} as:
\begin{alignat*}1
  \calG, q \models^E 
  \phi \quad \textup{iff} \quad \exists \theta\text{ satisfying~\eqref{eq-E}}.\
  \forall w \in
  \Strat^{\calV^\forall } 
  .\ \calG,q\models_{\theta(w)} \xi \big( \beta_j \formuleLTL_j
  \big)_{j\leq n}
\end{alignat*}
Again, when such a map exists, it~is called an $\ref{eq-E}$-witness.
Notice that since Property~\eqref{eq-E} implies Property~\eqref{eq-C},
we~have ${\calG,q\models^E \phi} \impl {\calG,q\models^C \phi}$ for
any ${\phi\in\SLBGf}$. 
%
This corresponds to the intuition that it is harder to satisfy a \SLBGf
formula when dependences are more restricted. 
The contrapositive statement then
raises questions about the negation of formulas. 

\subparagraph{The syntactic vs.\ semantic negations.}
If $\phi = (Q_i x_i) _{1\leq i \le l} \xi(\beta_j \varphi_j)_{j \le n}$ is
an \SLBGf formula, its syntactic negation~$\neg \phi$ is the formula
$(\overline{Q}_i x_i)_{i \le l} (\neg\xi)(\beta_j \varphi_j)_{j \le
  n}$, where $\overline{Q}_i =\exists$ if $Q_i =\forall$ and
$\overline{Q}_i = \forall$ if $Q_i =\exists$.  Looking at the
definitions of~$\models^C$ and $\models^E$, it~could be the case that
e.g. $\calG,q\models^{C} \phi$ and $\calG,q\models^{C}
\neg\phi$: this only requires the existence of two adequate maps.
However, since $\models^{C}$ and~$\models$ coincide, and since
$\calG,q\models \phi \iff \calG,q\not\models \neg\phi$ in the usual semantics,
we~get 
$\calG,q\models^{C} \phi \iff \calG,q\not\models^{C} \neg\phi$.  Also, since
$\calG,q\models^E \phi \impl \calG,q\models^C \phi$, we~also get
$\calG,q\models^{E} \phi \impl \calG,q\not\models^{E} \neg\phi$.
As we now show, the converse implication may fail to hold.


\begin{restatable}{proposition}{PropositionTroisAvril}
  \label{Chapitre3Framework::Prop::NeitherHold::31MArs2017}\label{prop-neg}
  There exist a (one-player) game $\calG$
  with initial state $q_0$ and a formula $\phi \in \SLBGf$ such that
  $\calG,q_0 \not\models^{E} \phi$ and
  $\calG,q_0 \not\models^{E} \neg \phi$.
\end{restatable}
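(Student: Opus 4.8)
The plan is to exhibit a single-player game together with an \SLBGf formula~$\phi$ whose "semantic meaning'' forces the existential player, at some history, to know something it provably cannot know under the elementary restriction~\eqref{eq-E}, and to arrange matters so that the \emph{same} obstruction applies verbatim to the syntactic negation~$\neg\phi$ (which again has the shape $\forall\dots\exists\dots$). The natural building block is a variant of Example~\ref{first-example}: a one-player game in which from the initial state~$q_0$ the (unique) player first makes a choice that is ``hidden'' from the elementary map, then later is required to replay that choice. Concretely I would take a game whose plays branch once into two symmetric continuations, and I would use a universally-quantified variable~$x$ whose value on an early history determines which of two goals the existentially-quantified variable~$y$ must satisfy; the catch is that $y$ must commit to its relevant action on a history~$\rho$ along which the value of~$x$ is not yet visible, so no \eqref{eq-E}-map can be correct against both values of~$x$. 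This gives $\calG,q_0\not\models^E\phi$.

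For the second half I would design $\phi$ so that $\neg\phi$ is essentially isomorphic to~$\phi$: in the boolean combination I would pick $\xi$ to be a disjunction of two goals of the form $\assign{\dots}.\,\F p_k$ whose negation is a conjunction of two goals $\assign{\dots}.\,\G\neg p_k$, and I would set up the game so that by the same hiding phenomenon no \eqref{eq-E}-map can satisfy \emph{that} conjunction against all choices of the (now) universally-quantified variable either. The cleanest way to make the argument symmetric is to choose a self-dual combinatorial structure: for instance a game where the player must, along an unobservable-prefix history, pick an action that will later be "matched" against an adversarial universally-quantified choice, and both $\phi$ (``there is a matching'') and $\neg\phi$ (``there is a mismatch'') turn out to require the same illegal dependence. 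I expect the formula to look like $\forall x_1.\,\exists x_2.\,\forall x_3.\ \xi$ with $\xi$ a small boolean combination of two goals, mirroring Fig.~\ref{fig-dep}.

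The key steps, in order, are: (i)~fix the one-player game $\calG$ and $q_0$, drawing a small turn-based arena with one early branching and a ``replay'' phase; (ii)~write $\phi=(Q_ix_i)_i.\ \xi(\beta_j.\psi_j)_j$ explicitly, identifying the variable $y$ that must be decided on a history~$\rho$ that is a strict prefix of the point where $x$ becomes relevant; (iii)~prove $\calG,q_0\not\models^E\phi$: suppose $\theta$ is an \eqref{eq-E}-witness; since the value of $x$ along~$\rho\cup\Prefx(\rho)$ is constant over the two relevant choices of~$w$, condition~\eqref{eq-E} forces $\theta(w)(y)(\rho)$ to be the same action for both, yet the two $w$'s demand different actions to satisfy~$\xi$, a contradiction; (iv)~observe that $\neg\phi$ has the same quantifier-alternation shape and that $\neg\xi$ is a boolean combination of the same goals, then run the mirror-image of step~(iii) to get $\calG,q_0\not\models^E\neg\phi$. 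The main obstacle is step~(iv): one must make the game genuinely symmetric between ``a good thing happens on \emph{some} branch'' and ``a bad thing happens on \emph{every} branch'', so that the dependence that would be needed to win $\neg\phi$ is just as forbidden by~\eqref{eq-E} as the one needed for $\phi$; getting the goals, their assignments, and the single branching point to line up so that both directions fail for exactly the same reason is the delicate part, whereas the two impossibility arguments themselves are then short applications of the definition of~\eqref{eq-E}.
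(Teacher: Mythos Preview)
Your plan for step~(iii) is sound and matches the paper's argument: pick a game where the existentially-quantified variable must depend on a universally-quantified one along an incomparable history, which~\eqref{eq-E} forbids.

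The gap is in step~(iv). Your hope of making $\phi$ and~$\neg\phi$ fail under~$\models^E$ ``for exactly the same reason'' via a self-dual construction cannot work as stated. Since~$\models^C$ coincides with the classical semantics, exactly one of $\calG,q_0\models^C\phi$ and $\calG,q_0\models^C\neg\phi$ holds. Whichever of the two is \emph{false} under~$\models^C$ is a fortiori false under~$\models^E$, but not because of the~\eqref{eq-E} restriction: no map whatsoever witnesses it, so the constraint you want to invoke is idle there. Hence the ``mirror-image of step~(iii)'', which hinges on~\eqref{eq-E} biting, is not the actual mechanism for that half. You could still give a direct case analysis showing no map works for~$\neg\phi$, but it will not be the same argument as~(iii), and lining the two up is exactly the obstacle you yourself flagged as delicate.

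The paper sidesteps this entirely. It uses a one-player game and $\phi=\forall x.\,\exists y.\,\xi$ with $\xi$ a \emph{conjunction}; it shows $\calG,q_0\not\models^E\phi$ by your step~(iii) reasoning, and then---this is the idea you are missing---observes that $\calG,q_0\models^C\phi$, because the dependence that~\eqref{eq-E} forbids is perfectly legal for a classic map. From $\calG,q_0\models^C\phi$ one gets $\calG,q_0\not\models^C\neg\phi$ and hence $\calG,q_0\not\models^E\neg\phi$ in one line. No symmetry is needed: the~\eqref{eq-E} obstruction is used once, and the second half follows from a positive~$\models^C$ witness.
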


\begin{proof}
  Consider the formula and
  the one-player game of Fig.~\ref{fig-negnotneg}.
  \begin{figure}[tb]
   \begin{tikzpicture}[xscale=0.75, yscale=0.8]
    \node [rond6,fill=\FigColA!40!white] (qinit) at (0,0) {$q_0$};
    \node [rond6, fill=\FigColA!40!white] (a) at (-1.5,-1) {$A$}; \node (p1) at (-1.5-0.75,-2) {$p_1$}; \node (p2) at (-1.5+0.75,-2) {$p_2$};
    \node [rond6, fill=\FigColA!40!white] (b) at (1.5,-1) {$B$}; \node (p3) at (1.5-0.75,-2) {$p_1$}; \node (p4) at (1.5+0.75,-2) {$p_2$};
    \draw[-latex'] (qinit) to (a); \draw[-latex'] (qinit) to (b);
    \draw[-latex'] (a) to (p1);\draw[-latex'] (a) to (p2);
    \draw[-latex'] (b) to (p3);\draw[-latex'] (b) to (p4);
   \path[use as bounding box] (15,0);
\path (9.5,-.2) node{%
  \begin{minipage}[t][][b]{0.8\textwidth}
   \begin{alignat*}1
    \phi=\forall x. \exists y.\ \ET
    \begin{cases}
      \assign{\CircleFill[\FigColA]{}\mapsto y}.~ \F  B
        \\
      \assign{\CircleFill[\FigColA]{}\mapsto x}.\ \F p_1  \iff
      \assign{\CircleFill[\FigColA]{}\mapsto y}.\ \F p_1
    \end{cases}
  \end{alignat*}
  \end{minipage}};
  \end{tikzpicture}
  \caption{A~game~$\calG$ and an \SLBGf formula~$\phi$ such that
  $\calG,q_0 \not\models^{E} \phi$ and $\calG,q_0
    \not\models^{E} \neg \phi$.}\label{fig-negnotneg}
  \end{figure}
  We start by proving that $\calG,q_0 \not\models^{E} \phi$, by
  looking for a witness for~$\phi$.
  First, for the first goal in the conjunction to be fulfilled, the
  strategy assigned to~$y$ must play to~$B$ from~$q_0$, whatever the
  valuation~$w$ for the universal variable~$x$.  We~abbreviate this as
  $\theta(w)(y)(q_0)=B$ in the sequel.  Now, for a valuation~$w$
  s.t. $w(x)(q_0)=A$, we~must have $\theta(w)(y)(q_0\cdot
  B)=w(x)(q_0\cdot A)$ in order to fulfill the second goal.  Such
  dependences are not allowed in the elementary semantics.
  
  We now prove that
  $\calG,q_0 \not\models^{E} \neg \phi$. Indeed, following the
  previous discussion, we~easily get that
  $\calG,q_0 \models^{C} \phi$, by letting $\theta(w)(y)(q_0)=B$ and
  $\theta(w)(y)(q_0\cdot B)=w(x)(q_0\cdot A)$ if $w(x)(q_0)=A$, and
  $\theta(w)(y)(q_0\cdot B)=w(x)(q_0\cdot B)$ if 
  $w(x)(q_0)=B$. As~explained above, this entails
  $\calG,q_0 \not\models^{C} \neg \phi$, and
  $\calG,q_0 \not\models^{E} \neg \phi$.
\end{proof}

The case of~\SLOneGf is simpler and we get:
\begin{restatable}{proposition}{PropositionquatreAvril}
  \label{prop-negSL1G}
  For any game~$\calG$
  with initial state $q_0$, and any formula $\phi \in \SLOneGf$,
  it~holds
  $\calG,q_0 \models^{E} \phi  \iff
  \calG,q_0 \not\models^{E} \neg \phi$.
\end{restatable}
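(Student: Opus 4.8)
The plan is to characterise the elementary semantics of a single‑goal
formula by a turn‑based two‑player game and then to exploit determinacy
of that game together with the fact that the construction is self‑dual
under syntactic negation. Write the given formula as
$\phi = (Q_i x_i)_{1\le i\le l}.\ \assign\sigma.\ \psi$, so that its
syntactic negation $\neg\phi = (\overline{Q_i} x_i)_{1\le i\le l}.\
\assign\sigma.\ \neg\psi$ is again a $\SLOneGf$ formula. The single‑goal
hypothesis is what makes this work: a valuation of the variables of
$\phi$ together with $\sigma$ determines a \emph{unique} play, so there
is a single outcome to control.

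I would associate with $\phi$ (and $\calG$, $q_0$) a turn‑based
\emph{quantification game} $\hat\calG_\phi$: the play builds a history
of $\calG$ from $q_0$; at the current history $\rho$ the two players
successively pick, in the order $i=1,\dots,l$, an action $a_i\in\Act$
for the variable $x_i$ (the choice being made by the $\exists$‑player
if $Q_i=\exists$ and by the $\forall$‑player if $Q_i=\forall$); once
$a_1,\dots,a_l$ are fixed, $\rho$ is extended by
$\Delta(\lst\rho,m)$, where $m(A)=a_j$ for the index $j$ with
$\sigma(A)=x_j$; this produces an infinite play of $\calG$, and the
$\exists$‑player wins iff that play satisfies $\psi$. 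Recording, besides
the current state of $\calG$, the state of a deterministic parity
automaton for $\psi$, this game is (equivalent to) a \emph{finite}
parity game, whose positions are tuples $(q,s,i,(a_1,\dots,a_{i-1}))$
with $q\in\Q$, $s$ an automaton state, $i\le l{+}1$ and $a_k\in\Act$; in
particular $\hat\calG_\phi$ is determined and the winner has a
\emph{positional} winning strategy.

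The core of the proof is the equivalence
\(\calG,q_0\models^E\phi\iff\text{the $\exists$‑player wins
}\hat\calG_\phi\). From right to left, take a positional winning
strategy $\tau$ for the $\exists$‑player. Since a positional strategy at
a position $(q,s,i,(a_1,\dots,a_{i-1}))$ looks only at the current
$\calG$‑state $q$ (the automaton state $s$ being a function of the
history of $\calG$‑states, hence of $\rho$) and at the actions chosen
for the strictly earlier variables $x_1,\dots,x_{i-1}$ at $\rho$, one
reads off from $\tau$ a $\phi$‑map $\theta$ by letting, for $Q_i=\exists$,
$\theta(w)(x_i)(\rho)$ be the action $\tau$ prescribes in the forward
simulation of $\tau$ along $\rho$ in which the arena transitions are
\emph{overridden} so as to follow $\rho$ exactly; because the position
reached just before round $i$ at $\rho$ in this overridden simulation
depends only on $\rho$ and on the values at $\rho$ of the universally
quantified variables of index $<i$, the resulting $\theta$
satisfies~\eqref{eq-E} (in fact a stronger condition). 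A straightforward
induction along the unique outcome of $\theta(w)$ shows that it
coincides with the play of $\tau$ against ``the $\forall$‑player plays
according to $w$'', hence satisfies $\psi$; so $\theta$ is an
\eqref{eq-E}‑witness of $\phi$. Conversely, given an
\eqref{eq-E}‑witness $\theta$, define a strategy for the
$\exists$‑player that, at the current history $q_0\cdots q_k$ and round
$i$, reconstructs from the $\forall$‑choices already revealed a
strategy $w$ and plays $\theta(w)(x_i)(q_0\cdots q_k)$ — well defined
by~\eqref{eq-E}; an induction along the produced play, using that it
equals the outcome of $\theta(w^*)$ for the strategy $w^*$ recording the
$\forall$‑player's choices, shows this strategy is winning.

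Finally, $\hat\calG_{\neg\phi}$ is literally $\hat\calG_\phi$ with the
two players exchanged (since $\overline{Q_i}$ swaps $\exists$ and
$\forall$ at every round) and with the objective complemented (a
deterministic parity automaton for $\neg\psi$ being obtained from that
for $\psi$ by shifting colours); hence a winning strategy for the
$\exists$‑player in $\hat\calG_{\neg\phi}$ is exactly a winning strategy
for the $\forall$‑player in $\hat\calG_\phi$. Combining this with the
equivalence above and with determinacy,
\(\calG,q_0\not\models^E\phi \iff \text{the $\exists$‑player does not win
}\hat\calG_\phi \iff \text{the $\forall$‑player wins }\hat\calG_\phi
\iff \text{the $\exists$‑player wins }\hat\calG_{\neg\phi} \iff
\calG,q_0\models^E\neg\phi\), which is precisely the asserted
equivalence $\calG,q_0\models^E\phi \iff \calG,q_0\not\models^E\neg\phi$.
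The main obstacle is the right‑to‑left implication of the
characterisation lemma: one must turn a winning game strategy into a
\emph{bona fide} map satisfying~\eqref{eq-E} over \emph{all} histories,
not just those lying on some outcome, and it is the single‑goal
hypothesis (a unique play to control) together with positionality of
winning strategies in parity games that makes the overridden‑simulation
definition both well founded and~\eqref{eq-E}‑compliant — with two or
more goals this fails, consistently with Proposition~\ref{prop-neg}.
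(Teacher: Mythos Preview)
Your approach is essentially the same as the paper's: build a two-player turn-based parity game by expanding each state of~$\calG$ into a quantification gadget (the paper calls it a \emph{quantification game}), record the state of a deterministic parity automaton for~$\psi$, use positionality of winning strategies in parity games to extract an elementary map, and conclude by determinacy. The paper organises the final step slightly differently---it proves only the one direction ``winning strategy for~$P_\exists$ (resp.~$P_\forall$) yields an \eqref{eq-E}-witness for~$\phi$ (resp.~$\neg\phi$)'' and then invokes determinacy together with the already-established implication $\models^E\phi\Rightarrow\not\models^E\neg\phi$, rather than proving the full biconditional $\models^E\phi\iff\text{$P_\exists$ wins}$ and using your duality observation---but this is a cosmetic difference.
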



\begin{proof}[Sketch of proof]
%
        This result follows
        from~\cite[Corollary~4.21]{Mogavero:2014:RSM:2656934.2631917},
        which states that $\models^C$ and $\models^E$ coincide
        on~\SLOneG. Because it is central in our approach,
        we~sketch a direct proof here (with a full proof
        in Appendix~\ref{app-Prop4}), using similar ingredients:
        it~consists in encoding the problem whether
        $\calG,q_0 \models^{E} \phi$ into a two-player turn-based game
        with a parity-winning objective.

  The construction is as follows:
  the interaction between
  existential and universal quantifications of the formula is
  integrated into the game structure, replacing each state of~$\calG$
  with a tree-shaped subgame where 
  Player~$P_\exists$ selects existentially-quantified actions and
  Player~$P_\forall$ selects universally-quantified ones.  The unique
  goal of the formula is then incorporated into the game via a
  deterministic parity automaton, yielding a two-player turn-based
  parity game. We~then show that $\calG,q_0\models^{E}\phi$ if, and
  only~if, Player~$P_\exists$ has a winning strategy in the resulting
  turn-based parity game, while $\calG,q_0\models^{E}\neg\phi$ if, and
  only~if, Player~$P_\forall$ has a winning strategy.  Those
  equivalences hold for the elementary semantics because memoryless
  strategies are sufficient in parity games.
  Proposition~\ref{prop-negSL1G} then follows by determinacy of those
  games~\cite{EJ91,Mos91}.
\end{proof}

\looseness=-1
Note that the construction of the parity game gives an effective
algorithm for the model-checking problem of \SLOneGf, which runs in
time doubly-exponential in the size of the formula, and polynomial in
the size of the game structure; we~recover the result
of~\cite{Mogavero:2014:RSM:2656934.2631917} for that problem.

\subparagraph{Comparison of $\models^C$ and~$\models^E$.}  
A~consequence of Prop.~\ref{prop-negSL1G} is that $\models^C$ and
$\models^E$ coincide on~\SLOneGf (Corollary~4.21
of~\cite{Mogavero:2014:RSM:2656934.2631917}).  However, when
considering larger fragments, the~satisfaction relations are distinct (see the proof of Prop.~\ref{Chapitre3Framework::Prop::NeitherHold::31MArs2017} for a
candidate formula in~\SLCGf):%
%
\begin{restatable}{proposition}{PropSLCGcodeAIAHIFHEFE}
  \label{PropSLCGSideoupasSideRemoveDepende}\label{prop-CEdiff}
The relations $\models^C$ and $\models^E$ differ on~\SLCGf, 
as~well as on~\SLDGf.
\end{restatable}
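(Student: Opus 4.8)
For each of the two fragments I would exhibit a concrete game, an initial state and a formula of that fragment that holds for~$\models^{C}$ but not for~$\models^{E}$; in both cases I reuse the turn-based game~$\calG$ of Figure~\ref{fig-ex}.

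For~$\SLDGf$ I would take the formula~$\phi_\vee$ of Example~\ref{first-example}, namely $\forall y.\,\exists z.\,\forall x_A.\,\forall x_B.\ (g_A\vee g_B)$ with $g_A=\assign{\BoxFill[\FigColB]{}\mapsto x_A;\ \CircleFill[\FigColA]{}\mapsto y;\ \DiamondFill[\FigColC]{}\mapsto z}.\ \F p_1$ and $g_B$ its analogue with~$x_B$ and~$\F p_2$.  This is plainly in~$\SLDGf$, and Example~\ref{first-example} already gives $\calG,q_0\models^{C}\phi_\vee$ (the classic witness sets $\theta(w)(z)(q_0q_2)=w(y)(q_0q_1)$).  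For~$\SLCGf$ I would use the ``conjunctive sibling'' $\phi_\wedge=\forall y.\,\exists z.\,\exists x_A.\,\exists x_B.\ (g_A\wedge g_B)$, obtained by turning the two trailing quantifiers into~$\exists$ and the top connective into~$\wedge$; this lies in~$\SLCGf$.  Here $\calG,q_0\models^{C}\phi_\wedge$ via the classic map that, reading~$w(y)(q_0q_1)$, sets the triple $(\theta(w)(x_A)(q_0),\theta(w)(x_B)(q_0),\theta(w)(z)(q_0q_2))$ to $(q_1,q_2,p_2)$ when $w(y)(q_0q_1)=p_1$ and to $(q_2,q_1,p_1)$ otherwise; one checks directly that both goals then hold, whatever~$w(y)$ is.

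The core of the argument is the negative direction~$\not\models^{E}$, which rests on the same observation in both cases: the only state owned by~$\CircleFill[\FigColA]{}$ (the player to whom~$y$ is assigned) is~$q_1$, so the only history at which the action of~$y$ can influence an outcome is~$q_0q_1$, and $q_0q_1$ is neither~$q_0q_2$ nor a strict prefix of it.  As~$z$ is quantified immediately after~$y$, condition~\eqref{eq-E} forces $\theta(w)(z)(q_0q_2)$, for any $\ref{eq-E}$-map~$\theta$, to depend on~$w(y)$ only through its values at~$q_0$ and~$q_0q_2$, hence to be \emph{independent} of the only relevant value~$w(y)(q_0q_1)$; for~$\phi_\wedge$ the same reading of~\eqref{eq-E} makes $\theta(w)(x_A)(q_0)$ and $\theta(w)(x_B)(q_0)$ depend on~$w(y)$ only through~$w(y)(q_0)$.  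I would then fix an arbitrary~$w_0$, record $a:=\theta(w_0)(z)(q_0q_2)$ (and, for~$\phi_\wedge$, the actions $\theta(w_0)(x_A)(q_0)$ and $\theta(w_0)(x_B)(q_0)$), and note that every~$w$ agreeing with~$w_0$ on the game-irrelevant components~$w(y)(q_0)$ and~$w(y)(q_0q_2)$ yields those same values.  A short case analysis --- on~$a\in\{p_1,p_2\}$, enriched for~$\phi_\wedge$ with the possible values of the two box-actions --- then produces such a~$w$: for~$\phi_\vee$, choosing $w(y)(q_0q_1)$, $w(x_A)(q_0)$ and~$w(x_B)(q_0)$ so that \emph{both} goals fail, whence $g_A\vee g_B$ is false; for~$\phi_\wedge$, choosing~$w(y)(q_0q_1)$ so that \emph{at least one} goal fails, whence $g_A\wedge g_B$ is false.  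In both cases~$\theta$ is not an $\ref{eq-E}$-witness, so $\calG,q_0\not\models^{E}\phi_\vee$ and $\calG,q_0\not\models^{E}\phi_\wedge$.

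I expect the only real obstacle to be getting the elementarity bookkeeping exactly right in the step above --- pinning down which coordinates of~$w$ an $\ref{eq-E}$-map may consult when committing to the actions of~$z$ (and, for~$\phi_\wedge$, of~$x_A$ and~$x_B$) at the relevant histories.  Once it is observed that~$\CircleFill[\FigColA]{}$ owns only~$q_1$, so that an elementary~$\theta$ must fix $z$'s move at~$q_0q_2$ before it can have ``seen'' $y$'s move at~$q_0q_1$, the diagonalisation is routine; the syntactic checks $\phi_\vee\in\SLDGf$ and $\phi_\wedge\in\SLCGf$ are immediate, and are exactly why the top connective and the trailing quantifiers were chosen as above.
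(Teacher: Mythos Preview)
Your proposal is correct. For~\SLDGf you use the same game and the same formula as the paper (the~formula of Example~\ref{first-example}), and your diagonalisation on the value $a=\theta(w_0)(z)(q_0q_2)$ unfolds into exactly the two adversarial valuations the paper exhibits in its proof; so that half is essentially identical.

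For~\SLCGf your route differs from the paper's. The paper simply points back to the formula of Proposition~\ref{Chapitre3Framework::Prop::NeitherHold::31MArs2017} (the one-player game of Fig.~\ref{fig-negnotneg}), arguing that it already separates~$\models^C$ from~$\models^E$ and lies in~\SLCGf. You instead stay with the three-player game of Fig.~\ref{fig-ex} and build a fresh conjunctive sibling $\phi_\wedge=\forall y.\exists z.\exists x_A.\exists x_B.\ (g_A\wedge g_B)$, flipping the two trailing quantifiers and the top connective. Your construction has the advantage that $\phi_\wedge$ is transparently a conjunction of two goals, hence unambiguously in~\SLCGf, whereas the paper's candidate contains an~$\iff$ between goals and needs an extra word to see why it should count as conjunctive. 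The price you pay is a slightly longer case analysis (you must track the committed values of $x_A$, $x_B$ and~$z$ rather than just one existential), but the underlying mechanism---an elementary map cannot let an existential strategy at~$q_0q_2$ (or at~$q_0$) peek at what~$y$ does on the sibling history~$q_0q_1$---is the same in both arguments.
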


\begin{remark}
  Proposition~\ref{PropSLCGSideoupasSideRemoveDepende} contradicts the
  claim in~\cite{SLCGandDG2013Faux} that $\models^{E}$ and
  $\models^{C}$ coincide on \SLCG (and~\SLDG).
  Indeed, in~\cite{SLCGandDG2013Faux}, the satisfaction relation for
  \SLDG and \SLCG is encoded into a two-player game in pretty much the
  same way as we did in the proof of Prop.~\ref{prop-negSL1G}.
  While
  this indeed rules out dependences outside the current history,
it~also gives information to Player~$P_{\exists}$ about the values
(over prefixes of the current history) of strategies that are
universally-quantified later in the quantification block.  This proof
technique works with~\SLOneGf because the single goal can be encoded
as a parity objective, for which memoryless strategies exist, so that
the extra information is not crucial.  In~the next section,
we~investigate the role of this extra information for larger fragments
of~\SLBGf.

\end{remark}



  \section{Timeline dependences}
  \label{se::AddingAction}\label{sec-timeline}
\looseness=-1
Following the discussion above, we~introduce a new type of dependences
between strategies (which we call \emph{timeline
dependences}). They~allow strategies to also observe (and~depend~on)
\emph{all} other universally-quantified strategies on the strict
prefix of the current history. For instance, for a block of
quantifiers~$\forall x_1.\ \exists x_2.\ \forall x_3$, the value
of~$x_2$ after history~$\rho$ may depend on the value of~$x_1$
on~$\rho$ and its prefixes (as for elementary maps), but also on the
value of~$x_3$ on the (strict) prefixes of~$\rho$. Such dependences
are depicted on Fig.~\ref{fig-semET}.  We~believe that such
dependences are relevant in many situations, especially for reactive
synthesis, since in this framework strategies really base their
decisions on what they could observe along the current history.

\def\mytree#1#2#3{%
  \draw (0,0) node[moyrond,vert,inner sep=2pt] (#1a) {$#2$};
  \path (#1a) node[coordinate] (#1A) {};
  \draw (#1a.-35) -- +(-72:2.2cm);
  \draw (#1a.-145) -- +(-108:2.2cm);
  \draw (-70:1cm) node[minicarre] (#1b) {};
  \draw (-110:1cm) node[minicarre] (#1b') {};
  \path (#1b) node[coordinate] (#1B) {};
  \path (#1b') node[coordinate] (#1B') {};
  \path (#1b) -- +(-77:8mm) node[minicarre] (#1c) {};
  \path (#1b) -- +(-103:8mm) node[minicarre,#3] (#1c') {};
  \path (#1b') -- +(-77:8mm) node[minicarre] (#1c'') {};
  \path (#1b') -- +(-103:8mm) node[minicarre] (#1c''') {};
  \draw (#1a) -- (#1b);
  \draw (#1a) -- (#1b');
  \draw (#1b) -- (#1c);
  \draw (#1b) -- (#1c');
  \draw (#1b') -- (#1c'');
  \draw (#1b') -- (#1c''');
  \foreach \i in {c,c',c'',c'''}
    {\draw[dashed] (#1\i) -- +(-80:6mm);\draw[dashed] (#1\i) -- +(-100:6mm);}
}

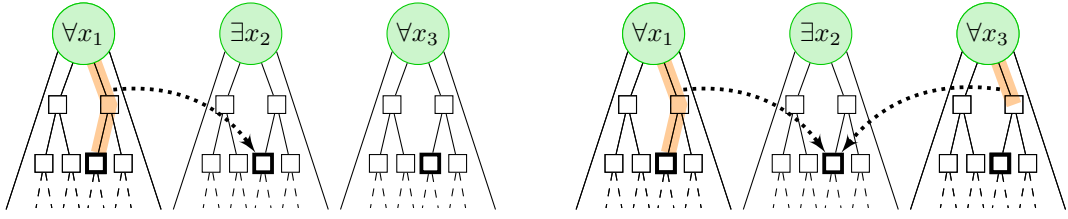
\begin{figure}[t]
  \centering
  \begin{tikzpicture}
    \begin{scope}
    \begin{scope}
      \mytree{1}{\forall x_1}{line width=.5mm}
      \draw[opacity=.4,line width=2mm,orange] (1A) -- (1B)
        node[pos=.8,inner sep=0pt] (1d) {} -- (1c');
      \mytree{1}{\forall x_1}{line width=.5mm}
    \end{scope}
    \begin{scope}[xshift=2.2cm]
      \mytree{2}{\exists x_2}{line width=.5mm}
    \end{scope}
    \begin{scope}[xshift=4.4cm]
      \mytree{3}{\forall x_3}{line width=.5mm}
    \end{scope}
    \draw[dotted,line width=.5mm] (1d) edge[bend left,-latex'] (2c');
    \end{scope}
    \begin{scope}[xshift=7.5cm]
    \begin{scope}
      \mytree{1}{\forall x_1}{line width=.5mm}
      \draw[opacity=.4,line width=2mm,orange] (1A) -- (1B)
        node[pos=.8,inner sep=0pt] (1d) {} -- (1c');
      \mytree{1}{\forall x_1}{line width=.5mm}
    \end{scope}
    \begin{scope}[xshift=2.2cm]
      \mytree{2}{\exists x_2}{line width=.5mm}
    \end{scope}
    \begin{scope}[xshift=4.4cm]
      \mytree{3}{\forall x_3}{line width=.5mm}
      \draw[opacity=.4,line width=2mm,orange] (3A) -- (3B)
        node[pos=.8,inner sep=0pt] (3d) {};
        \mytree{3}{\forall x_3}{line width=.5mm}
    \end{scope}
    \draw[dotted,line width=.5mm] (1d) edge[bend left,-latex'] (2c');   
    \draw[dotted,line width=.5mm] (3d) edge[bend right,-latex'] (2c');
    \end{scope}
  \end{tikzpicture}
  \caption{Elementary (left) vs timeline (right) dependences for a formula $\forall x_1.\ \exists x_2.\ \forall x_3.\ \xi$}
  \label{fig-semET}
\end{figure}




Formally, a~map~$\theta$ is a~\emph{timeline
map} if it satisfies the following condition:
\begin{multline}
\forall w_1,w_2\in\Strat^{\calV^\forall}.\
\forall x_i\in\calV\setminus\calV^\forall.\
\forall \rho\in\Hist.\ 
\\
\left(\begin{array}{r}
\forall x_j\in \calV^\forall\cap\{x_k\mid k<i\}.\
\forall \rho'\in\Prefx(\rho)\cup\{\rho\}.\ 
w_1(x_j)(\rho)=w_2(x_j)(\rho)
 \\
{} \et \forall x_j\in \calV^\forall.\
\forall \rho'\in\Prefx(\rho).\ 
w_1(x_j)(\rho)=w_2(x_j)(\rho)
\end{array}
\right)
\Rightarrow\\
\bigl(\theta(w_1)(x_i)(\rho)=\theta(w_2)(x_i)(\rho)\bigl).
\tag{T}
\label{eq-T}
\end{multline}

Using those maps, we~introduce the \emph{timeline semantics} of~\SLBGf:
\[
\calG,q\models^T \phi \quad\textup{iff}\quad
 \exists \theta\text{ satisfying~\eqref{eq-T}}.\
  \forall w \in
  \Strat^{\calV^\forall }.\
  \calG,q\models_{\theta(w)} \xi \big( \beta_j \formuleLTL_j \big)_{j\leq n}
\]
Such a map, if~any, is called a \ref{eq-T}-witness of~$\phi$
for~$\calG$ and~$q$. 
As in the previous section,
since Property~\eqref{eq-E} implies Property~\eqref{eq-T}, we~get that
an~\ref{eq-E}-witness is also a~\ref{eq-T}-witness, so that
$\calG,q\models^E\phi \impl \calG,q\models^T\phi$ for any
formula~$\phi\in\SLBGf$.

%

\begin{example}
  Consider again the game of Fig~\ref{jhvgjjgjh} in
  Section~\ref{se::Definitions}. We have seen 
  that
  ${\calG,q_0 \models^{C}\phi}$ in Section~\ref{se::Definitions}, and
  that $\calG,q_0 \not\models^{E}\phi$ in the proof of
  Prop.~\ref{PropSLCGSideoupasSideRemoveDepende}.
  With timeline dependences, we have $\calG,q_0
  \models^{T}\phi$. Indeed, now $\theta(w)(z)(q_0\cdot q_2)$ may depend
  on~$w(x_A)(q_0)$ and~$w(x_B)(q_0)$: we~could then have e.g. 
  $\theta(w)(z)(q_0\cdot q_2)=p_1$ when~$w(x_A)(q_0)=q_2$,
  and $\theta(w)(z)(q_0\cdot q_2)=p_2$ when $w(x_A)(q_0)=q_1$.
  It~is easily checked that this map is a $T$-witness of~$\phi$ for~$q_0$.
%
\end{example}

\subparagraph{Comparison of $\models^E$ and~$\models^T$.}
As explained at the end of Section~\ref{sec-depend}, the proof of
Prop.~\ref{prop-negSL1G} actually shows the following result:
\begin{proposition}
\label{prop-ET-SL1G}
  For any game~$\calG$
  with initial state $q_0$, and any formula $\phi \in \SLOneGf$,
  it~holds
  $\calG,q_0 \models^{E} \phi  \iff
  \calG,q_0 \models^{T}  \phi$.
\end{proposition}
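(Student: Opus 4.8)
The statement sits between an easy implication and a known one. On the easy side, since condition~\eqref{eq-E} is stronger than~\eqref{eq-T}, every \ref{eq-E}-witness of~$\phi$ is also a~\ref{eq-T}-witness, so $\calG,q_0\models^E\phi$ implies $\calG,q_0\models^T\phi$ (this was already noted just after the definition of~$\models^T$). The plan is to prove the converse for $\phi=(Q_ix_i)_{i\le l}.\ \beta.\ \psi\in\SLOneGf$ by reusing the turn-based parity game~$G_\phi$ constructed in the proof of Proposition~\ref{prop-negSL1G}, and showing that $\calG,q_0\models^T\phi$ already forces Player~$P_\exists$ to win~$G_\phi$. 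The proof of Proposition~\ref{prop-negSL1G} supplies the missing link $P_\exists \text{ wins } G_\phi \Rightarrow \calG,q_0\models^E\phi$ (this is where memoryless determinacy of parity games is used: a memoryless winning strategy of~$P_\exists$ reads off as a dependence map satisfying~\eqref{eq-E}); chaining the three implications $\models^E\phi \Rightarrow \models^T\phi \Rightarrow P_\exists \text{ wins } G_\phi \Rightarrow \models^E\phi$ then closes the loop.

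The core observation concerns the information available to~$P_\exists$ along a play of~$G_\phi$. A~node of~$G_\phi$ at which $P_\exists$ must pick the action of variable~$x_i$ codes a history~$\rho$ of~$\calG$ together with the actions already selected at~$\rho$ for $x_1,\dots,x_{i-1}$; but the \emph{whole play} leading to that node additionally records the actions selected by \emph{all} variables at every strict prefix of~$\rho$. These are exactly the quantities on which condition~\eqref{eq-T} permits $\theta(w)(x_i)(\rho)$ to depend. Hence, from a \ref{eq-T}-witness~$\theta$ I would define a (history-dependent) strategy for~$P_\exists$ that, at such a node, plays $\theta(w)(x_i)(\rho)$, where $w$ is any total extension of the universal choices visible on the current play; condition~\eqref{eq-T} makes this value independent of the chosen extension. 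One then checks that any play consistent with this strategy has $\calG$-projection (through the assignment in~$\beta$) equal to $\out(q_0,\theta(w))$ for the valuation~$w$ it realizes, hence satisfies~$\psi$ because~$\theta$ is a witness, so $P_\exists$ wins~$G_\phi$ against every (even history-dependent) strategy of~$P_\forall$.

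The delicate step is this middle one: one has to pin down precisely what a \emph{history}, as opposed to a \emph{state}, of~$G_\phi$ retains, and line it up with the quantifier blocks occurring in~\eqref{eq-T} (for realizing a \ref{eq-T}-witness as a $P_\exists$-strategy) and in~\eqref{eq-E} (for extracting a map from a memoryless winning strategy, which is the part inherited from Proposition~\ref{prop-negSL1G}). Two routine subtleties deserve care: existentially-quantified variables may "depend on" earlier existentially-quantified ones, which is harmless because it is folded into the recursive evaluation of the map; and the valuation~$w$ read off a play is only partial, but extending it arbitrarily off the play is legitimate precisely because~\eqref{eq-T} forbids any dependence on histories lying outside the current play. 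Everything else is inherited verbatim from the construction behind Proposition~\ref{prop-negSL1G}.
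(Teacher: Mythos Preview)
Your proposal is correct and follows essentially the same approach as the paper. The paper does not give a separate proof of this proposition: it simply states that ``the proof of Prop.~\ref{prop-negSL1G} actually shows'' the result, relying on the observation (made in the remark closing Section~\ref{sec-depend} and in the appendix proof of Prop.~\ref{prop-negSL1G}) that a history in the auxiliary game~$\calH$ records precisely the timeline information---all moves at strict prefixes plus the moves of earlier-quantified variables at the current history---so that arbitrary $P_\exists$-strategies in~$\calH$ correspond to timeline maps, while memoryless ones correspond to elementary maps; memoryless sufficiency in parity games then collapses the two. Your write-up makes this explicit by exhibiting the direction $\models^T\phi \Rightarrow P_\exists$ wins (building the history-dependent $P_\exists$-strategy from a \ref{eq-T}-witness), which is exactly the inverse of the paper's $\HG$ correspondence, and your handling of the two subtleties (implicit dependence on earlier existential choices, and arbitrary extension of~$w$ off the current play) is sound.
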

We~now prove
that this does not extend to~\SLCGf and~\SLDGf:

%

\begin{restatable}{proposition}{proptimelinesat}
\label{prop-timelinesat}
The relations $\models^E$ and $\models^T$ differ on~\SLCGf, 
as~well as on~\SLDGf.
\end{restatable}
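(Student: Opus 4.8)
The plan is to exhibit, for each of \SLCGf and \SLDGf, a game $\calG$ with initial state $q_0$ and a formula $\phi$ that is satisfied under the timeline semantics but not under the elementary one. Since $\models^E\phi\impl\models^T\phi$ always holds, it suffices to produce a $T$-witness and to show that no $E$-witness exists. The natural candidate for the conjunctive case is precisely the formula used in Example after the definition of timeline maps: the game of Fig.~\ref{jhvgjjgjh} together with $\phi=\forall y.\exists z.\forall x_A.\forall x_B.\ \bigl[(\assign{\dots\mapsto x_A;\dots\mapsto y;\dots\mapsto z}.\ \F p_1)\ou(\assign{\dots\mapsto x_B;\dots}.\ \F p_2)\bigr]$. (Note this formula is a \emph{disjunction} of goals, hence lies in \SLDGf; for \SLCGf one should either dualize via the syntactic negation $\neg\phi$, which is a conjunction of goals, or present a symmetric conjunctive variant directly.) The excerpt already records that $\calG,q_0\models^T\phi$ via the map $\theta(w)(z)(q_0 q_2)=p_1$ if $w(x_A)(q_0)=q_2$ and $=p_2$ if $w(x_A)(q_0)=q_1$, and that $\calG,q_0\not\models^E\phi$ (stated in Example and attributed to the proof of Prop.~\ref{PropSLCGSideoupasSideRemoveDepende}); so for \SLDGf the work is to write out these two verifications cleanly.

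First I would re-derive $\calG,q_0\not\models^E\phi$ in detail. Suppose $\theta$ is an $E$-witness. The value $\theta(w)(z)$ must be a strategy for player \DiamondFill{}, whose only nontrivial choice is $\theta(w)(z)(q_0 q_2)\in\{p_1,p_2\}$ (after $q_0 q_1$ the play is in player \CircleFill{}'s subtree and \DiamondFill{} plays no further meaningful action). Consider valuations $w$ where player \CircleFill{}'s strategy $w(y)$ sends $q_0$ to $q_1$ (so the relevant outcome stays in the $q_1$-branch when $y$ is assigned), and where $w(x_A)$ and $w(x_B)$ range over all four combinations of sending $q_0 q_1$ (resp. the corresponding histories) to $p_1$ or $p_2$. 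Along the history $\rho=q_0 q_2$, the only universally-quantified variable relevant by the elementary condition~\eqref{eq-E} is $y$ — and $x_A$ is quantified \emph{after} $z$, so $\theta(w)(z)(q_0 q_2)$ may not depend on $w(x_A)$ at all. Hence $\theta(\cdot)(z)(q_0 q_2)$ is a constant, say $p_1$. But then choose $w$ with $w(y)(q_0)=q_1$ and $w(x_A)(q_0 q_1)=p_2$: the first goal assigns \BoxFill{}$\mapsto x_A$, \CircleFill{}$\mapsto y$, giving outcome $q_0 q_1 p_2^\omega$, which fails $\F p_1$; and the second goal assigns \BoxFill{}$\mapsto x_B$, \CircleFill{}$\mapsto y$, \DiamondFill{}$\mapsto z$, again producing outcome in the $q_1$-branch — but \DiamondFill{}'s action is irrelevant there, so we must look instead at a valuation forcing the play into the $q_2$-branch. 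The cleaner argument: fix $w(y)(q_0)=q_2$ so that \emph{both} goals, once their assignments are in place, start by... here one must be careful because the two goals assign different players to control; the right reading is that goal~1's outcome uses $y$ for \CircleFill{} while goal~2's outcome also uses $y$ for \CircleFill{}, so with $w(y)(q_0)=q_1$ both outcomes enter the $q_1$-branch and $z$ never matters, making the formula's truth hinge only on $x_A,x_B$ at $q_0 q_1$, which the quantifier $\forall x_A$ can defeat. With $w(y)(q_0)=q_2$ both outcomes enter the $q_2$-branch, $z$ controls both, and one needs $\theta(w)(z)(q_0 q_2)$ to simultaneously handle all $x_A,x_B$ — but by~\eqref{eq-E} it cannot even see $x_A,x_B$ (later-quantified), nor see $y$ beyond $q_0 q_2$, so it is essentially a fixed choice $p_k$; picking $w$ with $w(x_A)(q_0 q_1)=p_2$ when $k=1$ (and symmetrically) kills goal~1 while goal~2 with outcome $q_0 q_2 p_1^\omega$ fails $\F p_2$. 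This gives the contradiction. I would package this carefully, double-checking the free-variable bookkeeping of which player each goal assigns.

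Second, I would verify the $T$-witness for \SLDGf: with timeline maps, $\theta(w)(z)(q_0 q_2)$ may depend on $w(x_A)(q_0)$ (and $w(x_B)(q_0)$), since $x_A,x_B$ range over \emph{all} of $\calV^\forall$ and $q_0$ is a strict prefix of $q_0 q_2$, so the second conjunct of~\eqref{eq-T} applies even though $x_A$ is quantified after $z$. Define $\theta(w)(z)(q_0 q_2)=p_1$ if $w(x_A)(q_0)=q_2$ and $=p_2$ otherwise; one then checks the disjunction holds for every $w$: if $w(y)(q_0)=q_1$, whichever goal uses the appropriate atom along the $q_1$-branch, wait — again the $q_1$-branch case needs $x_A$ or $x_B$ to cooperate, which they need not. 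The genuinely correct reading (and the one the Example intends) must be re-examined: the formula is satisfied because for each $w$ one of the two goals reaches its target, using $z$'s ability to match $y$'s choice as mediated through the timeline-visible actions of $x_A$. I would spell out the case split on $w(y)(q_0)$ and on $w(x_A)(q_0)$, $w(x_B)(q_0)$, confirm~\eqref{eq-T} is respected (the map is well-defined as a function of only the permitted data), and conclude $\calG,q_0\models^T\phi$. Finally, for \SLCGf, I would take $\neg\phi$, which by the syntactic-negation definition is $\exists y.\forall z.\exists x_A.\exists x_B.\ \bigl[(\assign{\dots x_A\dots}.\ \F p_1)\et(\assign{\dots x_B\dots}.\ \F p_2)\bigr]$ — a conjunction of goals, so in \SLCGf — and argue that $\calG,q_0\models^T\neg\phi$ but $\calG,q_0\not\models^E\neg\phi$; this requires re-running the argument on the dual (or, more simply, invoking that $\models^E$ and $\models^T$ agree on a formula iff... no such shortcut exists, so a direct dual argument is needed). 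Alternatively I would just present a fresh symmetric conjunctive example. The main obstacle I anticipate is the fiddly free-variable/assignment bookkeeping — getting exactly right which strategy each goal binds to each player, and ensuring that the "fixed constant" argument against $E$-witnesses is airtight across all four $(x_A,x_B)$-configurations and both choices of $w(y)(q_0)$ — rather than any deep conceptual difficulty, since the separating mechanism (timeline maps see later-quantified universals on strict prefixes, elementary maps never see later-quantified universals at all) is already isolated in Fig.~\ref{fig-semET} and the surrounding discussion.
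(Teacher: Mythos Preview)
Your \SLDGf strategy---reusing the game and formula of Fig.~\ref{jhvgjjgjh}---is sound in principle (the paper's own Example after the definition of timeline maps confirms this example separates $\models^E$ from $\models^T$), but your attempted verification is tangled because you have the player--state ownership wrong. In that game, \BoxFill[\FigColB]{} owns $q_0$, \CircleFill[\FigColA]{} owns $q_1$, and \DiamondFill[\FigColC]{} owns $q_2$. Thus $w(y)(q_0)$ is irrelevant (the \CircleFill{}-strategy $y$ does not act at $q_0$); what matters at $q_0$ is $w(x_A)(q_0)$ and $w(x_B)(q_0)$, which steer the two goals to $q_1$ or $q_2$ respectively. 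The elementary-failure argument (given in the appendix proof of Prop.~\ref{prop-CEdiff}) hinges on the fact that $\theta(w)(z)(q_0\cdot q_2)$ cannot depend on $w(y)(q_0\cdot q_1)$, since $q_0\cdot q_1$ is not a prefix of $q_0\cdot q_2$; your write-up instead tries to argue via ``$z$ cannot see $x_A,x_B$ because they are later-quantified'', which is true but not the operative obstruction here. Once you straighten out the ownership, the two verifications are short.

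Your \SLCGf strategy via the syntactic negation $\neg\phi$ has a genuine gap and cannot be repaired. You need a formula $\psi\in\SLCGf$ with $\calG,q_0\models^T\psi$ and $\calG,q_0\not\models^E\psi$. But since $\calG,q_0\models^T\phi$, Prop.~\ref{PropNegationsAndCo::DeuxTroisAvril::ccr} gives $\calG,q_0\not\models^T\neg\phi$; hence $\neg\phi$ fails the first requirement outright. (Your formula for $\neg\phi$ is also missing the negations on the \LTL objectives, but that is secondary.) There is no shortcut here: the semantic negation does not coincide with the syntactic one for either $\models^E$ or $\models^T$ on \SLBGf, so dualising a separating example need not produce another separating example. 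The paper therefore does not reuse Fig.~\ref{jhvgjjgjh} at all for this proposition; it builds two fresh, smaller examples (Figs.~\ref{fig-ETCG} and~\ref{fig-ETDG}). For \SLCGf it uses the two-player game on $\{q_0,a,b,p_1,p_2\}$ and the formula $\phi_C=\exists y.\forall x_A.\exists x_B.\ [(\assign{\CircleFill[\FigColA]{}\mapsto y;\BoxFill[\FigColB]{}\mapsto x_A}.\F p_1)\wedge(\assign{\CircleFill[\FigColA]{}\mapsto y;\BoxFill[\FigColB]{}\mapsto x_B}.\F p_2)]$, where the timeline trick is that $y$'s choice at $q_0\cdot a$ and $q_0\cdot b$ may read $w(x_A)(q_0)$ even though $y$ is quantified \emph{before} $x_A$. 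You should either reproduce that example or construct a comparably direct one; your fallback ``present a fresh symmetric conjunctive example'' is the right instinct.
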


\begin{proof}
  \begin{figure}[t]
    \begin{minipage}{.48\linewidth}
    \centering
    \begin{tikzpicture}[every node/.style={draw}]
      \node [carre5, draw=\FigColB, fill=\FigColB!40!white] (root) at (0,0) {$q_0$};
      \node [rond6, draw=\FigColA, fill=\FigColA!40!white] (sqr1) at (1,-.5) {$a$};
      \node [rond6, draw=\FigColA, fill=\FigColA!40!white] (sqr2) at (1,.5) {$b$};
      \node [draw=none] (A) at (2.2,-.5) {$\ p_1$};
      \node [draw=none] (B) at (2.2,.5) {$\ p_2$};
      \draw [-latex'] (root) to (sqr1); \draw [-latex'] (root) to (sqr2);
      \draw [-latex'] (sqr1) to (A); \draw [-latex'] (sqr2) to (B);
      \draw [-latex'] (sqr1) to (B); \draw [-latex'] (sqr2) to (A);
    \end{tikzpicture}
    \caption{$\models^E$ and $\models^T$ differ on~\SLCGf}
    \label{fig-ETCG}
    \end{minipage}\hfill
    \begin{minipage}{.48\linewidth}
      \centering
    \begin{tikzpicture}[every node/.style={draw}]
      \node [carre5, draw=\FigColB, fill=\FigColB!40!white] (root) at (0,0) {$q_0$};
      \node [rond6, draw=\FigColA, fill=\FigColA!40!white] (sqr) at (1,-.5) {$a$};
      \node [diamond, inner sep=0pt, minimum size=7mm, draw=\FigColC, fill=\FigColC!40!white] (diam) at (1,.5) {$b$};
      \node [draw=none] (A) at (2.2,-.5) {$\ p_1$};
      \node [draw=none] (B) at (2.2,.5) {$\ p_2$};
      \draw [-latex'] (root) to (sqr); \draw [-latex'] (root) to (diam);
      \draw [-latex'] (sqr) to (A); \draw [-latex'] (sqr) to (B);
      \draw [-latex'] (diam) to (B); \draw [-latex'] (diam) to (A);
    \end{tikzpicture}
    \caption{$\models^E$ and $\models^T$ differ on~\SLDGf}
    \label{fig-ETDG}
    \end{minipage}
  \end{figure}
  For \SLCGf, we~consider the game structure of Fig.~\ref{fig-ETCG},
  and formula 
  \begin{alignat*}1
    \phi_C &=\exists y.\ \forall x_A.\ \exists x_B .\
    \bigwedge
    \begin{cases}
      \assign{\CircleFill[\FigColA]{}\mapsto y;\ \BoxFill[\FigColB]{}\mapsto x_A}.\  \F p_1 \
      \\
      \assign{\CircleFill[\FigColA]{}\mapsto y;\ \BoxFill[\FigColB]{}\mapsto x_B}.\  \F p_2
    \end{cases}
  \end{alignat*}
  We~first notice that $\calG,q_0\not\models^E\phi_C$: indeed, in
  order to satisfy the first goal under any choice of~$x_A$, the
  strategy for~$y$ has to point to~$p_1$ from both~$a$ and~$b$. But
  then no choice of~$x_B$ will make the second goal true.

  On the other hand, considering the timeline semantics, strategy~$y$
  after~$q_0\cdot a$ and~$q_0\cdot b$ may depend on the choice
  of~$x_A$ in~$q_0$. When $w(x_A)(q_0)=a$, we~let
  $\theta(w)(y)(q_0\cdot a)= p_1$ and $\theta(w)(y)(q_0\cdot b)= p_2$
  and $\theta(w)(x_B)(q_0)= b$, which makes both goals hold
  true. Conversely, if $w(x_A)(q_0)=b$, then we~let
  $\theta(w)(y)(q_0\cdot b)= p_1$ and $\theta(w)(y)(q_0\cdot a)= p_2$
  and $\theta(w)(x_B)(q_0)= a$.

  For \SLDGf, we~consider the game of Fig.~\ref{fig-ETDG}, and easily
  prove that formula~$\phi_D$ below has a \ref{eq-T}-witness but no
  \ref{eq-E}-witness:
  \begin{xalignat*}1
      \phi_D &=\exists y.\ \forall x_A.\ \forall x_B.\ \forall z.\
    \bigvee
    \begin{cases} 
      \assign{\CircleFill[\FigColA]{}\mapsto  y ; \BoxFill[\FigColB]{}\mapsto  x_A ; \DiamondFill[\FigColC]{} \mapsto z}.\ \F p_1 
      \\
      \assign{\CircleFill[\FigColA]{}\mapsto  y ; \BoxFill[\FigColB]{}\mapsto  x_B ; \DiamondFill[\FigColC]{} \mapsto z}.\ \F p_2  
    \end{cases}
    \tag*{\qedsymbol}
  \end{xalignat*}
  \let\qed\relax
\end{proof}

\subparagraph{The syntactic vs.\ semantic negations.}
While both semantics differ, we~now prove that the situation
w.r.t. the syntactic vs.~semantic negations is similar.  First,
following Prop.~\ref{prop-ET-SL1G} and~\ref{prop-negSL1G}, the
two negations coincide on \SLOneGf under the
timeline semantics. Moreover:





\begin{restatable}{proposition}{ThmNegationsAndCoDeuxTroisAvrilOne}
  \label{PropNegationsAndCo::DeuxTroisAvril::ccr}
  \label{PropClotureNegationChapitreIntroDeDependencyProblem}
  For any formula $\phi$ in \SLBGf, for any game \calG and any
  state~$q_0$,
  we~have
  $\calG,q_0 \models^{T} \phi \impl \calG,q_0 \not\models^{T} \neg \phi$.
\end{restatable}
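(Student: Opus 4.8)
The statement to prove is: if $\calG,q_0 \models^T \phi$, then $\calG,q_0 \not\models^T \neg\phi$. Equivalently, no formula $\phi$ and game $\calG$ admit both a \ref{eq-T}-witness for $\phi$ and a \ref{eq-T}-witness for $\neg\phi$. I~will argue by contradiction: suppose $\theta$ is a \ref{eq-T}-witness for $\phi=(Q_i x_i)_i.\ \xi(\beta_j.\ \psi_j)_j$ and $\theta'$ is a \ref{eq-T}-witness for $\neg\phi=(\overline Q_i x_i)_i.\ (\neg\xi)(\beta_j.\ \psi_j)_j$. The goal is to exhibit a single valuation $\chi$ of all variables in~$\calV$ on which both $\theta$ and $\theta'$ "agree" in the sense that $\chi$ is simultaneously of the form $\theta(w)$ for the projection $w$ of~$\chi$ onto~$\calV^\forall$, and of the form $\theta'(w')$ for the projection $w'$ of~$\chi$ onto the universal variables of~$\neg\phi$ (which are exactly the existential variables of~$\phi$). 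On such a~$\chi$, the witness property of~$\theta$ forces $\calG,q_0\models_\chi \xi(\beta_j.\ \psi_j)_j$, while the witness property of~$\theta'$ forces $\calG,q_0\models_\chi (\neg\xi)(\beta_j.\ \psi_j)_j$; since the LTL semantics of a goal~$\beta_j.\ \psi_j$ under a fixed full valuation~$\chi$ is a genuine boolean value, and $\xi$ and $\neg\xi$ are boolean duals, these two statements contradict each other. So the heart of the proof is building this common fixed point.

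\textbf{Constructing the common valuation.} The natural way to get~$\chi$ is to interleave the two maps along the order of quantification and build the strategies history-by-history. Write the quantifier prefix as $Q_1 x_1 \cdots Q_l x_l$; for~$\phi$, the~$x_i$ with $Q_i=\forall$ are universal and those with $Q_i=\exists$ are existential, and for~$\neg\phi$ the roles are swapped. I~will define, by induction on the length of the history~$\rho$, a full valuation~$\chi$ such that: letting $w=\chi{\restriction}_{\calV^\forall}$ (the universal variables of~$\phi$), we have $\chi(x_i)(\rho)=\theta(w)(x_i)(\rho)$ for every~$i$; and letting $w'=\chi{\restriction}_{\calV\setminus\calV^\forall}$ (the universal variables of~$\neg\phi$), we have $\chi(x_i)(\rho)=\theta'(w')(x_i)(\rho)$ for every~$i$. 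The point of the timeline condition \eqref{eq-T} is exactly that the action $\theta(w)(x_i)(\rho)$ of an existential variable of~$\phi$ at~$\rho$ depends only on $w$~restricted to strict prefixes of~$\rho$ (for all universal variables) and on $w$~at~$\rho$ itself for the universal variables quantified before~$x_i$; symmetrically for~$\theta'$. Hence at history~$\rho$, knowing~$\chi$ on all strict prefixes of~$\rho$ already determines $\theta(w)(x_i)(\rho)$ and $\theta'(w')(x_i)(\rho)$ for the \emph{existential} variables of each side using only the already-constructed data, and then one needs to reconcile, at~$\rho$ itself, the values on the variables that are universal for~$\phi$ (hence existential for~$\neg\phi$, hence computed by~$\theta'$) and vice versa. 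Because at~$\rho$ the $\phi$-existential variables are computed by~$\theta$ and the $\phi$-universal ones by~$\theta'$ — a clean split with no circular dependence once the strict past is fixed — one can set $\chi(x_i)(\rho)$ accordingly for each~$i$ and check both induction hypotheses are maintained. I~should be a little careful about the within-$\rho$ ordering coming from the $k<i$ clause in \eqref{eq-T}, but since that clause is only a \emph{weakening} of the hypothesis (more coincidence required $\Rightarrow$ easier to satisfy), filling in the values in quantifier order $x_1,\dots,x_l$ at each fixed~$\rho$ works: when we reach~$x_i$, all earlier~$x_j$ at~$\rho$ are already fixed, and whichever of~$\theta,\theta'$ "owns"~$x_i$ has all the data \eqref{eq-T} allows it to look at.

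\textbf{Finishing.} Once~$\chi$ is built, set $w=\chi{\restriction}_{\calV^\forall}$ and $w'=\chi{\restriction}_{\calV\setminus\calV^\forall}$. By construction $\theta(w)=\chi=\theta'(w')$ (both maps also, by the $\phi$-map condition, restrict correctly to their universal parts, so these are legitimate instances). Since $\theta$ is a \ref{eq-T}-witness of~$\phi$, applying its defining property to this particular~$w$ gives $\calG,q_0\models_{\theta(w)} \xi(\beta_j.\ \psi_j)_j$, i.e.\ $\calG,q_0\models_\chi \xi(\beta_j.\ \psi_j)_j$. Since $\theta'$ is a \ref{eq-T}-witness of~$\neg\phi$, applying its property to~$w'$ gives $\calG,q_0\models_{\theta'(w')} (\neg\xi)(\beta_j.\ \psi_j)_j$, i.e.\ $\calG,q_0\models_\chi (\neg\xi)(\beta_j.\ \psi_j)_j$. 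Now under the fixed full valuation~$\chi$, each goal $\beta_j.\ \psi_j$ has a determined truth value $b_j\in\{\top,\bot\}$ (it unfolds to an LTL formula over the single outcome $\out(q_0,\chi[A\mapsto\chi(\sigma_j(A))])$), and the semantics of a boolean combination of goals is just the boolean function~$\xi$ applied to $(b_j)_j$; so $\calG,q_0\models_\chi \xi(\dots)$ iff $\xi(b_1,\dots,b_n)=\top$ and $\calG,q_0\models_\chi (\neg\xi)(\dots)$ iff $\xi(b_1,\dots,b_n)=\bot$. These cannot both hold — contradiction — so no \ref{eq-T}-witness for~$\neg\phi$ can exist when one exists for~$\phi$.

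\textbf{Main obstacle.} The only delicate point is the fixed-point construction of~$\chi$: one must verify that the dependence of the within-history actions is genuinely acyclic once the strict past is frozen — i.e.\ that "the $\phi$-existential coordinates at~$\rho$ are functions of the frozen past only (via~\eqref{eq-T} for~$\theta$), and symmetrically the $\neg\phi$-existential coordinates at~$\rho$ are functions of the frozen past only (via~\eqref{eq-T} for~$\theta'$)", so that there is no mutual recursion \emph{at the same history}. This is exactly where the \emph{strict}-prefix clause of the timeline condition is essential (an elementary-style condition referring to~$\rho$ itself for \emph{all} universal variables would create a circular constraint, and indeed Prop.~\ref{prop-neg} shows the analogous statement fails for~$\models^E$). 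Making this acyclicity argument precise — stating the induction hypothesis on~$\chi$ at exactly the right granularity (history by history, and within a history coordinate by coordinate in quantifier order) — is the crux; the rest is bookkeeping and the boolean-duality observation about goals.
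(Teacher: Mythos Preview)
Your proof is correct and follows essentially the same approach as the paper: assume \ref{eq-T}-witnesses $\theta$ and $\theta'$ for $\phi$ and $\neg\phi$, construct a common fixed-point valuation~$\chi$ by a double induction (first on histories, then on quantifier index within each history, exploiting exactly the acyclicity that~\eqref{eq-T} guarantees), and derive a contradiction from $\xi$ and~$\neg\xi$ both holding under~$\chi$. One minor inaccuracy in your closing parenthetical: Prop.~\ref{prop-neg} does \emph{not} show that the analogous implication fails for~$\models^E$---in fact $\calG,q_0\models^E\phi \Rightarrow \calG,q_0\not\models^E\neg\phi$ holds as well (via $\models^E\Rightarrow\models^C$, as the paper notes); what Prop.~\ref{prop-neg} exhibits is the separate phenomenon that both $\phi$ and~$\neg\phi$ may simultaneously \emph{fail} under~$\models^E$.
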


\begin{proof}[Sketch of proof]
  Write $\phi = (Q_i x_i)_{1\leq i \le l} \xi(\beta_j
  \varphi_j)_{j \le n}$.
  For a contradiction, assume that there
  exist two maps~$\theta$ and~$\bar\theta$ witnessing $\calG,q_0
  \models^{T} \phi$ and $\calG,q_0 \models^{T} \neg
  \phi$, respectively.  Then for any strategy valuations~$w$
  and~$\bar w$ for $\calV^{\forall}$ and $\calV^{\exists}$, we~have
  that $\calG,q_0\models_{\theta(w)} \xi(\beta_j\phi_j)_j$ and
  $\calG,q_0\models_{\bar \theta(\bar w)}
  \neg\xi(\beta_j\phi_j)_j$. We~can then inductively (on~histories and
  on the sequence of quantified variables) build a strategy
  valuation~$\chi$ on~$\calV$ such that
  $\theta(\chi_{|\calV^{\forall}}) =
  \bar\theta(\chi_{|\calV^{\exists}})=\chi$. Then under valuation~$\chi$,
  both $\xi(\beta_j\phi_j)_j$ and $\neg\xi(\beta_j\phi_j)_j$ hold
  in~$q_0$, which is impossible.
\end{proof}

\begin{restatable}{proposition}{ThmNegationsAndCoDeuxTroisAvrilDeux}
  \label{PropNegationsAndCo::DeuxTroisAvril::ccr:::deux}
  There exists
    a formula $\phi \in \SLBGf$,
    a (turn-based) game~$\calG$ and a state~$q_0$
    such that
    $\calG,q_0 \not\models^{T}\phi$
    and
    $\calG,q_0 \not\models^{T}\neg\phi$.
    
\end{restatable}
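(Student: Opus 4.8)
The plan is to reuse the one-player (hence turn-based) game~$\calG$ and the formula~$\phi$ of Figure~\ref{fig-negnotneg}, and to re-run the analysis behind Proposition~\ref{prop-neg} for the timeline semantics. Recall $\phi=\forall x.\ \exists y.\ \xi$ with $\xi = g_1\wedge(g_2\iff g_3)$, where $g_1$ requires the play under~$y$ to visit~$B$, and $g_2$ (resp.\ $g_3$) requires the play under~$x$ (resp.\ under~$y$) to reach a $p_1$-labelled state. The key structural observation is that~$\phi$ has a \emph{single} universally-quantified variable,~$x$, quantified first; hence on~$\phi$ the timeline condition~\eqref{eq-T} coincides with the elementary one~\eqref{eq-E} (there is no later universal variable, and the values of~$x$ are already visible along prefixes). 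In~$\neg\phi$, on the other hand, the polarities are swapped: $x$~becomes existential, and~$y$ becomes the unique universal variable, quantified \emph{second}; so a \eqref{eq-T}-witness of~$\neg\phi$ gets no information on the value of~$y$ at the current history.

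The first step is to establish $\calG,q_0\not\models^T\phi$, which is verbatim the argument for Proposition~\ref{prop-neg}: $g_1$ forces $\theta(w)(y)(q_0)=B$ for every~$w$, and $g_2\iff g_3$ then forces $\theta(w)(y)(q_0\cdot B)=w(x)(q_0\cdot A)$ for every~$w$ with $w(x)(q_0)=A$. Since $q_0\cdot A\notin\Prefx(q_0\cdot B)\cup\{q_0\cdot B\}$ and~$x$ is the only universal variable, this dependence is forbidden by~\eqref{eq-T}: taking $w_1,w_2$ that agree everywhere except $w_1(x)(q_0\cdot A)=p_1\neq p_2=w_2(x)(q_0\cdot A)$, condition~\eqref{eq-T} forces $\theta(w_1)(y)(q_0\cdot B)=\theta(w_2)(y)(q_0\cdot B)$, contradicting the requirement above. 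Hence no \eqref{eq-T}-witness of~$\phi$ exists.

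The second step---the one that does not follow from Proposition~\ref{prop-neg}, since we cannot route through the classic semantics ($\models^T$ does not imply $\models^C$)---is to prove $\calG,q_0\not\models^T\neg\phi$ directly. Suppose $\theta'$ were a \eqref{eq-T}-witness of $\neg\phi=\exists x.\ \forall y.\ \neg\xi$, with $\neg\xi = \neg g_1\vee(g_2\wedge\neg g_3)\vee(\neg g_2\wedge g_3)$. Restrict attention to adversary valuations~$w'$ for~$y$ with $w'(y)(q_0)=B$; then $\neg g_1$ is false, so $g_2$ and~$g_3$ must get different truth values. As~$y$ is the unique universal variable and is quantified after~$x$, condition~\eqref{eq-T} lets $\theta'(w')(x)(\rho)$ depend only on the values of $w'(y)$ on the \emph{strict} prefixes of~$\rho$; thus $\theta'(w')(x)(q_0)$ is a constant, and $\theta'(w')(x)(q_0\cdot A)$ and $\theta'(w')(x)(q_0\cdot B)$ depend only on $w'(y)(q_0)$, which is fixed to~$B$, so they are also constant over the valuations considered. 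Consequently the truth value of~$g_2$ is a fixed bit~$b$ independent of~$w'$, while $g_3$ reduces to ``$w'(y)(q_0\cdot B)=p_1$'', which the adversary sets freely. Choosing $w'(y)(q_0\cdot B)$ so that $g_3$ gets truth value~$b$ as well makes $\neg\xi$ false, a contradiction. Hence $\calG,q_0\not\models^T\neg\phi$, which together with the first step proves the proposition.

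The main obstacle is this second step. In the elementary setting, $\not\models^E\neg\phi$ came for free from $\models^C\phi$ and the inclusion ${\models^E}\subseteq{\models^C}$; here no such shortcut is available, so one has to argue combinatorially about the dependences a timeline witness of~$\neg\phi$ may use. The point to get right is that swapping quantifier polarity turns~$\phi$'s harmless, first-quantified universal variable~$x$ into an existential one, and simultaneously makes~$y$ a universal variable whose value \emph{at the current history} is invisible to that existential---precisely the information a witness of~$\neg\phi$ would require.
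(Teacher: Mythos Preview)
Your proof is correct, and it takes a genuinely different route from the paper's.

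The paper constructs a fresh two-player game (with four leaf states $p_1,\dots,p_4$) and a four-variable formula $\forall x_1.\exists y_1.\exists y_2.\exists x_2.\ \xi'$, then argues separately about~$\phi$ and~$\neg\phi$ on that new example. You instead recycle the one-player game and the two-variable formula of Fig.~\ref{fig-negnotneg}, observing that for $\phi=\forall x.\exists y.\ \xi$ the timeline constraint~\eqref{eq-T} collapses to the elementary one~\eqref{eq-E} (there is no universal variable after~$y$), so the argument of Proposition~\ref{prop-neg} already yields $\calG,q_0\not\models^{T}\phi$. The new content in your proof is the direct analysis of~$\neg\phi$: swapping polarities makes~$y$ the sole universal variable, quantified \emph{after}~$x$, so a \eqref{eq-T}-witness for~$\neg\phi$ can see~$w'(y)$ only on strict prefixes; fixing $w'(y)(q_0)=B$ pins down the $x$-outcome (hence the truth of~$g_2$) while leaving~$g_3$ free for the adversary to match it. This is clean and exploits the quantifier structure nicely.

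What your approach buys is economy: you reuse an existing example and a much smaller formula, and the crux (why the timeline information is useless on the relevant side) is transparent. What the paper's example buys is a witness with a richer quantifier block, which may be more illustrative of the general phenomenon that timeline dependences, while strictly more permissive than elementary ones, still leave a gap with the classic semantics in both directions. Both proofs are valid; yours is the shorter path to the bare statement.
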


  \section{The fragment \protect\SLEGf}
  \label{se::SLEG}\label{sec-sleg}
In this section, we focus on the timeline semantics~$\models^{T}$.
%
We~exhibit a fragment \SLEGf of \SLBGf, containing \SLCGf and~\SLDGf,
for which the syntactic and semantic negations coincide, and for which
we prove model-checking is in \EXPTIME[2]:
\begin{restatable}{theorem}{MainTheoremeDuChapSLEGPartII}
  \label{ThmModelCheck}
  For any~$\phi\in\SLEGf$
  and any state~$q_0$, 
  it holds:
  \(
  \calG,q_0 \models^{T} \phi \iff \calG,q_0 \not\models^{T} \neg\phi\).
%
  Moreover, model checking \SLEGf for the timeline semantics is
  \EXPTIME[2]-complete. 
\end{restatable}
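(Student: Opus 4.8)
The theorem has two halves, and the strategy is to reduce model checking of $\SLEGf$ under the timeline semantics to solving a family of two-player turn-based parity games, exactly as in the proof of Proposition~\ref{prop-negSL1G}, but adapted to handle a \emph{boolean combination} of goals rather than a single one. Concretely, I would first define the fragment $\SLEGf$ precisely (the excerpt only promises it contains $\SLCGf$ and $\SLDGf$; presumably it restricts the boolean combination $\xi$ so that, after a suitable normal form, the ``existential'' choices for the witness map can be made \emph{uniformly} across the goals that must be satisfied). The crucial structural property to isolate is: for a formula in $\SLEGf$, satisfaction of $\xi(\beta_j\varphi_j)_j$ can be tracked by a \emph{single} deterministic parity automaton $\calA_\xi$ running on the outcome and reporting which goals hold, in such a way that the acceptance condition is a parity (or more generally a Rabin/Streett that collapses to parity on the relevant side) condition. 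This is where $\SLEGf$'s restriction is used: for general $\SLBGf$ one gets a Rabin condition for Player $P_\exists$ and a Streett condition for Player $P_\forall$, and neither is memoryless for both players, which is precisely why Proposition~\ref{PropNegationsAndCo::DeuxTroisAvril::ccr:::deux} fails — $\SLEGf$ must be exactly the fragment where this asymmetry disappears.

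**Construction of the game.** Given $\calG$, $q_0$ and $\phi = (Q_i x_i)_{i\le l}.\ \xi(\beta_j\varphi_j)_j \in \SLEGf$, I would build a turn-based parity game $\calH$ whose positions encode: the current state of $\calG$, the current state of $\calA_\xi$, and a pointer in the quantifier block. From a position over a state $q$, the players alternately pick actions: in the $\ell$-th micro-round Player $P_\exists$ picks the action of variable $x_\ell$ if $Q_\ell=\exists$, and Player $P_\forall$ picks it if $Q_\ell=\forall$, so that after $l$ micro-rounds a full move vector has been chosen for each player-assignment occurring in the goals; then $\calG$ and $\calA_\xi$ advance, and the priority of the macro-move is that of $\calA_\xi$'s transition. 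The key point making this sound for the \emph{timeline} semantics (rather than only the elementary one) is the order of moves within a micro-round together with the fact that a strategy in $\calH$ sees the whole history of the play, i.e.\ of \emph{past} macro-rounds: $P_\exists$'s choice for an existential $x_i$ at history $\rho$ may depend on the universal choices made \emph{earlier in the same micro-round} (those $x_j$, $j<i$, along $\rho$ — the elementary part) and on \emph{all} universal choices in \emph{strictly earlier} macro-rounds (the prefixes of $\rho$ — the timeline part), which is exactly condition~\eqref{eq-T}. Conversely a winning $P_\exists$-strategy in $\calH$ can be read back as a $\ref{eq-T}$-witness for $\phi$, and a winning $P_\forall$-strategy as a $\ref{eq-T}$-witness for $\neg\phi$ (the syntactic negation swaps all quantifiers and negates $\xi$, which dualizes the roles of the two players and the parity condition). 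I would prove these two correspondences as lemmas: $\calG,q_0\models^T\phi \iff P_\exists$ wins $\calH$, and $\calG,q_0\models^T\neg\phi \iff P_\forall$ wins $\calH$; here is the place where memoryless determinacy of parity games~\cite{EJ91,Mos91} enters — memoryless strategies suffice, so the finite memory of $\calH$'s positions is enough to realize a genuine dependence map, and the information $P_\exists$ gets about later-quantified universal variables over strict prefixes is harmless precisely because it is \emph{allowed} under $\models^T$ (contrast with the remark after Proposition~\ref{prop-CEdiff}).

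**Consequences: negation and complexity.** Once both correspondences hold, the equivalence $\calG,q_0\models^T\phi \iff \calG,q_0\not\models^T\neg\phi$ is immediate from determinacy of $\calH$: exactly one of the two players wins. For the complexity upper bound, I would note that $\calA_\xi$ is a deterministic parity automaton of size doubly-exponential in $|\phi|$ with a number of priorities exponential in $|\phi|$ (it is essentially a product, over the goals, of deterministic parity automata for the $\LTL$ formulas $\varphi_j$, followed by a bookkeeping layer evaluating $\xi$ and converting to parity using the $\SLEGf$ restriction); the arena $\calH$ has size $|\calG|\cdot|\calA_\xi|\cdot 2^{O(l)}$, hence doubly-exponential in $|\phi|$ and polynomial in $|\calG|$, with exponentially many priorities, so it is solvable in time $\EXPTIME[2]$ (polynomial, indeed quasipolynomial, in the arena size for a fixed number of priorities, and we pay only one more exponential for the priorities). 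The matching $\EXPTIME[2]$ lower bound I would inherit from the known hardness for $\SLOneGf$ (which sits inside $\SLEGf$) already recorded after Proposition~\ref{prop-negSL1G}, or reprove by a standard reduction from the word problem for exponential-space alternating Turing machines. The \textbf{main obstacle} I anticipate is the definition of $\SLEGf$ and the verification that for formulas in it the composite automaton $\calA_\xi$ can be taken to have a \emph{parity} acceptance condition that is simultaneously ``good'' for both players after dualization — i.e.\ pinning down the exact syntactic constraint on nesting of $\wedge$ and $\vee$ over goals that makes the Rabin/Streett conditions collapse to parity — and then checking that the back-and-forth translation between winning strategies and timeline dependence maps respects the subtle ordering in~\eqref{eq-T}; the automata-theoretic size bookkeeping and the lower bound are comparatively routine.
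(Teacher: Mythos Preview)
Your reduction has a genuine gap, and it stems from a misconception about what makes
$\SLBGf$ harder than $\SLOneGf$. In~$\SLOneGf$ there is a single assignment~$\beta$, so a
valuation~$\chi$ induces one outcome in~$\calG$, and the construction of
Proposition~\ref{prop-negSL1G} works because a play of the parity game~$\calH$
corresponds to a single play of~$\calG$. With several goals $\beta_1,\ldots,\beta_n$,
the assignments differ, so under a fixed~$\chi$ the outcomes
$\out(q_0,\chi\circ\beta_1),\ldots,\out(q_0,\chi\circ\beta_n)$ are in general
\emph{distinct} infinite plays: they share a common prefix and then branch. There is
therefore no single ``outcome'' on which your automaton~$\calA_\xi$ could run, and your
sentence ``then $\calG$ and~$\calA_\xi$ advance'' has no well-defined successor state
in~$\calG$. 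Tracking a product of states $(q^1,\ldots,q^n)$ does not repair this either,
because a strategy for a variable~$x_i$ is a function of histories in~$\calG$: if two
goals are currently at the same history their $x_i$-moves must coincide, while goals at
different histories may receive different $x_i$-moves, so the choices at a product
position are not independent and encoding this consistency blows up the game in a way
your proposal does not control.

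Relatedly, $\SLEGf$~is not defined by a syntactic restriction on the nesting of~$\wedge$
and~$\vee$, nor by a Rabin/Streett condition collapsing to parity. It is defined via a
combinatorial property (\emph{semi-stability}) of the set~$F^n\subseteq\{0,1\}^n$
encoding the boolean combination~$\xi$. The~role of semi-stability is precisely to
handle the branching phenomenon above: it~guarantees (Lemma~\ref{lemmaOrderingFsh}) that
for every~$s\in\{0,1\}^n$ there is a \emph{total} quasi-order~$\preceq_s$ on~$\{0,1\}^n$
measuring how good a partial realisation of the goals in~$s$~is. The~paper's proof
exploits this by computing, inductively on~$|s|$, an optimal vector~$b_{q,d,s}$ of goals
achievable when the currently-active goals are those in~$s$; this~requires
distinguishing the case where the active goals keep following a common infinite outcome
(operator~$\Gamma^{\stick}$) from the case where they split along an extended partition
(operator~$\Gamma^{\sep}$). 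Each such subproblem \emph{is} encoded as a two-player
parity game (this is the part your proposal gets right), determinacy yields witness and
counter-witness timeline maps for each~$(q,d,s)$, and these local maps are then compiled
into global maps~$\theta$ and~$\bar\theta$ along the tree of branchings. The
\EXPTIME[2] upper bound comes from solving this doubly-exponential family of parity
games; the lower bound is inherited from~$\SLOneGf$ as you say. In~short, your plan
misses the central difficulty (branching of outcomes) and the central tool
(semi-stability and the induced quasi-orders), and the single-game reduction you
describe cannot be made to work for formulas with several distinct assignments.
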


\subsection{Semi-stable sets.}
For~$n\in\bbN$, we~let $\ZO^n$ be the
set of mappings from~$[1,n]$ to~$\{0,1\}$.  We~write $\bfzero^n$
(or~$\bfzero$ if the size~$n$ is clear) for the function that maps all
integers in~$[1,n]$ to~$0$, and $\bfone^n$ (or~$\bfone$) for the
function that maps $[1,n]$ to~$1$.  The \emph{size} of $f\in\ZO^n$ is
defined as $\size f=\sum_{1\leq i\leq n} f(i)$.  For~two elements~$f$
and~$g$ of~$\ZO^n$, we~write $f\leq g$ whenever $f(i)=1$ implies
$g(i)=1$ for all~$i\in[1,n]$. Given $B^n\subseteq \ZO^n$, we~write
$\up{B^n}=\{g\in \ZO^n \mid \exists f\in B^n,\ f\leq g\}$.  A~set
$F^n\subseteq \ZO^n$ is \emph{upward-closed} if $F^n=\up{F^n}$.
Finally, for~$f,g\in\{0,1\}^n$, we~define:
\begin{alignat*}3
  \bar f&\colon i \mapsto 1-f(i)&
  \qquad
  f\inter g&\colon i\mapsto \min\{f(i),g(i)\} 
  \qquad&
  f\union g&\colon i\mapsto \max\{f(i),g(i)\}.
\end{alignat*}

We~then introduce the notion of \semist sets, on which the definition
of~\SLEGf relies:
  a~set $F^n\subseteq \{0,1\}^n$ is \emph{\semist} if 
  for any~$f$ and~$g$ in~$F^n$, it~holds that
  \[
  \forall s\in\{0,1\}^n. \qquad 
  (f\inter s) \union (g\inter \bar s) \in F^n \text{ or } 
  (g\inter s) \union (f\inter \bar s) \in F^n.
  \]

\begin{wrapfigure}[11]{r}{7cm}
\centering
\begin{tikzpicture}
\tikzstyle{color0}=[rouge];
\tikzstyle{color1}=[vert];
\path[use as bounding box] (-.5,.2) -- (5.5,-2.5);
\begin{scope}
\path(0,0) node (a) {};
\foreach \i in {1,0,1,1,1,0}
 {\draw (a) node[carre4,color\i] {$\i$};
  \path (a) -- +(4.1mm,0) node[coordinate] (a) {};}
\path (1.025,.4) node {$f$};
\draw[snake=brace] (1.42,-.3) -- (-.205,-.3) node[midway,below] (fs) {$\vphantom{h}s$};
\draw[snake=brace] (2.255,-.3) -- (1.45,-.3) node[midway,below] (fbs) {$\vphantom{h}\bar s$};
\end{scope}
\begin{scope}[xshift=3cm]
\path(0,0) node (a) {};
\foreach \i in {0,0,1,1,0,1}
 {\draw (a) node[carre4,color\i] (f\i) {$\i$};
  \path (a) -- +(4.1mm,0) node[coordinate] (a) {};}
\path (1.025,.4) node {$g$};
\draw[snake=brace] (1.42,-.3) -- (-.205,-.3) node[midway,below] (gs) {$\vphantom{h}s$};
\draw[snake=brace] (2.255,-.3) -- (1.45,-.3) node[midway,below] (gbs) {$\vphantom{h}\bar s$};
\end{scope}
\begin{scope}[yshift=-1.8cm]
\path(0,0) node (a) {};
\foreach \i in {0,0,1,1,1,0}
 {\draw (a) node[carre4,color\i] {$\i$};
  \path (a) -- +(4.1mm,0) node[coordinate] (a) {};}
\path (1.025,-.4) node {$(g\inter s)\union (f\inter \bar s)$};
\draw[snake=brace] (-0.205,.25) -- (1.42,.25) node[midway,coordinate,above=1.5mm] (c11) {};
\draw[snake=brace] (1.45,.25) -- (2.255,.25) node[midway,coordinate,above=1.5mm] (c12) {};
\end{scope}
\begin{scope}[yshift=-1.8cm,xshift=3cm]
\path(0,0) node (a) {};
\foreach \i in {1,0,1,1,0,1}
 {\draw (a) node[carre4,color\i] {$\i$};
  \path (a) -- +(4.1mm,0) node[coordinate] (a) {};}
\path (1.025,-.4) node {$(f\inter s)\union (g\inter \bar s)$};
\draw[snake=brace] (-0.205,.25) -- (1.42,.25) node[midway,coordinate,above=1.5mm] (c21) {};
\draw[snake=brace] (1.45,.25) -- (2.255,.25) node[midway,coordinate,above=1.5mm] (c22) {};
\end{scope}
\begin{scope}[dashed]
\draw[-latex'] (gs.-90) -- (c11);
\draw[-latex'] (fbs.-90) -- (c12);
\draw[-latex'] (fs.-90) -- (c21);
\draw[-latex'] (gbs.-90) -- (c22);
\end{scope}
\end{tikzpicture}
\caption{Semi-stability: if $f$ and $g$ are in~$F^n$, then one of $(g\inter s) \union (f\inter\bar s)$ and $(f\inter s)\union (g\inter\bar s)$ must be in~$F^n$.}
\label{fig-semist}
\end{wrapfigure}
Semi-stability is illustrated on Fig.~\ref{fig-semist}.

\begin{example}
  Obviously, the set $\ZO^n$ is \semist, as well as the empty set.
  For~$n=2$, the set
  $\{(0,1), (1,0)\}$ is easily seen not to be \semist: taking
  $f=(0,1)$ and $g=(1,0)$ with $s=(1,0)$, we~get
  $(f\inter s)\union(g\inter\bar s)=(0,0)$ and
  $(g\inter s)\union(f\inter\bar s)=(1,1)$.
  Similarly, $\{(0,0),(1,1)\}$~is not \semist. It~can be checked
  that any other subset of~$\ZO^2$ is \semist.
\end{example}

\medskip
We~then define 
\SLEGf~\footnote{We~name our fragment~\SLEGf
  as it comes as
  a natural continuation after fragments
  \SLAGf~\cite{mogavero2014behavioral},
  \SLBGf~\cite{Mogavero:2014:RSM:2656934.2631917}, and \SLCGf and
  \SLDGf~\cite{SLCGandDG2013Faux}.} as follows:
 \begin{xalignat*}2
  \SLEGf \ni \formule &\coloncolonequals
  \forall x.\formule \mid \exists x.\formule \mid \xi 
  &
  \xi &\coloncolonequals
  F^n((\beta_i)_{1\leq i\leq n})
  \\
  \beta &\coloncolonequals
  \assign{\sigma}.\ \psi
  & 
  \psi &\coloncolonequals
  \neg\psi \mid \psi\vee\psi \mid \X\psi \mid \psi\Until\psi \mid p
\end{xalignat*}
where $F^n$ ranges over \semist~subsets of~$\ZO^n$, for all~$n\in\bbN$.
%
%
The~semantics of the operator~$F^n$ is defined~as
\[
\calG,q \models_{\chi} F^n((\beta_i)_{i\leq n}) 
\quad \iff\quad
\text{letting~$f\in\ZO^n$ s.t. $f(i) =1\text{ iff } \calG,q \models_{\chi}\beta_i$, it~holds } 
f\in F^n.
\]

Notice that if $F^n$ would range over all subsets
of~$\ZO^n$, then this definition would exactly correspond to~\SLBGf.
Similarly, the case where $F^n=\{\bfone^n\}$
corresponds to~\SLCGf, while $F^n=\ZO^n\setminus \{\bfzero^n\}$ gives
rise  to~\SLDGf.

\begin{example}\label{ex-Nash2}
  Consider the following formula, expressing the existence of a Nash
  equilibrium for two players with respective \LTL objectives $\psi_1$
  and~$\psi_2$:
  \begin{xalignat*}1
    \exists x_1.\exists x_2. 
    \forall y_1.\forall y_2.&\ \bigwedge
    \left\{\begin{array}{@{}l@{}}
      (\assign{A_1\mapsto y_1 ; A_2\mapsto x_2}. \psi_1) \Rightarrow
      (\assign{A_1\mapsto x_1 ; A_2\mapsto x_2}. \psi_1)
      \\
      (\assign{A_1\mapsto x_1 ; A_2\mapsto y_2}. \psi_2) \Rightarrow
      (\assign{A_1\mapsto x_1; A_2\mapsto x_2}. \psi_2)
    \end{array}\right.
    \label{eq-nash}
  \end{xalignat*}
  This formula has four goals, and it corresponds to the set
  \[
  F^4 = \{(a,b,c,d) \in \ZO^4 \mid a \le b\ \text{and}\ c \le d\}
  \]
  Taking $f=(1,1,0,0)$ and $g=(0,0,1,1)$, with $s=(1,0,1,0)$
  we~have
  $(f\inter s)\union (g\inter\bar s) = (1,0,0,1)$ and
  $(g\inter s)\union (f\inter\bar s) = (0,1,1,0)$, none of which is in~$F^4$. 
  Hence our formula is not (syntactically) in~\SLEGf.
\end{example}

\begin{restatable}{proposition}{egag}
  \label{prop-egag}
\SLEGf contains \SLAGf. The inclusion is strict (syntactically).
\end{restatable}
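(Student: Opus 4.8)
The plan is to prove the two assertions separately: (i) every \SLAGf formula is an \SLEGf formula, by checking that the Boolean operator attached to an \SLAGf combination of goals is always a semi-stable set; and (ii) the converse fails, by exhibiting one semi-stable operator that no \SLAGf combination realizes.

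For (i), fix an \SLAGf formula $\phi=(Q_i x_i)_i.\ \xi$ with goals $\beta_1,\dots,\beta_n$. The quantifier prefix and the list of goals are already in \SLEGf shape, so it suffices to show that the map $f\in\ZO^n\mapsto[\xi \text{ holds when } \beta_i \text{ has value } f(i)]$ is the indicator of a semi-stable set; let $F^n_\xi\subseteq\ZO^n$ be the set of satisfying vectors. I would prove $F^n_\xi$ semi-stable by induction on $n$, following the grammar $\xi\coloncolonequals\beta\et\xi\mid\beta\ou\xi\mid\beta$. The base case $\xi=\beta$ gives $F^1_\xi=\{(1)\}$, a singleton, hence semi-stable. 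For the inductive step write $\xi=\beta_{i_0}\star\xi'$ with $\star\in\{\et,\ou\}$ and $\xi'$ over the remaining $n-1$ goals, with $F'=F^{n-1}_{\xi'}$ semi-stable by induction; assume w.l.o.g.\ $i_0=1$ (semi-stability is invariant under permuting coordinates, since $\inter$, $\union$ and $\bar{\cdot}$ act coordinatewise) and set $\pi(f)=(f(2),\dots,f(n))$. Given $f,g\in F^n_\xi$ and $s\in\ZO^n$, put $h_1=(f\inter s)\union(g\inter\bar s)$ and $h_2=(g\inter s)\union(f\inter\bar s)$; the key observation is that at each coordinate $\{h_1(i),h_2(i)\}=\{f(i),g(i)\}$, so $(\pi(h_1),\pi(h_2))$ is exactly the pair of $\pi(s)$-mixes of $(\pi(f),\pi(g))$. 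In the $\et$ case, $f(1)=g(1)=1$ forces $h_1(1)=h_2(1)=1$, and semi-stability of $F'$ puts $\pi(h_1)$ or $\pi(h_2)$ in $F'$, hence $h_1$ or $h_2$ in $F^n_\xi$. In the $\ou$ case: if $f(1)=1$, then the mix inheriting $f$'s value at coordinate $1$ lies in $F^n_\xi$ (it is $h_1$ when $s(1)=1$, $h_2$ when $s(1)=0$); symmetrically if $g(1)=1$; and if $f(1)=g(1)=0$ then $\pi(f),\pi(g)\in F'$, so again one of $\pi(h_1),\pi(h_2)$ is in $F'$ and the corresponding $h$ is in $F^n_\xi$. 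This closes the induction and yields $\SLAGf\subseteq\SLEGf$.

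For (ii), take $n=3$ and the operator $F^3=\{f\in\ZO^3\mid\size f\ge 2\}$, i.e.\ ``at least two of the three goals hold'' (majority, $\mathrm{Maj}_3$). It is semi-stable: for $f,g\in F^3$ and any $s$, the coordinatewise identity $\{h_1(i),h_2(i)\}=\{f(i),g(i)\}$ gives $\size{h_1}+\size{h_2}=\size f+\size g\ge 4$, so $\max(\size{h_1},\size{h_2})\ge 2$ and one of $h_1,h_2$ lies in $F^3$ (the same argument shows every threshold operator $\{f\mid\size f\ge k\}$ is semi-stable). Hence $F^3((\beta_1,\beta_2,\beta_3))$ is a genuine \SLEGf formula. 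On the other hand, an \SLAGf combination of its goals is, by its caterpillar-shaped grammar, a read-once Boolean function of those goals (each $\beta_i$ occurring once), whereas $\mathrm{Maj}_3$ is not a read-once function; so no \SLAGf combination $\xi$ over three goals has $F^3_\xi=F^3$, and the inclusion is strict.

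The only mildly delicate point is the bookkeeping in the $\ou$ case of the induction, namely tracking which of $h_1,h_2$ inherits the value $1$ at the head coordinate according to the bit $s(1)$; the base case, the $\et$ case, the coordinatewise identity $\{h_1(i),h_2(i)\}=\{f(i),g(i)\}$, and the threshold computation used for strictness are all routine.
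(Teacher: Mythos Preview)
Your proof is correct and follows essentially the same approach as the paper's. For part~(i), both arguments proceed by induction on the $\SLAGf$ grammar, checking that $\wedge$ and $\vee$ with a single goal preserve semi-stability; your $\vee$ case is actually more carefully spelled out than the paper's, which only details the $\wedge$ case. For part~(ii), you and the paper use the very same set $F^3=\{f\in\ZO^3\mid\size f\ge 2\}$; your counting argument $\size{h_1}+\size{h_2}=\size f+\size g\ge 4$ for semi-stability is slicker than what the paper does (it simply asserts the set is semi-stable), and where the paper rules out $\SLAGf$ by a direct four-case analysis on the outermost connective, you appeal to the classical fact that $\mathrm{Maj}_3$ is not read-once. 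Both work; the paper's argument is more self-contained, yours is shorter but leans on a known result you do not reprove.
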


\subsection{Properties of \semist\ sets}
\subparagraph{Transformation into an upward-closed set by bit flipping.}
 Fix a vector~$b\in \ZO^n$. We~define the operation $\flip_b\colon \ZO^n \to
 \ZO^n$ that maps any vector~$f$ to $(f\inter b)\union(\bar f \inter \bar b)$.
 In~other terms, $\flip_b$~flips the $i$-th bit of its argument if $b_i=0$, and
 keeps this bit unchanged if $b_i=1$. 
 In~\SLEGf, flipping bits amounts to negating the corresponding goals.
The~first part of the following lemma thus indicates that our
definition for \SLEGf is somewhat sound.%
%
%
\begin{restatable}{lemma}{dkfzefaaaaze}
  \label{FLipLemmaPropiSLEG::dzal}\label{lemma-upwardcl}
If~$F^n\subseteq \ZO^n$ is semi-stable, then so is $\flip_b(F^n)$.
Moreover, for any semi-stable set~$F^n$, there exists $B\in \ZO^n$
such that $\flip_B(F^n)$ is upward-closed (i.e. for any
$f\in\flip_B(F^n)$ and any $s\in\ZO^n$, we~have $f\union
s\in\flip_B(F^n)$).
\end{restatable}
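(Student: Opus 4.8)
The plan is to prove the two statements in order. For the first statement (semi-stability is preserved by $\flip_b$), I would argue that $\flip_b$ is essentially a coordinate-wise involution that commutes nicely with the operations appearing in the definition of semi-stability. Concretely, I would check the identity $\flip_b((f\inter s)\union(g\inter\bar s)) = (\flip_b(f)\inter s')\union(\flip_b(g)\inter\bar{s'})$ where $s' = \flip_{\bar b}(s) = (s\inter\bar b)\union(\bar s\inter b)$, or some variant thereof — the point being that flipping the inputs just amounts to flipping a fixed set of coordinates of the output, and since the "or" in the semi-stability condition ranges over all $s\in\ZO^n$, the condition is insensitive to such a reparametrization. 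So if $F^n$ is semi-stable and $f,g\in\flip_b(F^n)$, then $\flip_b(f),\flip_b(g)\in F^n$; applying semi-stability of $F^n$ to these with the reparametrized $s'$ and then flipping back yields the required membership in $\flip_b(F^n)$. This step is routine bit-manipulation; the only care needed is getting the substitution $s\mapsto s'$ right.

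For the second statement I would proceed as follows. If $F^n=\varnothing$ it is trivially upward-closed (take any $B$), so assume $F^n\neq\varnothing$ and pick an element $B\in F^n$ of minimal size. I claim $\flip_B(F^n)$ is upward-closed. Note $\flip_B(B) = \bfzero^n$, so $\bfzero^n\in\flip_B(F^n)$; since $\flip_B(F^n)$ is semi-stable by the first part, it suffices to show that a semi-stable set containing $\bfzero^n$ is upward-closed. So let $G^n$ be semi-stable with $\bfzero^n\in G^n$, let $h\in G^n$, and let $s\in\ZO^n$; I must show $h\union s\in G^n$. Apply semi-stability to $f=h$, $g=\bfzero^n$ with the separator $\bar s$ (so that the roles of $s$ and $\bar s$ are swapped): one computes $(h\inter\bar s)\union(\bfzero\inter s) = h\inter\bar s$ and $(\bfzero\inter\bar s)\union(h\inter s) = h\inter s$, so semi-stability gives $h\inter\bar s\in G^n$ or $h\inter s\in G^n$ — which is not yet what I want. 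So instead I should iterate: to push $h$ up to $h\union s$, I need to add the coordinates of $s$ that are $0$ in $h$ one at a time (or all at once) using $\bfzero^n$ as a "pivot". The right application is semi-stability of $h$ and $\bfzero^n$ with separator $t := \overline{h\union s}$ (the coordinates that are $0$ in both): then $(h\inter t)\union(\bfzero\inter\bar t) = \bfzero^n$ and $(\bfzero\inter t)\union(h\inter\bar t) = h$, which again is vacuous. The genuinely useful move is the other direction: I want to realize $h\union s$ as one of the two combinations. Taking $f = h\union s$ would be circular, so the correct approach is an inductive one on $\size{s\setminus h}$, at each step using the minimality of $B$ together with the first combination in the semi-stability of $\flip_B(F^n)$ to rule out the "downward" option and force the "upward" one.

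The main obstacle is precisely this last point: arguing that $\flip_B(F^n)$ is upward-closed requires showing the "shrinking" alternative in the semi-stability condition cannot always be the one that holds, and this is where minimality of $B$ (equivalently, the fact that $\bfzero^n$ is the unique minimal element of $\flip_B(F^n)$, once $F^n\neq\varnothing$) must be leveraged. I expect the clean formulation to be: let $h\in\flip_B(F^n)$ and $s\in\ZO^n$; apply semi-stability to the pair $h$, $\bfzero^n$ with separator chosen as $s$ read appropriately, obtaining that $h\inter s\in\flip_B(F^n)$ or $h\union s\in\flip_B(F^n)$ (after translating back through $F^n$ and using that $B$ is the $\le$-least element, so the "$h\inter s$" branch, being $\le h$ and hence corresponding via $\flip_B$ to something $\ge B$... ) — here I would need to check carefully that the branch producing a smaller vector is inconsistent with $B$ being minimal in $F^n$ unless it still lies in $\flip_B(F^n)$, and conclude by iterating over the coordinates of $s$. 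Once upward-closure of $\flip_B(F^n)$ is established, the parenthetical reformulation ($f\union s\in\flip_B(F^n)$ for all $f\in\flip_B(F^n)$, $s\in\ZO^n$) is immediate from the definition of upward-closed.
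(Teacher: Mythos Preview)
Your first part is essentially right, and in fact simpler than you suggest: no reparametrization of~$s$ is needed. A direct coordinate check gives $(\flip_b(f)\inter s)\union(\flip_b(g)\inter\bar s)=\flip_b\bigl((f\inter s)\union(g\inter\bar s)\bigr)$ with the \emph{same}~$s$, because $\flip_b$ acts coordinatewise and commutes with the selection by~$s$ on each coordinate. The paper proceeds exactly this way.

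Your second part has a genuine gap: the key claim you try to establish, namely that a semi-stable set containing~$\bfzero^n$ must be upward-closed (equivalently, must equal~$\ZO^n$), is false. Take the paper's example $F^3=\{(1,0,0),(1,1,0),(1,0,1),(0,1,1),(1,1,1)\}$, which is semi-stable. The element of minimal size is $B=(1,0,0)$, and
\[
\flip_B(F^3)=\{(0,0,0),(1,0,0),(1,1,0),(1,0,1),(1,1,1)\}.
\]
This set is semi-stable (by the first part) and contains~$\bfzero^3$, yet it is not upward-closed: for instance $(0,1,0)\notin\flip_B(F^3)$. So choosing $B$ as a size-minimal (or even $\leq$-minimal) element of~$F^n$ does not work, and no amount of iterating the semi-stability condition on the pair $(h,\bfzero)$ will rescue the argument---indeed, in the counterexample both alternatives produced by semi-stability may land in the set while the desired $h\union s$ stays outside.

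The paper's construction of~$B$ is quite different: it first proves an auxiliary lemma (for any $s$ and any non-empty $H\subseteq F^n$ there exists $f\in H$ such that $(f\inter s)\union(g\inter\bar s)\in F^n$ for all $g\in H$), then applies it with each singleton~$s_i$ to obtain a ``safe'' value $g(i)$ at each coordinate~$i$---a value that can overwrite coordinate~$i$ of any element of~$F^n$ while staying in~$F^n$. The resulting vector~$g$ need not be minimal in any sense (in the example above one gets $g=(1,1,1)$), and $\flip_g(F^n)$ is then shown to be upward-closed by flipping coordinates to~$1$ one at a time.
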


\begin{remark}
  Notice that being upward-closed is not a sufficient condition for
  being semi-stable. For instance, the set~$F^n= \up{\{
    (0,0,1,1) ; (1,1,0,0) \}}$ is not semi-stable.
\end{remark}

%
%

\subparagraph{Defining quasi-orders from \semist sets.}
For $F^n\subseteq \ZO^n$,
we write $\overline{F^n}$ for the complement of~$F^n$. Fix such a set~$F^n$, and pick
$s\in\ZO^n$. For any
$h\in\ZO^{n}$, we define
\begin{xalignat*}1
  \mathbb{F}^n(h,s)&=\{ h' \in \ZO^n \mid (h\inter
  s)\union (h'\inter \overline{s}) \in F^n\}
  \\
  \overline{\mathbb{F}^n}(h,s)&=\{ h' \in \ZO^{n} \mid (h\inter
  s)\union (h'\inter \overline{s}) \in \overline{F^n}\}
\end{xalignat*}
Trivially $\mathbb{F}^n(h,s) \cap \overline{\mathbb{F}^n}(h,s)=\emptyset$ and
$\mathbb{F}^n(h,s) \cup \overline{\mathbb{F}^n}(h,s)=\ZO^n$.
If we assume $F^n$ to be semi-stable, then the family $(\mathbb{F}^n(h,s))_{h\in \ZO^n}$
enjoys the following property:
\begin{restatable}{lemma}{RTOSNE}
  \label{lemmaOrderingFsh}
  Fix a semi-stable set $F^n$ and $s\in\ZO^n$. For any $h_1, h_2\in\ZO^{n}$, either $\mathbb{F}^n(h_1,s) \subseteq \mathbb{F}^n(h_2,s)$ or $\mathbb{F}^n(h_2,s)\subseteq \mathbb{F}^n(h_1,s)$.
\end{restatable}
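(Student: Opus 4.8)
The plan is to prove the contrapositive: if neither inclusion holds, then $F^n$ is not semi-stable. So assume there are $h_1,h_2\in\ZO^n$ and $s\in\ZO^n$ with $\mathbb{F}^n(h_1,s)\not\subseteq\mathbb{F}^n(h_2,s)$ and $\mathbb{F}^n(h_2,s)\not\subseteq\mathbb{F}^n(h_1,s)$. Then there exist $h'_1\in\mathbb{F}^n(h_1,s)\setminus\mathbb{F}^n(h_2,s)$ and $h'_2\in\mathbb{F}^n(h_2,s)\setminus\mathbb{F}^n(h_1,s)$. Unfolding the definitions, this gives four membership facts: setting $f=(h_1\inter s)\union(h'_1\inter\bar s)$ and $g=(h_2\inter s)\union(h'_2\inter\bar s)$, we have $f\in F^n$ and $g\in F^n$, while $(h_2\inter s)\union(h'_1\inter\bar s)\notin F^n$ and $(h_1\inter s)\union(h'_2\inter\bar s)\notin F^n$.

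Now I apply semi-stability to the pair $f,g\in F^n$ with the \emph{same} vector $s$. The key observation is that $f\inter s=h_1\inter s$ (because the $\bar s$-part of $f$ is killed by $\inter s$) and $f\inter\bar s=h'_1\inter\bar s$, and symmetrically $g\inter s=h_2\inter s$, $g\inter\bar s=h'_2\inter\bar s$. Hence
\[
(f\inter s)\union(g\inter\bar s)=(h_1\inter s)\union(h'_2\inter\bar s),
\qquad
(g\inter s)\union(f\inter\bar s)=(h_2\inter s)\union(h'_1\inter\bar s).
\]
But both of these are precisely the vectors we argued are \emph{not} in $F^n$. So semi-stability of $F^n$ applied to $f$, $g$ and $s$ fails, contradicting the hypothesis. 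This proves the lemma.

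The proof is essentially a bookkeeping exercise: the only thing that has to be checked carefully is the algebraic identity $(f\inter s)\union(g\inter\bar s)=(h_1\inter s)\union(h'_2\inter\bar s)$ (and its mirror), which follows coordinatewise from the fact that $s$ and $\bar s$ have disjoint support covering $[1,n]$, so $\inter s$ reads off exactly the $s$-coordinates and $\inter\bar s$ exactly the complementary ones; the $\union$ then reassembles them without overlap. I expect this identity-matching to be the only place where a slip is possible — everything else is just naming the right witnesses and invoking the definition of semi-stability with the correct choice of separator $s$. No earlier result beyond the definition of semi-stability is needed.
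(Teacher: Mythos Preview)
Your proof is correct and follows essentially the same approach as the paper's own proof: both argue by contradiction, extract witnesses $h'_1$ and $h'_2$ from the failure of the two inclusions, set $f=(h_1\inter s)\union(h'_1\inter\bar s)$ and $g=(h_2\inter s)\union(h'_2\inter\bar s)$, and then apply semi-stability to $f,g$ with the same separator~$s$ to reach the contradiction. Your explicit naming of $f$ and $g$ and your remark about the coordinatewise identity make the argument slightly more readable, but the substance is identical.
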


Given a semi-stable set $F^n$ and $s\in\ZO^n$, we can use the
inclusion relation of Lemma~\ref{lemmaOrderingFsh} to define a
relation $\preceq_s^{F^n}$ (written~$\preceq_s$ when~$F^n$ is clear)
over the elements of $\ZO^n$.
It~is defined as follows:
$h_1\preceq_s h_2$ if, and only~if,
$\mathbb{F}^n(h_1,s)\subseteq \mathbb{F}^n(h_2,s)$. 

This relation is a quasi-order:
its reflexiveness and
transitivity both follow from the reflexiveness and
transitivity of the inclusion relation~$\subseteq$.
By~Lemma~\ref{lemmaOrderingFsh}, this quasi-order is total.
Intuitively,
$\preceq_s$ orders the elements of $\ZO^{n}$ based on how ``easy'' it
is to complete their restriction to~$s$ so that the completion belongs
to~$F^n$.  In~particular, only the indices on which $s$ take value~$1$
are used to check whether $h_1\preceq_s h_2$: given
$h_1,h_2\in \ZO^n$ such that $(h_1\inter s)=(h_2\inter s)$, we~have
$\mathbb{F}(h_1,s)=\mathbb{F}(h_2,s)$, and $h_1=_s h_2$.
\leavevmode

\noindent
\begin{minipage}{9.5cm}
\begin{example}
  Consider for instance the \semist set~$F^3=\{(1,0,0),(1,1,0),
  (1,0,1), (0,1,1),(1,1,1)\}$ represented on the figure opposite, and
  which can be shown to be \semist. Fix~$s=(1,1,0)$. Then $\mathbb
  F^3((0,1,\star),s)= \ZO^2\times\{1\}$, while $\mathbb
  F^3((1,1,\star),s)=\mathbb F^3((1,0,\star),s)=\ZO^3$ and $\mathbb
  F^3((0,0,\star),s)=\emptyset$. It~follows that $(0,0,\star)
  \preceq_s (0,1,\star) \preceq_s (1,0,\star) =_s (1,1,\star)$.
\end{example}
\end{minipage}\hfill
\begin{minipage}{4.2cm}
  \centering 
  \begin{tikzpicture}[yscale=1.1]
    %
    %
  \begin{scope}[xshift=6cm,yshift=-5mm,xscale=0.8, yscale=.88]
    \draw[thin, rounded corners=2mm,fill=black!10!white] (-1.5,0.5) to (-0.75,0.5) to (-0.75,1.6-0.3) to (2.25,1.6-0.3) to (2.25,1.6+0.3) to (0.75,1.6+0.3) to (0.75,2.4+0.3) to (-0.75,2.4+0.3) to (-0.75,1.6+0.3) to (-2.25,1.6+0.3) to (-2.25,0.5) -- (-1.5,0.5);
    \def\colbl{blue!80!black}
    \node [scale=0.8] at (0,0) { $(\color{\colbl}0,0\color{black},0)$};
    \node [scale=0.8] at (0,0.8) { $(\color{\colbl}0,1\color{black},0)$};
    \node [scale=0.8] at (1.5,0.8) { $(\color{\colbl}0,0\color{black},1)$};
    \node [scale=0.8, ] at (-1.5,0.8) { $(\color{\colbl}1,0\color{black},0)$};
    \node [ scale=0.8] at (-1.5,1.6) { $(\color{\colbl}1,1\color{black},0)$};
    \node [ scale=0.8] at (0,1.6) { $(\color{\colbl}1,0\color{black},1)$};
    \node [ scale=0.8]at (1.5,1.6) { $(\color{\colbl}0,1\color{black},1)$};
    \node [ scale=0.8] at (0,2.4) { $(\color{\colbl}1,1\color{black},1)$};
    \node at (1.5,2.30) {$F^3$};
    \end{scope}
    \end{tikzpicture}
\end{minipage}

%


\subsection{Sketch of proof of Theorem~\ref{ThmModelCheck}}

We~encode the \LTL formulas as parity automata, so that, by keeping
track of a vector of states of those automata, the goals to be
fulfilled are encoded as parity winning conditions. We~use
vectors~$s\in\ZO^n$ to represent the set of goals still being
``monitored'' after a finite history~$\rho$: for the
quasi-order~$\preceq_s$, there exist optimal elements~$b_{q,d,s}$ that
can be achieved from a given state~$q$ with a vector of states~$d$ of
the parity automata. There are two ways for the goals given
by~$b_{q,d,s}$ to be fulfilled: either by satisfying all those goals
along the same outcome, or by partitioning them along different
branches. This can be encoded in a two-player parity game (as in the
proof of Prop.~\ref{prop-negSL1G}); this has two consequences: first,
we~can effectively compute values~$b_{q,d,s}$, which provides the
\EXPTIME[2] algorithm (by checking whether $b_{q_0,d_0,\bfone}\in
F^n$); second, by determinacy of turn-based parity games, one
player has a winning strategy in each of those games. We~derive
timeline maps witnessing that the optimal elements~$b_{q,d,s}$ can be
achieved, and timeline maps witnessing that better elements cannot be
reached. These maps can be combined into global timeline maps
witnessing that $\calG,q_0\models^T \phi$ or $\calG,q_0\models^T
\non\phi$, depending whether $b_{q_0,d_0,\bfone}\in F^n$.

Finally, we prove that \SLEGf is, in a sense, maximal for
the first property of Theorem~\ref{ThmModelCheck}:%
\begin{restatable}{proposition}{MaximalitySLEGThmRestatableThingymarkOEOS}
  \label{ch:EG::PropMaximaliteSLEGvisavisSLBG}
  For any 
  non-\semist boolean set $F^n \subseteq
  \ZO^n$, there exists a \SLBGf formula \formule built on $F^n$, a game
  \calG and a state $q_0$ such that $\calG ,q_0 \not\models^{T} \neg
  \phi$ and $\calG ,q_0 \not\models^{T} \phi$.
\end{restatable}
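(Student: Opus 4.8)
The plan is to prove the contrapositive of Proposition~\ref{PropClotureNegationChapitreIntroDeDependencyProblem}, refined to a converse: whenever $F^n$ is \emph{not} \semist, we build a concrete witness of failure on both sides. By definition, non-semi-stability gives us two vectors $f,g\in F^n$ and a ``splitting'' vector $s\in\ZO^n$ such that neither $(f\inter s)\union(g\inter\bar s)$ nor $(g\inter s)\union(f\inter\bar s)$ lies in $F^n$. The indices where $s=1$ will be the goals we force ``player~$\exists$'' to satisfy along one branch, and the indices where $s=0$ the goals forced along the other branch; the two branches are separated by an early universally-quantified choice that $\exists$ cannot see until \emph{after} the branching point, so that timeline dependence is powerless to coordinate them.

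First I would set up the game~$\calG$. Take a fresh player, say~$\BoxFill[\FigColB]{}$, owning the initial state~$q_0$, with two successors $q_a$ and $q_b$ (so $w(x_A)(q_0)\in\{q_a,q_b\}$ is the only relevant universal bit). From each of $q_a,q_b$, attach a small turn-based gadget, controlled by existentially-quantified players, whose plays can realise exactly those boolean patterns of atomic propositions needed to make each goal $\beta_i.\,\psi_i$ true or false independently; concretely, give each goal~$i$ its own proposition~$p_i$ and let the gadget be a choice of which $p_i$'s to eventually visit. The formula~$\phi$ is then $\forall x_A.\,\exists(\text{vars controlling the gadgets}).\ F^n((\beta_i)_i)$, where the assignments route each goal's \LTL objective $\F p_i$ through the gadget reachable after the corresponding branch; goals indexed by $s^{-1}(1)$ are placed ``under $q_a$'' and goals indexed by $s^{-1}(0)$ ``under $q_b$'' (and symmetrically in a second copy so both branches are live). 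One designs the gadgets so that along the single outcome determined by a valuation, the achievable vector of satisfied goals restricted to the branch actually taken is completely free, but the goals on the \emph{other} branch are fixed to a default (e.g.\ all false, or the corresponding bits of whichever of $f,g$ we assigned to that branch).

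Next, the two failure claims. For $\calG,q_0\not\models^T\phi$: a $T$-witness~$\theta$ must choose the existential actions for the $q_a$-gadget and the $q_b$-gadget; but after history $q_0\cdot q_a$ the timeline condition~\eqref{eq-T} only lets those actions depend on $w(x_A)$ on the \emph{strict prefix} $q_0$, i.e.\ not at all beyond what is visible, and crucially the actions in the $q_b$-gadget are chosen on histories that are incomparable to $q_0\cdot q_a$, so $\theta$ must commit to one fixed vector $h_a$ of satisfied goals on the $q_a$-side and one fixed $h_b$ on the $q_b$-side, \emph{independently of} $w(x_A)$. When $w(x_A)=q_a$, the realised pattern is (roughly) $(h_a\inter s)\union(\text{defaults}\inter\bar s)$, and when $w(x_A)=q_b$ it is $(\text{defaults}\inter s)\union(h_b\inter\bar s)$; choosing the defaults to be the bits of $g$ (resp.\ $f$) and using that $f,g$ were the only members of $F^n$ consistent with this, one forces one of the two forbidden vectors $(f\inter s)\union(g\inter\bar s)$, $(g\inter s)\union(f\inter\bar s)$ to appear, hence $F^n$ is violated. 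For $\calG,q_0\not\models^T\neg\phi$: the syntactic negation is $\exists x_A.\,\forall(\dots).\ \overline{F^n}((\beta_i)_i)$; by an analogous argument, $\overline{F^n}$ is reached by the existential choice of $x_A$ plus the universal gadget actions, and symmetry of the construction — the gadgets are arranged so the roles of ``satisfy'' and ``falsify'' and of the two branches swap cleanly — forces one of $f$ or $g$ itself (both in $F^n$, hence \emph{not} in $\overline{F^n}$) to be realised. Care is needed to make the gadgets genuinely symmetric under this exchange.

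The main obstacle I expect is exactly this dual gadget design: one must build a single game that simultaneously witnesses the failure of $\phi$ (where $x_A$ is universal and the gadgets existential) and the failure of $\neg\phi$ (where $x_A$ becomes existential and the gadgets universal), with the \emph{same} $f,g,s$ controlling both forbidden patterns. Getting the defaults and the placement of goals under the two branches to line up so that in the first game the quantifier alternation forces a $\bigl((f\inter s)\union(g\inter\bar s)\bigr)$-type vector while in the second it forces an $f$- or $g$-type vector is the delicate combinatorial heart; everything else (turn-based gadgets for free boolean patterns of $\F p_i$, soundness of the inductive semantics) is routine. The formula~$\phi$ should be taken closed with exactly two quantifier blocks so that its syntactic negation stays in the right shape, and the gadgets should be turn-based so the resulting example matches the statement.
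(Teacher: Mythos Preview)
Your high-level intuition is right: exploit the witnesses $f,g\in F^n$ and $s\in\ZO^n$ of non-semi-stability, and arrange two incomparable branches so that the existential side cannot coordinate across them. But the concrete construction you sketch does not achieve this, for two related reasons.

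First, your reading of condition~\eqref{eq-T} is off. You write that after history $q_0\cdot q_a$ the existential actions ``only depend on $w(x_A)$ on the strict prefix $q_0$''. That is the clause for universal variables quantified \emph{after} the existential one; for a universal variable quantified \emph{before} (which is exactly your~$x_A$), condition~\eqref{eq-T} allows dependence on $w(x_A)(\rho')$ for all $\rho'\in\Prefx(\rho)\cup\{\rho\}$, i.e.\ including the current history. So your existential gadget-strategies at $q_0\cdot q_a$ and $q_0\cdot q_b$ both see $w(x_A)(q_0)$ in full, and nothing about $x_A$ is hidden from them.

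Second, and more seriously, a single leading universal variable creates no hidden information to exploit. With the shape $\forall x_A.\ \exists(\ldots)$, once the existential controls all gadgets, it can simply realise $f$ (or~$g$) on every branch and land in~$F^n$ regardless of~$x_A$; dually, in $\neg\phi=\exists x_A.\ \forall(\ldots)$ the adversary controls all gadgets and can directly realise one of the forbidden mixtures $(f\inter s)\union(g\inter\bar s)\in\overline{F^n}$, so $\neg\phi$ would \emph{hold}. The non-semi-stability of~$F^n$ never enters.

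What is missing is the mechanism that puts a \emph{universal} choice on one branch and the \emph{existential} response on an incomparable branch, with the branching itself produced by different \emph{assignments} rather than by a single universal move. The paper does this with four variables and two players: two universal variables $y_t,y_u$ for the $\BoxFill$-player route the $s(i)=1$ goals to a state~$q_t$ and the $s(i)=0$ goals to a state~$q_u$ (via different assignments~$\beta_i$); then a universal $x_t$ picks $p_1$ or~$p_2$ from~$q_t$, while the lone existential $x_u$ picks from~$q_u$. Now $x_u$'s decision at $q_0\cdot q_u$ genuinely cannot see $x_t$'s value at the incomparable history $q_0\cdot q_t$, and this is exactly what forces one of the forbidden mixtures. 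The vectors $f,g$ are encoded not through per-goal propositions but through the \LTL formulas $\varphi_i\in\{\F p_1,\ \F p_2,\ \F p_1\vee\F p_2,\ \bot\}$ depending on $(f(i),g(i))$, so that reaching $p_k$ satisfies goal~$i$ iff the $k$-th witness vector has a~$1$ in position~$i$. This also makes the $\neg\phi$ direction clean: there, $x_u$ becomes universal and last, so the (now-existential) $x_t$ is fixed independently of~it, and the adversary simply copies $x_t$'s choice to force the pure vector $f$ or~$g$.
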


Whether \SLEGf is also maximal for having a \EXPTIME[2] model-checking
algorithm remains open. Actually, we~do not know if \SLBGf model
checking is decidable under the timeline semantics. These questions
will be part of our future works on this topic.



  \clearpage
  \appendix
  \part*{Appendix}
\hfuzz=12pt\relax

\section{Proofs of Section~\ref{sec-depend}}
\label{app-sec3}
\subsection{Proof of Proposition~\ref{prop-negSL1G}}
\label{app-Prop4}
\PropositionquatreAvril*

\begin{proof}
%
We~begin with intuitive explanations before going into full details.
We~encode the satisfaction relation $\calG,q_0\models^E \phi$ into a
two-player turn-based parity game:
the~first player of the parity game will be in
charge of selecting the existentially-quantified strategies, and her
opponent will select the universally-quantified ones. This will be
encoded by replacing each state of~$\calG$ with a tree-shaped module
as depicted on Fig.~\ref{fig-quantgame}.
Following the strategy assignment of the \SLOneG formula~$\phi$,
the~strategies selected by those players will define a unique play,
along which the \LTL objective has to be fulfilled; this verification
is encoded into a (doubly-exponential) parity automaton.

We prove that $\calG,q_0\models^E \phi$ if, and only~if, the first
player~wins; conversely, $\calG,q_0\not\models^E \phi$ if the second
player wins. Both claims crucially rely on the existence of memoryless
optimal strategies for two-player parity games.
Finally, by~determinacy of those games, we~get the expected result.

\subparagraph{Building a turn-based parity game \calH from~$\calG$ and~$\phi$.}
  \label{subsectionSpecifyingcalHdansPreuveSLOneG}
  For the rest of the proof, we fix a game \calG and a \SLOneG formula
  $\phi= (Q_ix_i)_{i\leq l} \beta \varphi$.
  Each~state of~$\calG$ is replaced with a copy of the
  \emph{quantification game}
  depicted on Fig.~\ref{fig-quantgame}.
  A~quantification game~$\calQ_\phi$  is formally defined as follows:
  \begin{itemize}
  \item they involve two players $P_\exists$ and $P_\forall$;
  \item the set of states is $S_{\phi}=\{ \frakm\in \Act^* \mid
    0\leq \size{\frakm} \leq l\}$, thereby defining a tree of
    depth~$l+1$ with directions~$\Act$.  A~state~$\frakm$
    in~$S_{\phi}$ with $0\leq \size{\frakm}<l$ belongs to
    Player~$P_\exists$ if, and only if,
    $Q_{\size{\frakm}+1}=\exists$.
    States with $\size{\frakm}=l$ will have only one outgoing transition.
  \item from each~$\frakm$ with $0\leq \size{\frakm}<l$,
    for all~$a\in\Act$, there is a transition to $\frakm\cdot
    a$.  The~empty word $\epsilon\in S_{\phi}$ is the starting node of
    the quantification game, and has no incoming transitions; states
    with $\size{\frakm}=l$ (currently) have no outgoing
    transitions.
  \end{itemize}
  
  \begin{figure}[b]
    \centering
    \begin{tikzpicture}[scale=1]
      \node at (3.5,1.1){$\phi = \exists x_1.~ \forall x_2.~ \exists x_3.\ \beta\psi$};
      %
      \node [rond6] (root) at (0,1) {$\epsilon$};
      \node [carre5] (a) at (-1.6,0) {$a$};
      \draw [-latex'] (root) to (a);
      \node [carre5] (b) at (1.6,0) {$b$};
      \draw [-latex'] (root) to (b);
      \foreach \s/\a/\n in {a/a/-2.7,a/b/-1,b/a/1,b/b/2.7}
               {\node [draw, rond6] (\s\a) at (\n,-1) {$\vphantom{f}\s\a$};
                \draw [-latex'] (\s) to (\s\a);}
%
        \draw (2.4,.9) -- +(.7,0) node[midway,below=0mm,inner sep=0pt] (x1) {};
        \draw[dashed,opacity=.4,-latex'] (x1) edge[-latex',out=-150,in=0] (root);
        \draw (3.2,.9) -- +(.7,0) node[midway,below=0mm,inner sep=0pt] (x2) {};
        \draw[dashed,opacity=.4,-latex'] (x2) edge[-latex',out=-110,in=10] (b);
        \draw (4,.9) -- +(.7,0) node[midway,below=0mm,inner sep=0pt] (x3) {};
        \draw[dashed,opacity=.4,-latex'] (x3) edge[-latex',out=-90,in=40] (bb);
        \node (a) at (-4.5,-2) {};
        \foreach \s in {aa,ab,ba,bb}
          {\foreach \a in {a,b}
            {\path (a) -- +(1,0) node (a) {}
            node[draw,rounded corners=2mm] (\s\a) {$\vphantom{h}\s\a$};
            \draw[-latex'] (\s) -- (\s\a);
            \draw[dashed,-latex'] (\s\a.-90) -- +(-90:5mm);}}
      \draw[dotted,rounded corners=8mm] (0,-2.4) -- +(-4.8,0) -- +(0,4.3) -- +(4.8,0) -- +(0,0);
    \end{tikzpicture}
    \caption{The quantification game for $\phi = \exists x_1.~ \forall x_2.~ \exists x_3.\ \beta\psi$.}
    \label{fig-quantgame}
  \end{figure}
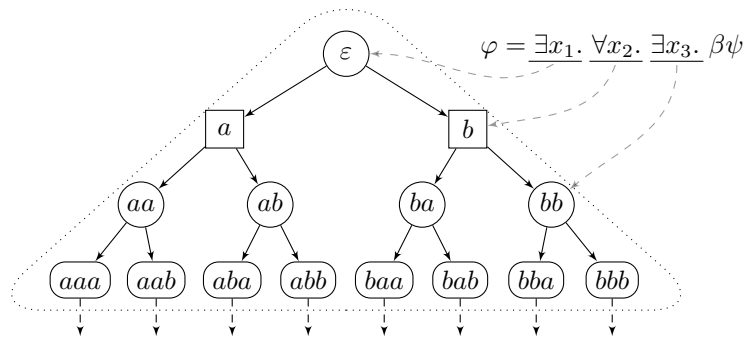
  
  A~leaf (i.e., a state~$\frakm$ with $\size{\frakm}=l$)
  in a quantification game represents a move vector of domain
  $\calV=\{x_i\mid 1\leq i\leq l\}$: 
  we~identify each~leaf~$\frakm$ with the move vector~$\frakm$,
  hence writing~$\frakm(x_i)$ for~$\frakm(i)$.
  
  \smallskip 
  
  We denote by $D$ a deterministic parity automaton over $2^\AP$
  associated with~$\varphi$.  We~write~$d_0$ for the initial
  state of~$D$.  Using quantification games, we can now define the turn-based
  parity game~\calH:
  \begin{itemize}
    \item it involves both players~$P_\exists$ and~$P_\forall$;
    \item  for each state~$q$ of~\calG and each state~$d$ of~$D$,
      \calH~contains a copy of the quantification game~$\calQ_\phi$,
      which we call the \emph{$(q,d)$-copy}. 
      Hence the set of states of~$\calH$ is the product of the state
      spaces of~$\calG$, $D$ and~$\calQ_\phi$.
    \item the transitions in~$\calH$ are of two types: 
    \begin{itemize} 
    \item internal transitions in each copy of the quantification game
      are preserved;
    \item consider a state $(q,d,\frakm)$ where $\size{\frakm}=l$;
      this is a leaf in the quantification game. If there exists a
      state~$q'$ such that $q'=\Delta(q,m_{\beta})$ where $m_{\beta}\colon
      \Agt \to \Act$ is the move vector over $\Agt$ defined by
      $m_{\beta}(A)=\frakm(i-1)$ where $x_i=\beta(A)$ (i.e.,
      assigning to each player~$A\in\Agt$ the action $\frakm(\beta(A))$),
      then we add a transition from~$(q,d,\frakm)$
      to~$(q',d',\epsilon)$ where $d'$ is the state of~$D$ reached
      from~$d$ when reading~$\labels(q')$. Notice that $(q,d,\frakm)$
      then has at most one outgoing transition.
    \end{itemize}
  \item the priorities are inherited from those in~$D$: state~$(q,d,\frakm)$
    has the same priority as~$d$.
  \end{itemize}

  \subparagraph{Correspondence between \calG and \calH.}
  We~define a correspondence between~$\calG$ and~$\calH$ through
  the notion of \emph{lanes}:
  \begin{definition}
    A \emph{lane} in \calG is a tuple $(\rho,u,b,t)$ made of 
    \begin{itemize}
    \item a history $\rho=(q_j)_{0\leq j\leq a}$ (for some integer~$a$);
    \item a function $u\colon \calV \times \Prefx(\rho) \to\Act$;
    \item an integer $b\in [0;l]$;
    \item a function $t\colon \{x_1, ..., x_b\} \to \Act$
      ($t$~is the empty function if~$b=0$);
    \end{itemize}
    and such that 
    \begin{equation}
    \forall 0\leq j<a .\quad
    \Delta(q_j,(m_j(\beta(A)))_{A\in\Agt})=q_{j+1}
    \qquad\text{ with}
    \begin{array}[t]{rcl}
      m_j\colon \calV &\to &\Act\\
      x &\mapsto &u(x,\rho_{\leq j})
    \end{array}
    \label{eq-coherence}
    \end{equation}
  \end{definition}
  
  We can then build a one-to-one application~$\HGp$ between histories 
  in~\calH and lanes in~\calG.
  With a history~$\pi$ in~\calH, written
  \[
  \pi = 
  \Bigl(\prod_{0\leq j< a}\  \prod_{0\leq i\leq l}\ (q_{j},d_{j},\frakm_{j,i})\Bigr)
  \cdot \prod_{0\leq i\leq b} (q_{a},d_{a},\frakm_{a,i}),
  \]
  having length $a\cdot (l+1)+b+1$ with $0\leq b <l$,
  we associate a lane $\HGp(\pi)=((q_j)_{j\leq a}, u, b, t)$
  with
    \begin{alignat*}2
     u\colon \calV\times \Prefx(\rho) \to{} &\Act    
     &\qquad
     t\colon \{x_1,...,x_b\} \to{} & \Act
     \\
     x_i,(q_j)_{j\leq c}\mapsto{} & \frakm_{c,i} \qquad (\forall c<a)
     &
     x_i \mapsto{} & \frakm_{a,i}
    \end{alignat*}
  
    The resulting function~$\HGp$ is clearly injective
    (different histories will correspond to different lanes), 
  but also surjective.
  To~prove the latter statement, we~build the inverse function~$\GHp$:
  for a lane $((q_j)_{j\leq a}, u, b, t)$, we set
  $\GHp((q_j)_{j\leq a}, u, b, t) =\pi$ where $\pi$ is the history
  in~\calH of length $a\cdot (l+1)+ b+1$ defined as
  \[
  \pi=
  \prod_{0\leq j< a}\
  \prod_{0\leq i\leq l} 
  \bigl(
  q_{j},d_j,
  u(x_i,(q_{j'})_{j'\leq j})
  \bigr)
  \cdot\prod_{0\leq i\leq b}
  \bigl(
  q_{a},d_{a},
  t(x_i,(q_j)_{j\leq a})
  \bigr)
  \]
  where $d_j$ is the state of~$D$ reached on input $(q_k)_{0\leq k\leq j-1}$. 

  
  Because of the coherence condition~\eqref{eq-coherence},
  $\GHp((q_j)_{j\leq a}, u, i, t)$ is indeed a history in~$\calH$.
  From the definitions, one can easily check that
  \[ 
  \GHp(\HGp(\pi))=\pi
  \] 
  and deduce that $\GHp$ is the inverse function of~$\HGp$; therefore
  \begin{lemma}
    The application $\HGp$ is a bijection between lanes of~\calG and
    histories in~\calH, and $\GHp$ is its inverse function.
  \end{lemma}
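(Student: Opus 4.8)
The plan is to prove the lemma by checking directly that $\GHp$ is a two-sided inverse of~$\HGp$; bijectivity of both maps then follows at once. The content lies entirely in three routine verifications: (i)~that $\HGp$ sends a history of~$\calH$ to a genuine lane of~$\calG$, (ii)~that $\GHp$ sends a lane of~$\calG$ to a genuine history of~$\calH$, and (iii)~that $\GHp\circ\HGp$ and $\HGp\circ\GHp$ are the respective identities.

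First I would establish the canonical decomposition of a history of~$\calH$ that underlies the definition of~$\HGp$. Within a single $(q,d)$-copy of the quantification game~$\calQ_\phi$ every transition strictly increases $\size{\frakm}$, and the only transitions of~$\calH$ leaving a copy run from a leaf (a node~$\frakm$ with $\size{\frakm}=l$) to the root~$\epsilon$ of another copy; since $\Delta$ is total, every leaf moreover has exactly one outgoing transition. A~straightforward induction on the length of~$\pi$ then shows that every history~$\pi$ of~$\calH$ issued from a root state factors uniquely as a sequence of complete copy-traversals followed by a partial traversal of one further copy; this yields exactly the data $(q_j)_{j\leq a}$, $(d_j)_{j\leq a}$ and $(\frakm_{j,i})$ used to define~$\HGp(\pi)$, with the integers~$a$ and~$b$ uniquely determined.

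Next I would check that $\HGp(\pi)=((q_j)_{j\leq a},u,b,t)$ is a lane. The four components have the correct types by construction, so the only non-routine point is the coherence condition~\eqref{eq-coherence}. But $\Delta(q_j,(m_j(\beta(A)))_{A\in\Agt})=q_{j+1}$ with $m_j\colon x\mapsto u(x,(q_{j'})_{j'\leq j})$ is exactly the guard under which~$\calH$ has an edge from the leaf of the $j$-th copy---which carries the game state~$q_j$ and the move vector~$\frakm_{j,\cdot}$---to the root of the $(j{+}1)$-th copy, and $u(x_i,(q_{j'})_{j'\leq j})=\frakm_{j,i}$ holds by definition of~$u$; hence~\eqref{eq-coherence} is satisfied. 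Symmetrically, for a lane $(\rho,u,b,t)$ I would verify that $\GHp(\rho,u,b,t)$ is a history of~$\calH$: the internal transitions of each copy are respected because consecutive nodes differ by a single appended action, the automaton components~$d_j$ are exactly the states of~$D$ reached on~$\labels$ of the corresponding prefix of~$\rho$ and hence legal, and the inter-copy transitions are legal precisely because $(\rho,u,b,t)$ satisfies~\eqref{eq-coherence}---this is the one place where the coherence condition is used in the reverse direction.

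Finally, the identities $\GHp(\HGp(\pi))=\pi$ and $\HGp(\GHp(\rho,u,b,t))=(\rho,u,b,t)$ follow by unfolding the two definitions, once one observes that in both constructions the sequence $(d_j)_j$ is entirely determined by~$\rho$ via the automaton~$D$, so it is transported faithfully even though it is not recorded in a lane. Hence $\HGp$ and $\GHp$ are mutually inverse, and both are bijections. I~do not expect a genuine obstacle here; the only points requiring care are the uniqueness of the canonical decomposition in the first step, and the index bookkeeping between leaves, variables and move vectors (together with the split between~$u$, defined on strict prefixes of~$\rho$, and~$t$, defined on its last node), which is routine but error-prone.
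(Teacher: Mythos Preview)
Your proposal is correct and follows essentially the same approach as the paper: both argue by constructing~$\GHp$ explicitly and checking that it inverts~$\HGp$. Your treatment is in fact more thorough---the paper simply asserts that $\HGp$ is ``clearly injective'', notes that the coherence condition~\eqref{eq-coherence} ensures $\GHp$ outputs a genuine history, and then states that $\GHp(\HGp(\pi))=\pi$ is easy to check; you spell out the canonical decomposition of histories in~$\calH$, verify that $\HGp$ lands in the set of lanes, and check both composition identities rather than just one.
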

  
  \subparagraph{Extending the correspondence.}
  We can use $\HGp$ to describe another correspondence~$\HG$ between
  (positional) strategies for~$P_\exists$ in~\calH and (elementary) maps
  in~\calG.
  Remember that a map is a function $\theta\colon (\Hist_\calG
  \to \Act)^{\calV^\forall} \to (\Hist_\calG \to \Act)^{\calV}$.
  Remember also that if
  $Q_j=\forall$, then $\theta(w)(x_i)(\rho)=w(x_i)(\rho)$, so that we
  only have to define the map for the existentially-quantified
  variables.
  
  Formally, the application $\HG$ takes as input a 
  strategy~$\delta$ for player~$P_\exists$ in~\calH, and returns a map
  in~$\calG$. It~will enjoy the following properties:
  \begin{itemize}
  \item for any finite outcome~$\pi$ of~$\delta$ in~$\calH$
    ending at the root of a quantification
    game, there exists a function $w$ such that
    $\HGp(\pi)=(\rho,u,0,t_\emptyset)$ where $\rho$~is the outcome of
    $\HG(\delta)(w)$ in~$\calG$ under the assignment defined
    by~$\beta$;
  \item conversely, for any path~$\rho$ in~$\calG$ 
    that
    is an outcome of~$\HG(\delta)(w)$ for some~$w$ and under the
    assignment defined by~$\beta$, then letting
    $u(x,\rho')=\HG(\delta)(w)(x)(\rho')$, we~have that
    $(\rho,u,0,t_\emptyset)$ is a lane in~$\calG$ and
    $\GHp(\rho,u,0,t_\emptyset)$ is an outcome of~$\delta$ in~$\calH$
    ending in the root of a quantification game.
  \end{itemize}

  We~fix~$\delta$, and for all $w$, $\rho$ and~$x_i$,   
  we~define $\HG(\delta)(w)(x_i)(\rho)$ by a double induction, 
  first on the length of the history~$\rho$ in~$\calG$,
  and second on the sequence of variables~$x_i$.
  \begin{itemize}
  \item \textbf{initial step:} we~begin with the case where $\rho$ is
    the single state~$q_0$.
      %
%
    We~proceed by induction on existentially-quantified variables,
    merging the initialization step with the induction step as they
    are similar.  Consider an existentially-quantified variable~$x_i$
    in~$\calV$.  Given $w\colon \calV^\forall \times \Prefx(\rho) \cup
    \{\rho\} \to \Act$, we~define a function $t_{i,w}\colon
    [x_1;x_{i-1}] \to \Act$ such that $t_{i,w}(x)=w(x,q_0)$ for $x\in
    \calV^\forall\cap[x_1;x_{i-1}]$, and
    $t_{i,w}(x)=\HG(\delta)(w)(x)(q_0)$ for
    $x\in\calV^\exists\cap[x_1;x_{i-1}]$, assuming that they have been
    defined in the previous induction steps on variables.
      We can then create the lane $\lane_{i,w}=(\epsilon,u_\emptyset,i-1,t)$
      and define 
      \[ 
      \HG(\delta)(w)(x_i)(q_0)=\delta(\GHp(\lane_{i,w}))
      \]
      %

      Pick an outcome~$\pi$ of~$\delta$ in~$\calH$ of length~$l+2$,
      and write $\frakm$ for its $l+1$-st state: it~defines a
      valuation for the variables in~$\calV$, hence defining a move
      vector~$m_\beta$ under the assignment~$\beta$.
      in~$\Act$. By~construction of~$\calH$, this outcome ends in the
      state $(q_1,d_1,\epsilon)$ where $q_1=\Delta(q_0,m_\beta)$ and
      $d_1$ is the successor of the initial state~$d_0$ of~$D$ when
      reading $\labels(q_1)$. We~now prove that $q_0\cdot q_1$ is the
      outcome of $\HG(\delta)(w)$ for some~$w$.  For this,
      we~let~$w(x_i)=\frakm_i$ for all~$x_i\in\calV^{\forall}$.
      By~construction, $\HG(\delta)(w)(x_j)(q_0)$ precisely
      corresponds to $\frakm(j)$, for all
      $x_j\in\calV^{\exists}$. In~the end, under assignment~$\beta$,
      $\HG(\delta)(w)$ precisely returns the move vector~$m_\beta$,
      hence proving our result.

      The proof of the converse statement follows similar arguments:
      consider an outcome $\rho=q_0\cdot q_1$ of $\HG(\delta)(w)$ for
      some~$w$. The~lane $(\rho,u,0,t_\emptyset)$ defined with 
      $u(x,q_0)=\HG(\delta)(w)(x)(q_0)$ then
      corresponds through $\GHp$ to a play
      ending in $(q_1,d_1,\epsilon)$, 
      and visiting the leaf~$\frakm$ defined as $\frakm_i=u(x_i,q_0)$.
      By~construction, this is an outcome of~$\delta$ in~$\calH$.

    \item \textbf{induction step:} we consider a history~$\rho$ in~$\calG$,
      assuming we have define $\HG(\delta)(w)(x_i)(\rho')$ for all
      prefixes~$\rho'$ of~$\rho$, for all~$w$ and all
      variables~$x_i$.  We~now define
      $\HG(\delta)(w)(x_i)(\rho)$, by induction on the list of
      variables.  Again, the initialization
      step is merged with the induction step as they rely on the same
      arguments.
      %

      Consider an existentially-quantified variable~$x_i$, and
      $w\colon \calV^\forall \times \Prefx(\rho) \cup \{\rho\} \to
      \Act$.  We~define a function $t_{i,w}\colon [x_1;x_{i-1}] \to
      \Act$ where $t_{i,w}$ associate with $x\in\calV^\forall\cap
      [x_1;x_{i-1}]$ the action $w(x)(\pi)$, and with
      $x\in\calV^\exists\cap [x_1;x_{i-1}]$ the action
      $\HG(\delta)(w)(x)(\rho)$.  We~also define $u_w\colon
      \calV\times\Prefx(\rho) \to \Act$ as
      $u_w(x,\rho')=\HG(\delta)(w)(x)(\rho')$, for all
      prefixes~$\rho'$ of~$\rho$.
      We can then create the lane
      $\lane_{i,w}=(\pi,u_w,i-1,t_{i,w})$ and finally define
      \[ 
      \HG(\delta)(w)(x_i)(\rho)=\delta(\GHp(\lane_{i,w})).
      \]
      Using the same arguments as in the initial step, we prove our
      correspondence between the outcomes of~$\delta$ in~$\calH$ and
      the outcomes of~$\HG(\delta)$ in~$\calG$.
  \end{itemize}

  Notice that in the construction above, $\HG(\delta)(w)(x_i)(\rho)$
  may depend on the value of~$w(x_j,\rho')$ for $j>i$ and
  $\rho'\in\Prefx(\rho)$: indeed, in the inductive definition,
  we~define $\HG(\delta)(w)(x_j)(\rho')$ before defining
  $\HG(\delta)(w)(x_i)(\rho)$. Hence in general $\HG(\delta)$ is
  \emph{not} en elementary map.
  
  However, in case $\delta$ is memoryless, we~notice that
  $\HG(\delta)(w)(x_i)(\rho)$ only depends on value of~$\delta$ in the
  last state of the lane $\lane_{i,w}$, hence in particular not
  on~$u_w$. This removes the above dependence, and makes $\HG(\delta)$
  elementary.

  Finally, notice that we~can define a dual correspondence
  $\overline{\HG}$ relating strategies of Player~$P_\forall$ and
  elementary maps in~$\calG$ where existential and universal variables
  are swapped. 

  \subparagraph{Concluding the proof.}
  Using $\HG$, we~prove our final correspondence between~$\calH$ and~$\calG$:
  \begin{restatable}{lemma}{PropositionFromcalHtoCalGDansSLOneGProofConcludingTheProof}
    \label{PropositionWinningcalHdeltaThenWitnesssDansCalGChaPintroDepProv}
    Assume that $P_\exists$ is winning in~\calH and let $\delta$ be a
    positional winning strategy. Then the elementary map~$\HG(\delta)$
    is a witness that $\calG,q_0 \models^{E} \phi$.
    
    Similarly, assume that $P_\forall$ is winning in \calH and let
    $\overline{\delta}$ be a positional winning strategy.  Then the
    elementary map $\overline{\HG}(\overline{\delta})$ is a witness that
    $\calG,q_0 \models^{E} \neg\phi$.
  \end{restatable}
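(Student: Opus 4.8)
The plan is to prove the two symmetric statements by one and the same argument, carried out in detail for the $P_\exists$ case; the $P_\forall$ case then follows by duality, exchanging the roles of $\exists$ and $\forall$, and of ``accepting'' and ``rejecting'' runs in~$\calH$. First I would pin down what is left to prove: since $\phi\in\SLOneGf$ has the form $(Q_ix_i)_{i\leq l}\beta\varphi$ with a single goal, and since $\delta$ is positional, we already know (from the observation made just before the statement) that $\HG(\delta)$ is an \emph{elementary} map; so the only remaining task is to show that it is a \emph{witness} of $\calG,q_0\models^{E}\phi$, i.e. that for every $w\in\Strat^{\calV^\forall}$ the valuation $\chi=\HG(\delta)(w)$, completed through the assignment~$\beta$, produces an outcome of~$\calG$ from~$q_0$ satisfying~$\varphi$. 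So I would fix an arbitrary such~$w$, set $\chi'=\chi[A\in\Agt\mapsto\chi(\beta(A))]$, and let $\rho=\out(q_0,\chi')=(q_j)_{j\in\bbN}$.

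The core step is to lift the finite correspondence carried by~$\HG$ to this infinite play. Applying the two stated properties of~$\HG$ to the successive prefixes $q_0,\ q_0q_1,\ q_0q_1q_2,\dots$ of~$\rho$ yields an infinite play~$\pi$ in~$\calH$ whose successive ``root'' states are $(q_0,d_0,\epsilon),(q_1,d_1,\epsilon),\dots$, where $d_0d_1d_2\cdots$ is the corresponding run of the deterministic parity automaton~$D$ along~$\rho$, and such that at every $P_\exists$-controlled state of~$\pi$ the outgoing action is the one chosen by~$\delta$; this last point is exactly what the nested induction defining~$\HG$ was set up to guarantee, since the action taken there is $\HG(\delta)(w)(x_i)(\rho_{\leq j})=\delta(\GHp(\lane_{i,w}))$ while $\GHp(\lane_{i,w})$ is precisely the prefix of~$\pi$ ending at that node. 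Hence $\pi$ is an outcome of~$\delta$ in~$\calH$; since $\delta$ is winning, $\pi$ satisfies the parity condition of~$\calH$. By construction, the priorities occurring along~$\pi$ are exactly those of the automaton states $d_0,d_1,d_2,\dots$, each appearing a fixed positive number of consecutive times (once per state of each traversed quantification-game copy), so $\pi$ meets the parity condition if and only if the run $d_0d_1d_2\cdots$ is accepting, if and only if $\rho\models\varphi$. Therefore $\calG,q_0\models_\chi\beta\varphi$, and as $w$ was arbitrary $\HG(\delta)$ is a witness of $\calG,q_0\models^{E}\phi$. The dual claim follows by the same reasoning applied to $\overline{\HG}$, $\overline\delta$ and $\neg\phi=(\overline Q_ix_i)_{i\leq l}\,\beta\,\neg\varphi$ (whose universally-quantified variables are those in $\calV\setminus\calV^\forall$): any outcome of $\overline{\HG}(\overline\delta)(\overline w)$ corresponds to an outcome of~$\overline\delta$, which therefore \emph{violates} the parity condition, so $D$ rejects the associated run and the outcome satisfies~$\neg\varphi$.

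I expect the middle step to be the main obstacle: rigorously lifting the prefix-by-prefix correspondence to the infinite play, and checking that the resulting~$\pi$ is consistent with the positional strategy~$\delta$ at \emph{every} $P_\exists$ node. Concretely this requires unwinding the double induction (on the length of the history, then on the list of variables) in the definition of~$\HG$, and matching node by node the actions output by $\HG(\delta)(w)$ against the moves $\delta$ makes on the corresponding $\calH$-history; positionality of~$\delta$ is precisely what makes these coincide, even though $\HG$ produces non-elementary maps in general. The remaining ingredients — relating $\calH$'s parity condition to acceptance by~$D$, and transferring to the dual statement by swapping quantifiers — are routine.
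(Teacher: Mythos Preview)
Your proposal is correct and follows essentially the same approach as the paper: both exploit the correspondence (via $\HGp/\GHp$) between outcomes of $\HG(\delta)(w)$ in~$\calG$ and outcomes of~$\delta$ in~$\calH$, then use that $\delta$ is winning to conclude the parity (hence~$\varphi$) holds along the corresponding play. The only cosmetic difference is that the paper argues by contradiction---taking a putative bad~$w_0$ and building from it an explicit $P_\forall$-strategy~$\overline{\delta}$ that would beat~$\delta$---whereas you argue directly, lifting the prefix correspondence to an infinite $\delta$-outcome~$\pi$; the underlying content is identical.
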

  \begin{proof}
    We prove the first point, the second one following similar
    arguments.  Assume that $P_\exists$ is winning in~\calH, and pick
    a memoryless winning strategy~$\delta$.  Toward a contradiction,
    assume further that $\HG(\delta)$ is not a witness of $\calG,q_0
    \models^E \phi$. Then
    there exists $w_0\colon \calV^\forall \to (\Hist_\calG \to \Act)$ s.t.
    \(
    \calG, q_0 \not\models_{\HG(\delta)(w_0)} \beta\phi
    \).
    We use $w_0$ to build a strategy $\overline{\delta}$ for
    Player~$P_\forall$ in~\calH.
    Given a history 
    \[ 
    \pi =
    \prod_{0\leq j< a}\ \prod_{0\leq i\leq l} (q_{j},d_{j},\frakm_{j,i})
    \cdot\prod_{0\leq i\leq b} (q_{a},d_{a},\frakm_{a,i})
    \]
    in~$\calH$,
    we define $\rho=\prod_{0\leq j\leq  a} q_{j}$ and set 
    $  \overline{\delta}(\pi) = \HG(\delta)(w)(x_b)(\eta) $
    where
    \begin{itemize}
    \item $w\colon \Prefx(\rho) \cup \{\rho\} \times (\calV^\forall
      \cap [x_1;x_b]) \to \Act$ is such that $w(\rho',x_i)$ is the
      action to be played for going from $\pi_{\leq \size{\rho'}\cdot
        (l+1)+i-1}$ to $\pi_{\leq \size{\rho'}\cdot (l+1)+i}$
      in~$\calH$;
    \item $\eta = \prod_{0\leq j< a}\ \prod_{0\leq i\leq l}
        (q_{j},d_{j},\frakm_{j,i}))$.
    \end{itemize}

    Write $\nu=(q_j)_{j\in\bbN}$ for the outcome of $\theta(w_0)$
    under strategy assignment~$\beta$ in~\calG. Then, by construction
    of $\overline{\delta}$, the~outcome of~$\delta$
    and~$\overline{\delta}$ in~\calH will visit the
    $(q_j,d_j)_{j\in\bbN}$-copies of the quantification game, where
    $d_j$ is the state reached by reading $(q_{j'})_{j'\leq j}$ in the
    deterministic automaton~$D$.
    Now,
    since $\calG, q_0 \not\models_{\HG(\delta)(w_0)} \beta\phi$,
    we get that $\nu$ does not satisfy $\phi$ and therefore the
    outcome of $\delta$ and $\overline{\delta}$ in~$\calH$ does not satisfy the
    parity condition. This is in contradiction with $\delta$ being the
    winning strategy of $P_\exists$, and proves that $\HG(\delta)$ must
    be a witness that $\calG,q_0 \models^E\phi$.
  \end{proof}
  
    Proposition~\ref{PropositionWinningcalHdeltaThenWitnesssDansCalGChaPintroDepProv},
    together with the determinacy of parity games~\cite{EJ91,Mos91}
    immediately imply that at least one of $\phi$ and~$\neg\phi$ must
  hold in~$\calG$ for~$\models^{E}$. This concludes our proof.
  %
\end{proof}

\subsection{Proof of Proposition~\ref{prop-CEdiff}}
\label{app-prop6}
\PropSLCGcodeAIAHIFHEFE*

\begin{proof}
The proof of
  Prop.~\ref{Chapitre3Framework::Prop::NeitherHold::31MArs2017}
  provides an example of a game and a formula in~\SLCGf where
  $\models^C$ and $\models^E$ differ. We~prove the result for~\SLDGf:
  consider again the game of
  Fig.~\ref{jhvgjjgjh} in Section~\ref{se::Definitions}.
%
  We~already proved that $\calG,q_0 \models^{C}\phi$; we~show that
  $\calG,q_0\not\models^E \phi$. For this, consider the following four
  valuations for the universally-quantified strategies:
  \begin{xalignat*}3
    w_1(x_A)(q_0) &= q_1 & w_1(x_B)(q_0) &= q_2 & w_1(y)(q_0\cdot q_1) & = p_2 \\
    w_2(x_A)(q_0) &= q_2 & w_2(x_B)(q_0) &= q_1 & w_2(y)(q_0\cdot q_1) & = p_1
  \end{xalignat*}
  (assuming that they coincide in any other situation).
  Let $\theta$ be an elementary $\phi$-map: then it must be such that
  $\theta(w_1)(z)(q_0\cdot q_2)=\theta(w_2)(z)(q_0\cdot q_2)$. Then:
  \begin{itemize}
  \item if $\theta(w_1)(z)(q_0\cdot q_2)=\theta(w_2)(z)(q_0\cdot
    q_2)=p_1$, then the first goal goes to~$p_2$ via~$q_1$, and the
    second goal goes to~$p_1$ via~$q_2$. None of those goals is fulfilled;
  \item if $\theta(w_1)(z)(q_0\cdot q_2)=\theta(w_2)(z)(q_0\cdot
    q_2)=p_2$, then the first goal goes to~$p_2$ via~$q_2$, and the
    second goal goes to~$p_1$ via~$q_1$. Again, both goals are missed.\popQED
  \end{itemize}
%
\end{proof}

\section{Proofs of Section~\ref{sec-timeline}}
\label{app-sec4}

\subsection{Proof of Proposition~\ref{PropNegationsAndCo::DeuxTroisAvril::ccr}}
\label{app-prop8}
\ThmNegationsAndCoDeuxTroisAvrilOne*
\label{app-Prop8}

\begin{proof}
  
  For a contradiction, assume that there exist two maps~$\theta$
  and~$\bar\theta$ witnessing $\calG,q_0 \models^{T} \phi$ and
  $\calG,q_0 \models^{T} \neg \phi$ resp.  Then
  \begin{alignat}1
    \forall w\colon \calV^\forall \to (\Hist \to \Act).\quad
     &\calG,q_0\models_{\theta(w)}
    \xi(\beta_j\varphi_j)_{j\leq n} \label{FormulaElemnegationpreuveun}
    \\
    \forall \overline{w}\colon \calV^\exists \to (\Hist \to \Act).\quad
    &\calG,q_0\models_{\overline{\theta}(\overline{w})} 
    \neg\xi(\beta_j\varphi_j)_{j\leq n} \label{FormulaElemnegationpreuvedeux}
  \end{alignat}

  From~$\theta$ and $\bar\theta$, we build a strategy valuation~$\chi$
  on~$\calV$ such that $\theta(\chi_{|\calV^{\forall}})
  = \bar\theta(\chi_{|\calV^{\exists}})=\chi$.
  By~Equations~\eqref{FormulaElemnegationpreuveun}
  and~\eqref{FormulaElemnegationpreuvedeux}, we~get that 
  $\calG,q_0\models_{\chi} \xi(\beta_j\varphi_j)_{j\leq n}$ and 
  $\calG,q_0\models_{\chi}\neg\xi(\beta_j\varphi_j)_{j\leq n}$,
  which for \LTL formulas is impossible.

  We~define~$\chi(x)(\rho)$ inductively on histories and on the list
  of quantified variables. When $\rho$ is the empty history~$q_0$, we
  consider two cases:
  \begin{itemize}
  \item if~$x_1\in\calV^\forall$, then $\bar\theta(\bar w)(x_1)(q_0)$ does
    not depend on~$\bar w$ at all, since $\bar\theta$ is a
    timeline-map. Hence we~let
    $\chi(x_1)(q_0)=\bar\theta(\bar w)(x_1)(q_0)$, for any~$\bar w$.
  \item similarly, if $x_1\in\calV^\exists$, we~let
    $\chi(x_1)(q_0)=\theta(w)(x_1)(q_0)$, which again does not depend
    on~$w$.
  \end{itemize}
  Similarly, when $\chi(x)(q_0)$ has been defined for
  all~$x\in\{x_1,...,x_{i-1}\}$, we~again consider two cases:
  \begin{itemize}
  \item if~$x_i\in\calV^\forall$, we~define $\bar
    w(x_j)(q_0)=\chi(x_j)(q_0)$ for
    all~$x_j\in\calV^\exists\cap\{x_1,...,x_{i-1}\}$, and let
    $\chi(x_i)(q_0)=\bar\theta(\bar w)(x_i)(q_0)$, which again does
    not depend on the value of~$\bar w$ besides those defined above;
  \item symmetrically, if~$x_i\in\calV^\exists$, we~define
    $w(x_j)(q_0)=\chi(x_j)(q_0)$ for
    all~$x_j\in\calV^\forall\cap\{x_1,...,x_{i-1}\}$, and let
    $\chi(x_i)(q_0)=\theta(w)(x_i)(q_0)$.
  \end{itemize}
  Notice that this indeed enforces that
  $\theta(\chi_{|\calV^\forall})(x_i)(q_0)=\chi(x_i)(q_0)$ when~$x_i\in\calV^\exists$,
  and 
  $\bar\theta(\chi_{|\calV^\exists})(x_i)(q_0)=\chi(x_i)(q_0)$ when~$x_i\in\calV^\forall$.
  
  \smallskip
  The induction step is proven similarly: consider a history~$\rho$
  and a variable~$x_i$, assuming that $\chi$ has been defined for all
  variables on all prefixes of~$\rho$, and for variables
  in~$\{x_1,...,x_{i-1}\}$ on~$\rho$~itself. Then:
  \begin{itemize}
  \item if~$x_i\in\calV^\forall$, we~define
    $\bar w(x_j)(\rho')=\chi(x_j)(\rho')$ for
    all~$x_j\in\calV$ and all~$\rho'\in\Prefx(\rho)$, and
    $\bar w(x_j)(\rho)=\chi(x_j)(\rho)$ for
    all~$x_j\in\calV^\exists\cap\{x_1,...,x_{i-1}\}$.
    We~then let
    $\chi(x_i)(\rho)=\bar\theta(\bar w)(x_i)(q_0)$, which does
    not depend on the value of~$\bar w$ besides those defined above;
  \item the construction for the case when $x_i\in\calV^\exists$ is
    similar.
  \end{itemize}
  As in the initial step, it~is easy to check that this construction enforces 
    $\theta(\chi_{|\calV^{\forall}})
  = \bar\theta(\chi_{|\calV^{\exists}})=\chi$, as required.
\end{proof}

\subsection{Proof of Proposition~\ref{PropNegationsAndCo::DeuxTroisAvril::ccr:::deux}}
\label{app-prop9}
\ThmNegationsAndCoDeuxTroisAvrilDeux*

\begin{proof}
  Consider the turn-based
  game~$\calG$ and the \SLBGf formula~$\phi$ of
  Fig.~\ref{FigureProblemWideNegationGame}.
\begin{figure}[t]
  \centering 
  \begin{tikzpicture}[scale=1]
    \node [carre5, draw=\FigColB, fill=\FigColB!40!white] (a) at (0,0) {$q_0$};
    \node [rond6, draw=\FigColA, fill=\FigColA!40!white] (b) at (-1,-0.7) {$q_1$}; 
    \node [rond6, draw=\FigColA, fill=\FigColA!40!white] (c) at (1,-0.7) {$q_2$};
    \node (p1) at (-1.5,-1.5) {$p_1$}; \node (p2) at (-0.5,-1.5) {$p_2$}; 
    \node (p3) at (0.5,-1.5) {$p_3$}; \node (p4) at (1.5,-1.5) {$p_4$};
    \draw [-latex'] (a) to (b);\draw [-latex'] (a) to (c);
    \draw [-latex'] (b) to (p1);\draw [-latex'] (b) to (p2);\draw [-latex', bend left=20] (b) to (p3);
    \draw [-latex'] (c) to (p4);\draw [-latex', bend right=20] (c) to (p2);\draw [-latex'] (c) to (p3);
\path (6.9,-.5) node{%
  \begin{minipage}{.8\linewidth}
  \begin{alignat*}1
    \phi =
    \forall x_1.\exists y_1.\exists y_2.\exists x_2.\ \bigwedge
    \begin{cases}
      \assign{\BoxFill[\FigColB]{}\mapsto y_1 ; \CircleFill[\FigColA]{}\mapsto x_1} \F p_2\\\qquad\Rightarrow 
      \assign{\BoxFill[\FigColB]{}\mapsto y_2 ; \CircleFill[\FigColA]{}\mapsto x_2} \F p_1
      \\[2mm]
      \assign{\BoxFill[\FigColB]{}\mapsto y_1 ; \CircleFill[\FigColA]{}\mapsto x_1} \F p_3\\\qquad\Rightarrow 
      \assign{\BoxFill[\FigColB]{}\mapsto y_2 ; \CircleFill[\FigColA]{}\mapsto x_2} \F p_4 
    \end{cases}
  \end{alignat*}
  \end{minipage}};
  \end{tikzpicture}
  \caption{A game \protect\calG and a formula~$\phi$ such that $\calG,q_0 \models^{T}\phi$ and
  $\calG,q_0 \models^{T}\neg\phi$}
  \label{FigureProblemWideNegationGame}
\end{figure}
First, $\calG,q_0 \models^{T}\phi$, since for any choice of~$x_1$
and~$y_1$, one~of the goals holds vacuously, and the other one can be made true
by correctly selecting~$y_2$ and~$x_2$.
We~now prove that $\calG,q_0 \models^{T}\neg\phi$: since timeline
dependences are allowed, $\theta(w)(x_1)(q_0\cdot q_1)$ and
$\theta(w)(x_1)(q_0\cdot q_2)$ may depend on the values
of~$w(y_1)(q_0)$ and~$w(y_2)(q_0)$.  We~thus consider four cases:
\begin{itemize}
\item if $w(y_1)(q_0)=w(y_2)(q_0)=q_1$, then we~let
  $\theta(w)(x_1)(q_0\cdot q_1)=p_3$; then the second goal
  of~$\phi$ is not fulfilled, whatever~$w(x_2)$;
\item if $w(y_1)(q_0)=w(y_2)(q_0)=q_2$, then symmetrically, we~let
  $\theta(w)(x_1)(q_0\cdot q_2)=p_2$, so that the first goal of~$\phi$
  fails to hold for any~$w(x_2)$;
\item if $w(y_1)(q_0)=q_1$ and $w(y_2)(q_0)=q_2$, then we~let $\theta(w)(x_1)(q_0\cdot q_1)=p_2$, and again the first goal holds, whatever~$w(x_2)$;
\item if $w(y_1)(q_0)=q_2$ and $w(y_2)(q_0)=q_1$, then we~let $\theta(w)(x_1)(q_0\cdot q_2)=p_3$, and again the second goal fails to hold independently
  of~$w(x_2)$.
  \popQED
\end{itemize}
  %
\end{proof}

\section{Proofs of Section~\ref{sec-sleg}}

\subsection{Proof of Proposition~\ref{prop-egag}}
\egag*

\begin{proof}
Remember that boolean combinations in~\SLAGf follow the grammar
\(
\xi\coloncolonequals \xi\ou \beta \mid \xi\et\beta \mid\beta
\).
In terms of subsets of~$\ZO^n$, it~corresponds to considering sets
defined in one of the following two forms:
\begin{xalignat*}1
F^n_\xi & =\{f\in\ZO^n\mid f(n)=1\}\cup\{g\in\ZO^n\mid
g_{|[1;n-1]}\in F^{n-1}_{\xi'}\}\\
F^n_\xi&=\{f\in\ZO^n \mid
f(n)=1 \text{ and } f_{|[1;n-1]}\in F^{n-1}_{\xi'}\}
\end{xalignat*}
depending
whether $\xi(p_j)_j=\xi'(p_j)_j\vee p_n$ or
$\xi(p_j)_j=\xi'(p_j)_j\wedge p_n$. Assuming (by~induction) that
$F^{n-1}_{\xi'}$ is \semist, then we can prove that $F^n_\xi$ also~is.
We~detail the proof for the second case, the first case being similar.

Consider the case where $F^n_\xi=\{f\in\ZO^n \mid f(n)=1 \text{ and }
f_{|[1;n-1]}\in F^{n-1}_{\xi'}\}$.  Pick any two elements~$f$ and~$g$
in~$F^n_\xi$, and $s\in\{0,1\}^n$. Since $f(n)=g(n)=1$, we~have
$[(f\inter s)\union(g\inter\bar s)](n)= [(f\inter \bar
  s)\union(g\inter s)](n)=1$. Moreover, the restriction of $[(f\inter
  s)\union(g\inter\bar s)]$ and of $[(f\inter \bar s)\union(g\inter
  s)]$ to their first $n-1$ bits is computed from the restriction of
$f$, $g$ and~$s$ to their first $n-1$ bits. Since $F^{n-1}_{\xi'}$ is
\semist, one of $[(f\inter s)\union(g\inter\bar s)]_{[1;n-1]}$ and
$[(f\inter \bar s)\union(g\inter s)]_{[1;n-1]}$ belongs
to~$F^{n-1}_{\xi'}$, so that one of $[(f\inter s)\union(g\inter\bar
  s)]$ and $[(f\inter \bar s)\union(g\inter s)]$ is in~$F^n_{\xi}$.

\smallskip
That the inclusion is strict is proven by considering the \semist set
$H^3=\{\tuple{1,1,1},\penalty0\tuple{1,1,0},\penalty0
\tuple{1,0,1},\penalty0\relax \tuple{0,1,1}\}$.  Assume that it
corresponds to a formula in~\SLAGf: then the boolean combination
$\xi(x_1,x_2,x_3)$ of that formula must be in one of the following
forms:
\begin{xalignat*}4
\xi'(x_1,x_2) \et x_3 && 
\xi'(x_1,x_2) \ou x_3 && 
\xi'(x_1,x_2) \et \non x_3 && 
\xi'(x_1,x_2) \ou\non x_3.
\end{xalignat*}
It remains to prove that none of these cases corresponds to~$H^3$: the
first case does not allow $\tuple{1,1,0}$; the second case allows
$\tuple{0,0,1}$; the third case does not allow $\tuple{1,0,1}$; the
last case allows $\tuple{0,0,0}$.
\end{proof}

\subsection{Proof of Lemma~\ref{FLipLemmaPropiSLEG::dzal}}
The proof of Lemma~\ref{FLipLemmaPropiSLEG::dzal} will make use of the
following intermediary results. The first lemma shows that \SLEGf is
closed under (syntactic) negation.

\begin{lemma}\label{lemma-compl}
  $F^n$ is \semist if, and only~if, its complement~is.
\end{lemma}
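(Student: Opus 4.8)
\noindent\emph{Plan.}
The plan is to establish only the implication: for an arbitrary $F^n\subseteq\ZO^n$, if $F^n$ is \semist then so is $\overline{F^n}$. Since $\overline{\overline{F^n}}=F^n$, applying this implication to $\overline{F^n}$ immediately yields the converse, hence the equivalence.

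Assume then that $F^n$ is \semist, and fix $f,g\in\overline{F^n}$ and $s\in\ZO^n$; set $h_1=(f\inter s)\union(g\inter\bar s)$ and $h_2=(g\inter s)\union(f\inter\bar s)$. The goal is to show that one of $h_1,h_2$ lies in $\overline{F^n}$. The computational heart of the argument is a coordinate-wise description of $h_1$ and $h_2$, read off directly from the $\min$/$\max$ definitions of $\inter$ and $\union$: on an index $i$ with $s(i)=1$ one has $h_1(i)=f(i)$ and $h_2(i)=g(i)$, while on an index $i$ with $s(i)=0$ one has $h_1(i)=g(i)$ and $h_2(i)=f(i)$. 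From this one obtains the ``involution'' identity that the proof hinges on: recombining $h_1$ and $h_2$ along the \emph{same} selector $s$ recovers the original pair, i.e.
\[
(h_1\inter s)\union(h_2\inter\bar s)=f
\qquad\text{and}\qquad
(h_2\inter s)\union(h_1\inter\bar s)=g .
\]

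Now I would argue by contradiction: suppose neither $h_1$ nor $h_2$ is in $\overline{F^n}$, so that $h_1,h_2\in F^n$. Applying the \semist property of $F^n$ to the pair $h_1,h_2$ with the selector~$s$ gives that at least one of $(h_1\inter s)\union(h_2\inter\bar s)$ and $(h_2\inter s)\union(h_1\inter\bar s)$ belongs to $F^n$; by the identity above this says $f\in F^n$ or $g\in F^n$, contradicting $f,g\in\overline{F^n}$. Hence one of $h_1,h_2$ lies in $\overline{F^n}$, which is exactly the \semist condition for $\overline{F^n}$.

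I do not anticipate a genuine obstacle here: the lemma essentially reduces to the observation that the crossover operation $(f,g,s)\mapsto\bigl((f\inter s)\union(g\inter\bar s),\,(g\inter s)\union(f\inter\bar s)\bigr)$ is its own inverse, after which the statement is self-dual under complementation. The only point needing a little care in the write-up is to verify the displayed identity cleanly from the definitions of $\inter$, $\union$ and the bit-flip $f\mapsto\bar f$, which amounts to a single case split on the value of $s(i)$.
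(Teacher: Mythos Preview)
Your proof is correct and uses essentially the same idea as the paper: both arguments rest on the involution identity $(h_1\inter s)\union(h_2\inter\bar s)=f$ and $(h_2\inter s)\union(h_1\inter\bar s)=g$ for the crossover pair. The only cosmetic difference is that the paper phrases it as the contrapositive (if $F^n$ is not \semist, exhibit a witness that $\overline{F^n}$ is not \semist), whereas you prove the direct implication with an inner contradiction; the content is identical.
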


\begin{proof}
Assume $F^n$ is not \semist, and pick $f$ and~$g$ in~$F^n$ and
$s\in\ZO^n$ such that none of $\alpha=(f\inter s)\union (g\inter \bar
s)$ and $\gamma=(g\inter s)\union (f\inter \bar s)$ are
in~$F^n$. It~cannot be the case that $g=f$, as this would imply
$\alpha=f\in F^n$. Hence $\alpha\not=\gamma$. We~claim that $\alpha$
and $\gamma$ are our witnesses for showing that the complement
of~$F^n$ is not \semist: both of them belong to the complement
of~$F^n$, and $(\alpha\inter s)\union(\gamma\inter \bar s)$ can be
seen to equal~$f$, hence it is not in the complement
of~$F^n$. Similarly for $(\gamma\inter s)\union(\alpha\inter\bar
s)=g$.
\end{proof}

\begin{lemma}\label{lemma-H}
  If~$F^n\subseteq \ZO^n$ is semi-stable, then for any $s\in\ZO^n$ and
  any non-empty subset~$H^n$ of $F^n$, it~holds that
  \[
  \exists f\in H^n.\ 
  \forall g\in H^n.\
  (f\inter s) \union (g\inter \bar s) \in F^n.
  \]
\end{lemma}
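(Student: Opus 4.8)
The plan is to route the argument through the quasi-order $\preceq_s$ introduced just before Lemma~\ref{lemmaOrderingFsh}. Recall that $\mathbb{F}^n(h,s)=\{h'\in\ZO^n\mid (h\inter s)\union(h'\inter\bar s)\in F^n\}$, and that, $F^n$ being semi-stable, Lemma~\ref{lemmaOrderingFsh} tells us that the family $(\mathbb{F}^n(h,s))_{h\in\ZO^n}$ is totally ordered by inclusion. The key observation is that the conclusion we are after, $\exists f\in H^n.\ \forall g\in H^n.\ (f\inter s)\union(g\inter\bar s)\in F^n$, is precisely the assertion that some $f\in H^n$ satisfies $H^n\subseteq\mathbb{F}^n(f,s)$.

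First I would note that $\{\mathbb{F}^n(h,s)\mid h\in H^n\}$ is a finite non-empty collection of sets which, by Lemma~\ref{lemmaOrderingFsh}, is a chain for $\subseteq$; hence it admits a greatest element, i.e. there exists $f\in H^n$ (a $\preceq_s$-maximal element of $H^n$) such that $\mathbb{F}^n(g,s)\subseteq\mathbb{F}^n(f,s)$ for every $g\in H^n$. This is the only step that uses semi-stability (through Lemma~\ref{lemmaOrderingFsh}), and it uses finiteness of $H^n$, which holds because $H^n\subseteq\ZO^n$.

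Then I would verify that this $f$ works. Fix an arbitrary $g\in H^n$; since $H^n\subseteq F^n$, we have $g\in F^n$, and as $(g\inter s)\union(g\inter\bar s)=g$, this gives $g\in\mathbb{F}^n(g,s)$. By the choice of $f$, $\mathbb{F}^n(g,s)\subseteq\mathbb{F}^n(f,s)$, so $g\in\mathbb{F}^n(f,s)$, i.e. $(f\inter s)\union(g\inter\bar s)\in F^n$. Since $g$ was arbitrary, the claim follows.

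I do not foresee a genuine obstacle; the only care needed is to invoke Lemma~\ref{lemmaOrderingFsh} to get an actual chain (so that a maximum exists on the finite set $H^n$), plus the trivial identity $(g\inter s)\union(g\inter\bar s)=g$. A self-contained alternative would be induction on $|H^n|$, but the inductive step --- merging a witness for $H^n\setminus\{g_0\}$ with $g_0$ via semi-stability --- only yields one of the two mixes in $F^n$, and in the ``wrong'' case one cannot conclude without essentially re-deriving the comparability statement of Lemma~\ref{lemmaOrderingFsh}; so going through that lemma is the cleaner route.
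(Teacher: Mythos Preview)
Your argument is correct and is not circular: Lemma~\ref{lemmaOrderingFsh} is proved directly from the definition of semi-stability and does not rely on Lemma~\ref{lemma-H}, so you may invoke it here. The identity $(g\inter s)\union(g\inter\bar s)=g$ and the finiteness of~$H^n$ are used exactly where needed, and the maximal element of the finite chain $\{\mathbb{F}^n(h,s)\mid h\in H^n\}$ does the job.

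The paper takes a different route. It argues by contradiction: if no~$f$ works, the ``bad companion'' map $f\mapsto g$ yields a minimal cycle $f_1,\dots,f_\lambda$ with $(f_i\inter s)\union(f_{i+1}\inter\bar s)\notin F^n$ (indices mod~$\lambda$); it then applies semi-stability of the \emph{complement}~$\overline{F^n}$ (Lemma~\ref{lemma-compl}) to two consecutive ``bad'' elements to shorten the cycle, contradicting minimality. So the paper's proof rests on Lemma~\ref{lemma-compl} and a combinatorial cycle-shortening, whereas yours rests on Lemma~\ref{lemmaOrderingFsh} and extracts a $\preceq_s$-maximum. Your approach is arguably more conceptual---it makes explicit that the lemma is just ``a finite subset of a total quasi-order has a top element, and every $g\in H^n$ lies in its own $\mathbb{F}^n(g,s)$''---while the paper's proof is more self-contained at that point of the appendix and, incidentally, exercises Lemma~\ref{lemma-compl}, which is not otherwise used.
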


\begin{proof}
  For a contradiction, assume that there exist $s\in\ZO^n$ and $H^n
  \subseteq F^n$ such that, for any $f\in H^n$, there is an element
  $g\in H^n$ for which ${(f\inter s) \union (g\inter \bar s) \notin
    F^n}$. Then there must exist a minimal integer $2\leq \lambda\leq
  |H^n|$ and $\lambda$~elements~$\{f_i \mid 1\leq i\leq \lambda\}$
  of~$H^n$ such that
  \[
  \forall 1\leq i \leq \lambda -1 \ (f_i\inter s) \union
  (f_{i+1}\inter \bar s) \not\in F^n \text{ and } (f_\lambda\inter
  s) \union (f_1\inter \bar s) \not\in F^n.
  \]
  By~Lemma~\ref{lemma-compl}, the complement of~$F^n$ is
  semi-stable. Hence, considering $(f_{\lambda-1}\inter
  s)\union(f_{\lambda}\inter\bar s)$ and $(f_\lambda\inter
  s)\union(f_1\inter\bar s)$, one of the following two vectors is not
  in~$F^n$:
  \begin{alignat*}1
        \bigl([(f_{\lambda-1}\inter s)\union(f_\lambda\inter\bar
          s)]\inter s\bigr) \union \bigl([(f_\lambda\inter
          s)\union(f_1\inter\bar s)]\inter \bar s\bigr)
        \\ \bigl([(f_\lambda\inter s)\union(f_1\inter\bar s)]\inter
        s\bigr) \union \bigl([(f_{\lambda-1}\inter
          s)\union(f_\lambda\inter\bar s)]\inter \bar s\bigr)
  \end{alignat*}
  The second expression equals~$f_\lambda$, which is in~$F^n$. Hence
  we get that $(f_{\lambda-1}\inter s) \union (f_1\inter \bar s)$ is
  not in~$F^n$, contradicting minimality of~$\lambda$.
\end{proof}

\dkfzefaaaaze*

\begin{proof}
We~begin with the first statement.  Assume that $F^n$ is semi-stable,
and take $f'=\flip_b(f)$ and $g'=\flip_b(g)$ in~$\flip_b(F^n)$, and
$s\in\ZO^n$. Then
\begin{alignat*}1
  (f'\inter s) \union (g'\inter\bar s) &= 
  \left(
  ((f\inter b) \union (\bar f \inter \bar b))\inter s
  \right) \union \left(
  ((g\inter b) \union (\bar g \inter \bar b))\inter \bar s
  \right)  \\
  &=
  \left(
  ((f\inter s) \union (g \inter \bar s))\inter b
  \right) \union \left(
  ((\bar f\inter s) \union (\bar g \inter \bar s))\inter \bar b
  \right)
\end{alignat*}
Write $\alpha=(f\inter s) \union (g \inter \bar s)$ and $\beta = (\bar
f\inter s) \union (\bar g \inter \bar s)$. One can easily check that
$\beta=\bar\alpha$. We~then have
\begin{alignat}1
  (f'\inter s) \union (g'\inter\bar s) &= 
  \left(
  \alpha\inter b
  \right) \union \left(
  \bar\alpha \inter \bar b
  \right)\notag \\
  &= \flip_b(\alpha).\label{eq-flip}
\end{alignat}
This computation being valid for any~$f$ and~$g$, we~also have
\begin{alignat}1
  (g'\inter s) \union (f'\inter\bar s) &= 
  \left(
  \gamma \inter b
  \right) \union \left(
  \bar\gamma \inter \bar b
  \right) \notag  \\
  &= \flip_b(\gamma) \label{eq-flip::Bis::AjoutFevrier}
\end{alignat}
with $\gamma=(g\inter s) \union (f \inter \bar s)$. By~hypothesis, at
least one of~$\alpha$ and~$\gamma$ belongs to~$F^n$, so that also at
least one of $(f'\inter s) \union (g'\inter\bar s)$ and $(g'\inter s)
\union (f'\inter\bar s)$ belongs to $\flip_b(F^n)$.

\medskip

    The second statement of Lemma~\ref{FLipLemmaPropiSLEG::dzal}
    trivially holds for $F^n=\emptyset$; thus in the following, we
    assume $F^n$ to be non-empty.  For~$1\leq i\leq n$,
    let~$s_i\in\ZO^n$ be the vector such that $s_i(j)=1$ if, and
    only~if, $j=i$. Applying Lemma~\ref{lemma-H}, we~get that for
    any~$i$, there exists some~$f_i\in F^n$ such that for any~$f\in
    F^n$, it~holds
    \begin{equation}
      (f_i\inter s_i)\union (f\inter \bar s_i) \in F^n. 
      \label{eq-i}
    \end{equation}
    We fix such a family $(f_i)_{i\leq n}$ then define~$g\in\ZO^n$ as
    $g=\bigcurlyvee_{1\leq i\leq n} (f_i \inter s_i)$,
    i.e. $g(i)=f_i(i)$ for all~$1\leq i\leq n$.  Starting from any
    element of~$F^n$ and applying Equation~\eqref{eq-i} iteratively
    for each~$i$, we~get that $g\in F^n$. Since $g\inter s_i=f_i\inter
    s_i$, we~also have
    \[
    \forall f\in F^n \qquad (g\inter s_i)\union (f \inter \bar s_i)\in F^n
    \]
    By~Equation~\eqref{eq-flip::Bis::AjoutFevrier}, since
    $\flip_g(g)=\bfone$, we~get
    \begin{equation}
      \forall f\in F^n \qquad  (\bfone \inter s_i)\union (\flip_g(f)\inter \bar s_i) \in \flip_g(F^n).
      \label{eq-1}
    \end{equation}
    Now, assume that $\flip_g(F^n)$ is not upward closed: then there exist elements
    $f\in F^n$ and $h\notin F^n$ such that $\flip_g (f)(i)=1\Rightarrow \flip_g (h)(i)=1$ for all~$i$. 
    Starting from~$f$ and iteratively applying Equation~\eqref{eq-1} for those~$i$
    for which $\flip_g (h)(i)=1$ and $\flip_g (f)(i)=0$, we~get that $\flip_g (h)\in \flip_g (F^n)$ and $h\in F^n$. Hence $\flip_g(F^n)$
    must be upward closed.
  \end{proof}

\subsection{Proof of Lemma~\ref{lemmaOrderingFsh}}
\label{app-lemma18}
\RTOSNE*

\begin{proof}
 Assume otherwise, there is $h'_1\in  \mathbb{F}^n(h_1,s)\backslash \mathbb{F}^n(h_2,s)$ and $h'_2\in  \mathbb{F}^n(h_2,s)\backslash \mathbb{F}^n(h_1,s)$. We then have:
  \begin{alignat*}1
    (h_1\inter s)\union (h_1'\inter\overline{s})\in F^n \quad & \quad
    (h_2\inter s)\union (h_1'\inter \overline{s}) \not\in F^n \\
    (h_2 \inter s)\union (h_2'\inter \overline{s}) \in F^n \quad & \quad
    (h_1\inter s)\union (h_2\inter \overline{s}) \not\in F^n
  \end{alignat*}
  
  Now consider $(h_1\inter s)\union (h_1'\inter \overline{s})$, $(h_2\inter s)\union (h_2'\inter \overline{s})$ and $s$. As $F^n$ is semi-stable, one of the two following vector is in $F^n$ :
  \begin{alignat*}2
    &\big( (h_1\inter s)\union (h_1'\inter \overline{s}) \inter s\big) \union \big( (h_2\inter s)\union (h_2'\inter \overline{s})\inter \overline{s} \big) \\
    &\big( (h_2\inter s)\union (h_2'\inter \overline{s}) \inter s\big) \union \big( (h_1\inter s)\union (h_1'\inter \overline{s})\inter \overline{s} \big) 
  \end{alignat*}
  The first vector is equal to $(h_1\inter s)\union (h_2'\inter \overline{s})$ and the second to $(h_2\inter s)\union (h_1'\inter \overline{s})$ and both are supposed to be in $\overline{F^n}$, we get a contradiction.
\end{proof}

\smallskip
While it is not related to the lemma above, we~prove here a result
that will be useful for the proof of
Lemma~\ref{lemmaintermediaireavantElemeoptim} in
Appendix~\ref{app-thm13}.

\begin{lemma}\label{LemmaOrderGetCoarserwithS}
  Given a \semist set $F^n$, $s_1,s_2\in\ZO^n$ such that $s_1 \inter
  s_2 =\bfzero$ and $f,g\in \ZO^n$ such that $f\preceq_{s_1} g$ and
  $f\preceq_{s_2} g$. Then $f\preceq_{s_1\union s_2} g$.
\end{lemma}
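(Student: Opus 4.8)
The plan is to unfold the definition of $\preceq_s$ and transfer the membership condition from $f$ to $g$ in two successive stages, first along $s_1$ and then along $s_2$. Write $s = s_1\union s_2$, and recall that $h_1\preceq_s h_2$ means $\mathbb{F}^n(h_1,s)\subseteq\mathbb{F}^n(h_2,s)$. So I would fix an arbitrary $h'\in\mathbb{F}^n(f,s)$; by definition this means that the vector $a=(f\inter s)\union(h'\inter\bar s)$ belongs to $F^n$, and it suffices to prove that $(g\inter s)\union(h'\inter\bar s)\in F^n$, for then $h'\in\mathbb{F}^n(g,s)$, and since $h'$ is arbitrary we conclude $\mathbb{F}^n(f,s)\subseteq\mathbb{F}^n(g,s)$, i.e. $f\preceq_{s_1\union s_2}g$.

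First I would note that, since $s_1\leq s$, we have $a\inter s_1 = f\inter s_1$, hence $a = (f\inter s_1)\union(a\inter\bar{s_1})$, which exhibits $a$ as an element of $\mathbb{F}^n(f,s_1)$. Applying the hypothesis $f\preceq_{s_1}g$ gives that $b=(g\inter s_1)\union(a\inter\bar{s_1})$ lies in $F^n$. Next, using $s_1\inter s_2 = \bfzero$ (so that $s_2\leq\bar{s_1}$), a short coordinatewise check gives $b\inter s_2 = f\inter s_2$, whence $b = (f\inter s_2)\union(b\inter\bar{s_2})\in\mathbb{F}^n(f,s_2)$; the hypothesis $f\preceq_{s_2}g$ then yields that $c=(g\inter s_2)\union(b\inter\bar{s_2})$ lies in $F^n$.

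It then remains to identify $c$. I would compute $c$ coordinate by coordinate, splitting on the three disjoint, exhaustive cases $s_1(i)=1$, $s_2(i)=1$, and $s_1(i)=s_2(i)=0$ (disjoint and exhaustive because $s_1\inter s_2=\bfzero$ and $s=s_1\union s_2$): unwinding the definitions of $c$, $b$, $a$ in each case shows $c(i)=g(i)$ whenever $s(i)=1$ and $c(i)=h'(i)$ whenever $s(i)=0$, i.e. $c = (g\inter s)\union(h'\inter\bar s)$. Since $c\in F^n$, this is exactly what was needed.

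The only mildly delicate point is bookkeeping: tracking on which block of coordinates each of $a$, $b$, $c$ agrees with $f$, with $g$, or with $h'$. The disjointness $s_1\inter s_2=\bfzero$ is precisely what makes the two successive ``corrections'' (on $s_1$, then on $s_2$) non-interfering, so no real obstacle arises. Note that semi-stability of $F^n$ is not actually invoked in this argument; it is needed elsewhere only to guarantee that each $\preceq_s$ is a total quasi-order.
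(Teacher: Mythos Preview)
Your proof is correct and follows essentially the same approach as the paper's: fix an arbitrary $h'\in\mathbb{F}^n(f,s_1\union s_2)$ and upgrade $f$ to $g$ on one block of coordinates at a time using the two hypotheses, with the disjointness $s_1\inter s_2=\bfzero$ ensuring the two steps do not interfere. The only (immaterial) difference is the order in which you apply the two hypotheses; your remark that semi-stability is not actually used here is a nice observation the paper does not make explicit.
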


\begin{proof}
  Because $f\preceq_{s_1} g$ and $f\preceq_{s_2} g$, we have
  \begin{alignat}1
    \forall i\in\{1,2\}~\forall h\in\ZO^n \qquad (f\inter s_i) \union
    (h \inter \overline{s_i}) \in F^n \Rightarrow (g\inter s_i) \union
    (h \inter \overline{s_i}) \in F^n
    \label{OrderingEG::DetailleLemma::sOne::sTwo}
  \end{alignat}
  Consider $h'\in \ZO^n$ such that $\alpha=(f\inter (s_1\union s_2))
  \union (h' \inter \overline{(s_1\union s_2)})$ is in $F^n$.  Define
  the element $h= \alpha \inter \overline{s_2}$, then $ (f\inter s_2)
  \union (h \inter \overline{s_2}) = (f\inter (s_1\union s_2)) \union
  (h' \inter \overline{(s_1\union s_2)}) \in F^n $.
  Using~\eqref{OrderingEG::DetailleLemma::sOne::sTwo} with $s_2$
  and~$h$, we get $ \beta = (g\inter s_2) \union (h \inter
  \overline{s_2}) $.  As $s_1 \inter s_2 =\bfzero$, we can write
  $\beta= (f\inter s_1) \union (g\inter s_2) \union (h' \inter
  \overline{(s_1\union s_2)}) \in F^n $.
      
  Now consider $h= \beta \inter \overline{s_1}$, we have $(f\inter
  s_1)\union (h\inter \overline{s_1})=\beta \in
  F^n$. Using~\eqref{OrderingEG::DetailleLemma::sOne::sTwo} with $s_1$
  and h, we get $(g \inter (s_1 \union s_2)) \union (h' \inter
  \overline{(s_1\union s_2)}) \in F^n$.  Therefore
  $\mathbb{F}^n(f,s_1\union s_2)\subseteq \mathbb{F}^n(g,s_1 \union
  s_2)$ and $f\preceq_{s_1\union s_2} g$.
\end{proof}

\subsection{Proof of Theorem~\ref{ThmModelCheck}}
\label{app-thm13}
\MainTheoremeDuChapSLEGPartII*

\begin{proof}

  Following Lemma~\ref{lemma-upwardcl}, we~assume for the rest of the
  proof that the set $F^n$ of the \SLEGf formula $\phi$ is upward
  closed (even if it means negating some of the \LTL objectives).
  We~also assume it is non-empty, since the result is trivial otherwise.
  
  We~then have the following property:
  \begin{lemma}\label{lemma-upwcl}
    Assuming $F^n$ is upward-closed, for any $f$, $g$ and~$s$
    in~$\ZO^n$, if $f\leq g$ (i.e. for all~$i$, $f(i)=1 \impl
    g(i)=1$), then $f \preceq_s g$.  In~particular, $\bfzero$~is a
    minimal element for~$\preceq_s$, for~any~$s$.
  \end{lemma}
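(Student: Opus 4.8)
The statement is an immediate consequence of unfolding the definition of $\preceq_s$ together with monotonicity of the Boolean operations $\inter$ and $\union$ with respect to the componentwise order $\leq$ on $\ZO^n$. Recall that $h_1 \preceq_s h_2$ means $\mathbb{F}^n(h_1,s) \subseteq \mathbb{F}^n(h_2,s)$, where $\mathbb{F}^n(h,s) = \{h' \in \ZO^n \mid (h\inter s) \union (h'\inter \overline{s}) \in F^n\}$. So the goal reduces to showing $\mathbb{F}^n(f,s) \subseteq \mathbb{F}^n(g,s)$ whenever $f \leq g$.

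First I would fix an arbitrary $h' \in \mathbb{F}^n(f,s)$, i.e.\ $(f\inter s)\union(h'\inter\overline{s}) \in F^n$, and show $(g\inter s)\union(h'\inter\overline{s}) \in F^n$. The key elementary observation is that both $\inter$ and $\union$ are monotone for the order $\leq$: from $f \leq g$ one gets $f\inter s \leq g\inter s$, and hence
\[
(f\inter s)\union(h'\inter\overline{s}) \ \leq\ (g\inter s)\union(h'\inter\overline{s}).
\]
Since $F^n$ is upward-closed and the left-hand side belongs to $F^n$, so does the right-hand side, which is precisely the statement that $h' \in \mathbb{F}^n(g,s)$. As $h'$ was arbitrary, $\mathbb{F}^n(f,s) \subseteq \mathbb{F}^n(g,s)$, i.e.\ $f \preceq_s g$.

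For the ``in particular'' part, I would simply note that $\bfzero \leq h$ for every $h \in \ZO^n$, so by what we just proved $\bfzero \preceq_s h$ for every $h$ and every $s$; thus $\bfzero$ is a minimal element of the (total) quasi-order $\preceq_s$. There is no real obstacle here: the only thing worth spelling out carefully is the monotonicity step, and even that is routine bit-by-bit checking; the content of the lemma is entirely in recalling that upward-closedness of $F^n$ lifts through the completion operator $h \mapsto (h\inter s)\union(\,\cdot\,\inter\overline{s})$.
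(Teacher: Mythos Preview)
Your proof is correct and follows essentially the same argument as the paper: both use that $f\leq g$ implies $(f\inter s)\union(h\inter\bar s)\leq (g\inter s)\union(h\inter\bar s)$ and then apply upward-closedness of $F^n$. You simply unfold the definition of $\preceq_s$ and the ``in particular'' clause a bit more explicitly than the paper does.
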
     

  \begin{proof}
    Since $f\leq g$, then also $(f\inter s)\union (h\inter \bar s)
    \leq (g\inter s)\union (h\inter\bar s)$, for any~$h\in\ZO^n$.
    Since $F^n$ is upward-closed, if $(f\inter s)\union (h\inter \bar
    s)$ is in~$F^n$, then so is $(g\inter s)\union (h\inter \bar s)$.
  \end{proof}

  We now develop the proof of Theorem~\ref{ThmModelCheck}. The proof
  is in three steps:
  \begin{itemize}
    \item we build a family of parity automata expressing the
      objectives that may have to be fulfilled along
      outcomes. A~configuration is then described by a state~$q$ of
      the game, a~vector~$d$ of states of those parity automata, and a
      set~$s$ of goals that are still \emph{active} along the current
      outcome;
    \item we define formulas encoding the two ways of
      fulfilling a set of goals: either by fulfilling all goals along
      the same outcome, or by partitioning them among different
      branches;
    \item by~turning the formulas above into 2-player parity games,
      we~inductively compute optimal sets of goals (represented as
      vectors~$b_{q,d,s}\in\ZO^n$) that can be achieved from a given
      configuration and for each subset of active
      goals. By~determinacy of parity games, we~derive timeline maps
      witnessing the fact that $b_{q,d,s}$ can be achieved, and the
      fact that it is optimal. If $b_{q_0,d_0,\bfone} \in F^n$, we~get
      a witness map for $\calG,q_0\models^T\phi$; otherwise, we~get
      one for $\calG,q_0\models^T\non\phi$.
  \end{itemize}
  

  \subsubsection{Automata for conjunctions of goals}
  
  We~use \emph{deterministic parity word automata} to keep track of
  the goals to be satisfied. Since we initially have no clue about
  which goal(s) will have to be fulfilled along an outcome, we~use a
  (large) set of automata, all running in parallel.

  For~$s\in\ZO^n$ and $h\in\ZO^n$, we~let
  $D_{s,h}$ be a deterministic parity automaton accepting exactly the words
  over~$2^{\AP}$ along which the following 
  formula~$\Phi_{s,h}$ holds:
  \begin{xalignat*}1
    \Phi_{s,h} = \bigvee_{\substack{k\in \ZO^n \\ h~\preceq_s ~k}} 
    \bigwedge_{\substack{j\text{ s.t.}\\(k \inter
	s)(j)=1}} \formuleLTL_j. 
  \end{xalignat*}
  where a conjunction over an empty set (i.e., if~$(k\inter s)(j)=0$
  for all~$j$) is true.  Notice that, using Lemma~\ref{lemma-upwcl},
  if~$h\preceq_s k$ and $k\leq k'$, then $h\preceq_s k'$, so that
  we~do not need to enforce $\not\phi_j$ for those indices where
  $(k\inter s)(j)=0$.


  As an example, take~$s\in\ZO^n$ with $\size s=1$, writing~$j$ for
  the index where $s(j)=1$; for any $h\in\ZO^n$, if there is
  $k\succeq_s h$ with $k(j)=0$ (which in particular is the case when
  $h(j)=0$), then the~automaton~$D_{s,h}$ is universal; otherwise
  $D_{s,h}$~accepts the set of words over~$2^{\AP}$ along which
  $\formuleLTL_j$ holds.

  We write $\calD= \{D_{s,h} \mid s\in\ZO^n,\ h\in\ZO^n\}$ for the set
  of automata defined above. A~\emph{vector of states of~$\calD$} is
  a~function associating with each automaton~$D\in\calD$ one of its
  states.  We~write~$\VS$ for the set of all vectors of states
  of~\calD.  For~any vector $d\in\VS$ and any state~$q$ of~$\calG$,
  we~let $\succs(d,q)$ to be the vector of states
  associating with each $D\in\calD$ the successor of state~$d(D)$ after
  reading~$\labels(q)$; we~extend~$\succs$
  to finite paths~$(q_i)_{0\leq i\leq n}$ in~$\calG$ inductively, letting
  $\succs(d,(q_i)_{0\leq i\leq n}) = \succs(\succs(d,(q_i)_{0\leq i\leq n-1}),q_n)$.

  An infinite path~$(q_i)_{i\in \bbN}$ in \calG is
  accepted by an automaton~$D$ of~\calD whenever the word
  $(\labels(q_i))_{i\in\bbN}$ is accepted by~$D$.  We~write~$\calL(D)$ for
  the set of paths of~$\calG$ accepted by~$D$.
  Finally, for $d\in\VS$, we~write~$\calL (D^d_{s,h})$ for the set of
  words that are accepted by $D_{s,h}$ starting from the state
  $d(D_{s,h})$ of~$D_{s,h}$.
  
  \begin{proposition}
    \label{propositionlanguageDinclusion}\label{prop-Dsh}
    The following holds for any $s\in\ZO^n$:
    \begin{enumerate}
      \item $\Phi_{s,\bfzero} \equiv \top$ (i.e., $D_{s,\bfzero}$ is universal);
      \item for any $h_1,h_2\in\ZO^n$, if $h_1\preceq_s h_2$, we have
        $\Phi_{s,h_2} \impl \Phi_{s,h_1}$ (i.e., 
        $\calL (D_{s,h_2}) \subseteq \calL(D_{s,h_1})$);
      \item for any $h\in F^n$, $\Phi_{\bfone,h}\equiv
        \bigvee_{k\in F^n}\bigwedge_{j\text{ s.t. }
          k(j)=1}\ \formuleLTL_j$. 
    \end{enumerate}
  \end{proposition}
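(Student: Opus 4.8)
The plan is to read off all three items directly from the definition of $\Phi_{s,h}$ as the disjunction indexed by $\{k\in\ZO^n\mid h\preceq_s k\}$, relying only on two facts already established: $\preceq_s$ is a total quasi-order (Lemma~\ref{lemmaOrderingFsh} and the discussion following it), and, under the standing assumption that $F^n$ is upward-closed, $\bfzero$ is a minimal element of $\preceq_s$ for every $s$ (Lemma~\ref{lemma-upwcl}). None of the three items requires more than unfolding these definitions, so I do not anticipate any genuine obstacle; the argument is essentially bookkeeping.

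For item~1, since $\bfzero\preceq_s k$ for all $k\in\ZO^n$, in particular $k=\bfzero$ occurs in the disjunction defining $\Phi_{s,\bfzero}$. The corresponding disjunct is $\bigwedge_{j\text{ s.t. }(\bfzero\inter s)(j)=1}\formuleLTL_j$, which is an empty conjunction since $\bfzero\inter s=\bfzero$, hence equal to $\top$ by convention. Thus $\Phi_{s,\bfzero}$ has $\top$ among its disjuncts, so $\Phi_{s,\bfzero}\equiv\top$ and $D_{s,\bfzero}$ is universal.

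For item~2, I would assume $h_1\preceq_s h_2$. Any index $k$ appearing in the disjunction of $\Phi_{s,h_2}$ satisfies $h_2\preceq_s k$, whence $h_1\preceq_s k$ by transitivity of $\preceq_s$, so $k$ also appears in the disjunction of $\Phi_{s,h_1}$. Hence the set of disjuncts of $\Phi_{s,h_2}$ is contained in that of $\Phi_{s,h_1}$, which gives $\Phi_{s,h_2}\impl\Phi_{s,h_1}$. Since $D_{s,h}$ is defined to accept exactly the words over $2^\AP$ along which $\Phi_{s,h}$ holds, the inclusion $\calL(D_{s,h_2})\subseteq\calL(D_{s,h_1})$ follows at once.

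For item~3, I specialize to $s=\bfone$, so that $\bar s=\bfzero$ and $\mathbb{F}^n(h,\bfone)=\{h'\in\ZO^n\mid (h\inter\bfone)\union(h'\inter\bfzero)\in F^n\}=\{h'\in\ZO^n\mid h\in F^n\}$, which equals $\ZO^n$ if $h\in F^n$ and $\emptyset$ otherwise. Therefore, when $h\in F^n$, the condition $h\preceq_\bfone k$, that is, $\mathbb{F}^n(h,\bfone)\subseteq\mathbb{F}^n(k,\bfone)$, amounts to $\mathbb{F}^n(k,\bfone)=\ZO^n$, that is, to $k\in F^n$. Since moreover $k\inter\bfone=k$, the disjunction defining $\Phi_{\bfone,h}$ ranges exactly over $k\in F^n$ with disjuncts $\bigwedge_{j\text{ s.t. }k(j)=1}\formuleLTL_j$, which is the claimed equivalence. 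The only two points deserving a line of care are the empty-conjunction convention used for item~1 and the degenerate shape of $\mathbb{F}^n(\cdot,\bfone)$ that pins the index set of the disjunction in item~3 to $F^n$ exactly.
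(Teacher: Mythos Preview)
Your proof is correct and follows essentially the same approach as the paper's: for item~1 you exhibit the disjunct $k=\bfzero$ yielding the empty conjunction, for item~2 you use transitivity of $\preceq_s$ to show one index set is contained in the other, and for item~3 you compute $\mathbb{F}^n(\cdot,\bfone)$ explicitly to identify the index set with $F^n$. The only minor difference is that for item~1 you invoke minimality of $\bfzero$ (Lemma~\ref{lemma-upwcl}), whereas reflexivity of $\preceq_s$ already gives $\bfzero\preceq_s\bfzero$ and hence the presence of the empty conjunction; this is harmless overkill.
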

  \begin{proof}
    $\Phi_{s,\bfzero}$ contains the empty conjunction ($k=\bfzero$) as
    a disjunct. Hence it~is equivalent to~true. When~$h_1\preceq_s
    h_2$, formula~$\Phi_{s,h_1}$ contains more disjuncts
    than~$\Phi_{s,h_2}$, hence the second result. Finally,
    $\bbF^n(f,\bfone)=\ZO^n$ if~$f\in F^n$, and is empty
    otherwise. Hence if $h\in F^n$, we~have $h\preceq_\bfone k$ if,
    and only~if, $k\in F^n$, which entails the result.
  \end{proof}
  
  \subsubsection{Two ways of achieving goals}
  After a given history, a set of goals may be achieved either along a
  single outcome, in case the assignment of strategies to players
  gives rise to the same outcomes, or they may be split among different
  outcomes. We~express those two ways of satisfying goals,
  by means of two operators parameterized by the current configuration.


  The first operator covers the case where the goals currently enabled
  by~$s$ (those goals $\beta_i\phi_i$ for which $s(i)=1$) are all
  fulfilled along the same outcome.
  For any $d \in \textsf{VS}$ and any two $s$ and $h$ in $\ZO^n$, the
  operator~$\Gamma^{\stick}_{d,s,h}$ is defined as follows: given a
  context~$\chi$ with~$\calV \subseteq\dom\chi$ and a state~$q$
  of~\calG,
  \[
    \calG,q \models_{\chi} \Gamma^{\stick}_{d,s,h} 
    \ \iff\ 
      \exists \rho\in\Play_\calG(q) \text{ s.t. }
    \begin{cases}
      \text{--}\ \forall j\leq n.\
      \bigl(s(j)=1 \impl \out(q,\chi\circ\beta_j)=\rho\bigr)
      \\ 
      \text{--}\ \rho \in \calL(D_{s,h}^d)
    \end{cases}
  \]
  Intuitively, all the goals enabled by~$s$ must
  give rise to the same outcome, which is accepted by~$D_{s,h}^d$.
  
  \medskip
  
  We~now consider the case where the active goals 
  are partitioned among different outcomes.%
  \begin{definition}
    A~\emph{partition} of an element $s\in \ZO^n$ is a sequence
    $(s_\kappa)_{1\leq \kappa \leq \lambda}$, with $\lambda\geq 2$, of elements
    of $\ZO^n$ with $s_1\union \dots \union s_\lambda = s$ and where
    for any two $\kappa \neq \kappa' $ and any $1\leq j\leq n$, we have
    $s_\kappa(j)=1 \impl s_{\kappa'}(j)=0$.
    
    An \emph{extended} partition of $s$ is a sequence $\tau=(s_\kappa,
    q_\kappa, d_\kappa )_{1\leq\kappa\leq\lambda}$ of elements of
    $\ZO^n\times \Q \times \textsf{VS}$ where $(s_\kappa)_{1\leq \kappa \leq
      \lambda}$ is a partition of~$s$, $q_\kappa$ are states
    of~$\calG$, and $d_\kappa$ are vectors of states of the automata
    in~$\calD$.
  \end{definition}
  
  We write $\Part(s)$ for the set of all extended partitions
  of~$s$.  Notice that we only consider non-trivial partitions;
  in~particular, if $\size{s}\leq 1$, then
  $\Part(s)=\emptyset$.
  For any $d\in \textsf{VS}$, any $s$ in $\ZO^n$ and any set of
  partitions $\Upsilon_s$ of~$s$, the~operator
  $\Gamma^{\sep}_{d,s,\Upsilon_s}$ states that the goals currently
  enabled by~$s$ all follow a common history~$\rho$ for a finite
  number of steps, and then partition themselves according to some
  partition in~$\Upsilon_s$.  The semantics
  of~$\Gamma^{\sep}_{d,s,\Upsilon_s}$ is defined as follows:%
  \[
    \calG,q \models_{\chi} \Gamma^{\sep}_{d,s,\Upsilon_s}
    \ \iff \
    \begin{array}{l}
      \exists \tau\in \Upsilon_s.\\ \exists \rho\in \Hist_\calG(q).\
    \end{array}
    \begin{cases}
      \text{--}\ \forall j\leq n.\ 
      \bigl(s(j)=1 \impl \rho \in \Pref{}(\out(q,\chi\circ\beta_j))\bigr)
      \\ 
      \text{--}\ \forall  \kappa \leq \size{\tau}.\ \forall  j\leq n.\
      \text{ letting } m_{j}(A)=\chi(\beta_j(A))(\rho).\\  
      \phantom{\text{--}\ }\bigl(s_\kappa(j)=1 \impl
        q_\kappa=\Delta(\lst{\rho}, m_j)\bigr) \\
      \text{--}\ \forall \kappa \leq \size{\tau}.\
      \succs(d,\rho\cdot q_\kappa)=d_\kappa.
    \end{cases}
    \]
    Notice that $h$ does not appear explicitly in this definition, but
    $\Gamma^{\sep}_{d,s,\Upsilon_s}$ will depend on~$h$ through the
    choice of~$\Upsilon_s$. The operators $\Gamma^{\stick}$ and
    $\Gamma^{\sep}$ are illustrated on Fig.~\ref{fig-Gamma}.
  
    \begin{figure}[t]
      \centering
      \begin{tikzpicture}
        \begin{scope}
          \draw (0,0) node[fill,circle,inner sep=0pt,minimum size=4pt] (a) {}
            node[right] {$q\models_\chi \Gamma^{\stick}_{d,s,h}$};
          \draw (a) .. controls +(-100:1cm) and +(70:1cm) .. +(0,-2);
          \draw[dashed,-latex'] (0,-2) .. controls +(-110:5mm)
            and +(100:5mm) .. (0,-3)  node[right=2mm] {$\in \calL(D_{s,h}^d)$};
          \draw[-latex',shorten >=5mm,shorten <=5mm] (.4,0) .. controls +(-100:1cm) and +(70:1cm) .. ++(0,-2) .. controls +(-110:5mm) and +(100:5mm) .. ++(0,-1);
          \path (2.5,-1.4) node[text width=3.4cm] {same outcome for all goals enabled by~$s$};
        \end{scope}
        \begin{scope}[xshift=7cm]
          \draw (0,0) node[fill,circle,inner sep=0pt,minimum size=4pt] (a) {}
          node[right] {$q\models_{\chi} \Gamma^{\sep}_{d,s,\Upsilon_s}$};
          \draw[-latex'] (a) .. controls +(-100:1cm) and +(70:1cm) .. +(0,-2)
          node[fill,circle,inner sep=0pt,minimum size=4pt] (b) {};
          \path (2.2,-1) node[text width=3.4cm] {common history~$\rho$  for all goals enabled by~$s$};
          \begin{scope}
          \everymath{\scriptstyle}
          \draw (-1,-2.8) node[fill,circle,inner sep=0pt,minimum size=4pt] (c) {}
          node[below] {$q_1$, $d_1$} node[below=3mm] {$s_1$};
          \draw (0,-2.8) node[fill,circle,inner sep=0pt,minimum size=4pt] (d) {}
          node[below] {$q_2$, $d_2$} node[below=3mm] {$s_2$};
          \draw (1,-2.8) node[fill,circle,inner sep=0pt,minimum size=4pt] (e) {}
          node[below] {$q_3$, $d_3$} node[below=3mm] {$s_3$};
          \draw[-latex'] (b) -- (c);
          \draw[-latex'] (b) -- (d);
          \draw[-latex'] (b) -- (e);
          \end{scope}
          \begin{scope}[rounded corners=3mm]
          \draw (-1,-3.1) node[carre,dotted,minimum width=10mm] (s1) {};
          \draw (0,-3.1) node[carre,dotted,minimum width=9mm] (s2) {};
          \draw (1,-3.1) node[carre,dotted,minimum width=11mm] (s3) {};
          \end{scope}
          \draw (2.4,-3) node (t) {$\in \Upsilon_s$};
          \draw[dotted] (s1.-90) .. controls +(-20:25mm) and +(-130:10mm) .. (t);
          \draw[dotted,-latex'] (s2.-90) .. controls +(-25:10mm) and +(-130:10mm) .. (t);
          \draw[dotted] (s3.-90) .. controls +(-30:4mm) and +(-130:10mm) .. (t);
        \end{scope}
      \end{tikzpicture}
      \caption{Illustration of $\Gamma^{\stick}_{d,s,h}$ and
        $\Gamma^{\sep}_{d,s,\Upsilon_s}$}
      \label{fig-Gamma}
    \end{figure}

  \subsubsection{Fulfilling optimal sets of goals}
  We~now inductively (on~$\size s$) define new operators~$\Gamma_{d,s,h}$
  combining the above two operators $\Gamma^{\stick}$
  and~$\Gamma^{\sep}$, and selecting optimal ways of partitioning the
  goals among the outcomes.

  \paragraph*{Base case: $\size s=1$.}
  When only one goal is enabled, we~only have to consider a single
  outcome, so that we~let $\Gamma_{d,s,h}=\Gamma^{\stick}_{d,s,h}$,
  for any~$d\in\VS$ and~$h\in\ZO^n$. 
  By~Prop.~\ref{prop-Dsh}, for any valuation~$\chi$ such that
  $\Agt\subseteq \dom\chi$, it~holds $\calG,q\models_\chi
  \Gamma_{d,s,\bfzero}$, hence also $\calG,q\models^T (Q_i x_i)_{1\leq
    i\leq l}.\ \Gamma_{d,s,\bfzero}$.  Hence there must exist a
  maximal value~$b\in\ZO^n$ such that $\calG,q\models^T (Q_i
  x_i)_{1\leq i\leq l}.\ \Gamma_{d,s,b}$. We~write~$b_{q,d,s}$ for one
  such value (notice that it need not be unique).
  By~maximality, for any~$h$ such that $b_{q,d,s}\prec_s h$, we~have
  $\calG,q\not\models^T (Q_i  x_i)_{1\leq i\leq l}.\ \Gamma_{d,s,h}$.

  \paragraph*{Induction step.}
  We~assume that for any~$d\in\VS$, any $h\in\ZO^n$ and
  any~$s\in\ZO^n$ with $\size s\leq k$, we~have defined an
  operator~$\Gamma_{d,s,h}$, and that for any~$q\in \Q$, we~have fixed
  an element~$b_{q,d,s}\in\ZO^n$ for which $\calG,q\models^T (Q_i
  x_i)_{1\leq i\leq l}.\ \Gamma_{d,s,b}$ and such that for any~$h$
  such that $b_{q,d,s}\prec_s h$, it~holds $\calG,q\not\models^T (Q_i
  x_i)_{1\leq i\leq l}.\ \Gamma_{d,s,h}$. 

  Pick~$s\in\ZO^n$ with $\size s=k+1$, and an extended
  partition~$\tau=(s_\kappa,q_\kappa,d_\kappa)_{1\leq\kappa\leq\lambda}$. Then
  we~must have $\size{s_\kappa}<k+1$ for all~$1\leq\kappa\leq\lambda$,
  so that $\Gamma_{d_\kappa,s_\kappa,h}$ and
  $b_{q_\kappa,d_\kappa,s_\kappa}$ have been defined at previous
  steps.  We~let
  \[
  c_{s,\tau} = \bigcurlyvee_{1\leq\kappa\leq\lambda} (s_\kappa \inter
  b_{q_\kappa,d_\kappa,s_\kappa}).
  \]
  We~then define
  \[
  \Gamma_{d,s,h} = \Gamma^{\stick}_{d,s,h} \ou \Gamma^{\sep}_{d,s,\Upsilon_{s,h}}
  \qquad\text{ with } \Upsilon_{s,h} = \{\tau\in\Part(s) \mid
  h\preceq_s c_{s,\tau} \}.
  \]

  As~previously, we~claim that $\calG,q\models_\chi
  \Gamma_{d,s,\bfzero}$ for any $\chi$ such that $\Agt\subseteq
  \dom\chi$. Indeed, for a given~$\chi$, if~all the outcomes of the
  goals enabled by~$s$ follow the same infinite path, then this path
  is accepted by~$D_{s,\bfzero}$ and $\calG,q\models_\chi
  \Gamma^{\stick}_{d,s,\bfzero}$; otherwise, after some common
  history~$\rho$, the~outcomes are partitioned following some extended
  partition~$\tau_0$, which obviously satisfies $\bfzero \preceq_s
  c_{s,\tau_0}$ since~$\bfzero$ is a minimal element
  of~$\preceq_s$. Hence in that case $\calG,q\models_\chi
  \Gamma^{\sep}_{d,s,\Upsilon_{s,\bfzero}}$.

  In~particular, it~follows that $\calG,q\models^T (Q_i x_i)_{1\leq
    i\leq l}.\ \Gamma_{d,s,\bfzero}$, and we can fix a maximal
  element~$b_{q,d,s}$ for which $\calG,q\models^T (Q_i x_i)_{1\leq
    i\leq l}.\ \Gamma_{d,s,b_{q,d,s}}$ and $\calG,q\not\models^T (Q_i
  x_i)_{1\leq i\leq l}.\ \Gamma_{d,s,h}$ for any~$h\succ_s b_{q,d,s}$.

  \medskip

  This concludes the inductive definition
  of~$\Gamma_{d,s,b_{q,d,s}}$. We~now prove that it satisfies the
  following lemma:
  \begin{lemma}
    \label{LemmaexistsmapsvarrhosProofThmModelCheck}
    \label{lemma-tmap}
    For any~$q\in \Q$, any $d\in\VS$ and any~$s\in\ZO^n$,
    \begin{itemize}
    \item there exists a timeline map~$\vartheta_{q,d,s}$ for~$(Q_i
      x_i)_{1\leq i\leq l}$ witnessing the fact that
      \[
      \calG, q\models^T (Q_i x_i)_{1\leq i\leq l}.\ \Gamma_{d,s,b_{q,d,s}}
      \]
    \item for any~$h\succ_s b_{q,d,s}$, there exists a timeline map~$\bar\vartheta_{q,d,s,h}$ for~$(\bar Q_i
      x_i)_{1\leq i\leq l}$ witnessing the fact that
      \[
      \calG, q\models^T (\bar Q_i x_i)_{1\leq i\leq l}.\
        \non \Gamma_{d,s,h}.
        \]
    \end{itemize}
  \end{lemma}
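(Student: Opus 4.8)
The plan is to prove the lemma by induction on $\size s$, and at each level to encode the question whether $\calG,q\models^{T}(Q_i x_i)_{i\le l}.\ \Gamma_{d,s,h}$ into a two-player turn-based parity game $\calH_{q,d,s,h}$, built exactly as the game $\calH$ in the proof of Proposition~\ref{prop-negSL1G} (Appendix~\ref{app-Prop4}) but enriched with a ``split'' mechanic reflecting the operator $\Gamma^{\sep}$. From memoryless determinacy of parity games~\cite{EJ91,Mos91} together with the strategy-to-map translations $\HG$ and $\overline{\HG}$ of Appendix~\ref{app-Prop4}, I would obtain: if $P_\exists$ wins $\calH_{q,d,s,h}$, a timeline map for $(Q_i x_i)_{i\le l}$ witnessing $\calG,q\models^{T}(Q_i x_i)_{i\le l}.\ \Gamma_{d,s,h}$; and if $P_\forall$ wins, a timeline map for $(\bar Q_i x_i)_{i\le l}$ witnessing $\calG,q\models^{T}(\bar Q_i x_i)_{i\le l}.\ \non\Gamma_{d,s,h}$. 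The lemma then follows: since $b_{q,d,s}$ is, by its very definition, $\preceq_s$-maximal among the $h$ for which $\calG,q\models^{T}(Q_i x_i)_{i\le l}.\ \Gamma_{d,s,h}$ holds (well-defined by monotonicity of $\preceq_s$, Proposition~\ref{prop-Dsh}, and finiteness of $\ZO^n$), the first bullet is immediate (and, when needed in the sequel, one may take the canonical $\vartheta_{q,d,s}$ produced from a memoryless $P_\exists$-strategy in $\calH_{q,d,s,b_{q,d,s}}$); for the second bullet, maximality gives $\calG,q\not\models^{T}(Q_i x_i)_{i\le l}.\ \Gamma_{d,s,h}$ for $h\succ_s b_{q,d,s}$, so $P_\exists$ does not win $\calH_{q,d,s,h}$, so by determinacy $P_\forall$ wins, so the translation yields $\bar\vartheta_{q,d,s,h}$.

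For the game: when $\size s=1$ I would use verbatim the construction of Proposition~\ref{prop-negSL1G}, the single enabled goal $\beta_j$ playing the role of the strategy assignment, the deterministic parity automaton $D_{s,h}$ driving the priorities, and the state space recording the current vector $d\in\VS$ of states of the automata in $\calD$. When $\size s=k+1$, at each module leaf $P_\exists$ gets an extra choice between ``stick'' -- letting all goals enabled by $s$ advance one step along their common successor, legal only if the move vectors induced by the current valuation under the assignments $(\beta_j)_{s(j)=1}$ all agree, and then monitored by $D_{s,h}$ as in the $\size s=1$ case -- and ``split according to $\tau=(s_\kappa,q_\kappa,d_\kappa)_\kappa\in\Part(s)$'', which is legal precisely when $\tau\in\Upsilon_{s,h}$ (the static condition $h\preceq_s c_{s,\tau}$, with $c_{s,\tau}=\bigcurlyvee_\kappa(s_\kappa\inter b_{q_\kappa,d_\kappa,s_\kappa})$ already computed at earlier levels) and $\tau$ is consistent with the current move vectors ($\beta_j$ sends $j\in s_\kappa$ to $q_\kappa$, and $\succs(d,\rho\cdot q_\kappa)=d_\kappa$), in which case $P_\exists$ wins outright, an illegal declaration losing. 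Thus the two winning regimes for $P_\exists$, ``stick forever'' and ``eventually split legally'', match the disjuncts $\Gamma^{\stick}_{d,s,h}$ and $\Gamma^{\sep}_{d,s,\Upsilon_{s,h}}$ of $\Gamma_{d,s,h}$; and since ``reach a good state, else satisfy a parity condition'' is still a parity condition, $\calH_{q,d,s,h}$ is a finite parity game, of doubly-exponential size in $\calG$ and $\phi$ (which also yields the \EXPTIME[2] bound once all the $b_{q,d,s}$ are computed).

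Extracting the maps, I would reuse the translation of Appendix~\ref{app-Prop4}: from a memoryless winning $P_\exists$-strategy $\delta$, the map $\HG(\delta)$ is timeline because the re-entry of a quantification module at every $\calG$-step exposes to an existential variable at a history $\rho$ exactly the values of all universally-quantified strategies on the strict prefixes of $\rho$, plus those of the earlier-quantified universal variables at $\rho$ itself -- which is precisely condition~\eqref{eq-T}. The one new ingredient is that past any history $\rho$ at which $\delta$ prescribes a legal split into $\tau$, the strategies on the branch through each $q_\kappa$ are supplied by the induction hypothesis, namely by the timeline map $\vartheta_{q_\kappa,d_\kappa,s_\kappa}$ translated by $\rho\cdot q_\kappa$; since translations of timeline maps are timeline and the split point is itself determined by the play (memoryless-ness keeping its dependence on $w$ within the scope of~\eqref{eq-T}), the glued map is still a timeline map. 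That it makes $\Gamma_{d,s,h}$ true under every valuation $w$ of the universal variables follows as in Lemma~\ref{PropositionWinningcalHdeltaThenWitnesssDansCalGChaPintroDepProv}: a winning play is either a single infinite outcome common to all enabled goals and lying in $\calL(D_{s,h}^d)$ (giving $\Gamma^{\stick}$), or a common history followed by a legal split (giving $\Gamma^{\sep}$). The $P_\forall$ side is dual, via $\overline{\HG}$: a memoryless winning $P_\forall$-strategy $\bar\delta$ forces, along every consistent play, that $P_\exists$ never obtains a consistent legal split and that the spine fails $D_{s,h}$'s parity condition, so if under $\overline{\HG}(\bar\delta)(w)$ there were a $\Gamma^{\stick}$- or $\Gamma^{\sep}$-witness, $P_\exists$ could deviate (follow that outcome, resp.\ follow the common history and then declare the witnessing $\tau$) and win, contradicting $\bar\delta$; hence $\overline{\HG}(\bar\delta)$ witnesses $\calG,q\models^{T}(\bar Q_i x_i)_{i\le l}.\ \non\Gamma_{d,s,h}$.

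The hard part will be the bookkeeping around the split, on two fronts. First, faithfulness of the enriched game: one must check that ``$\tau$ is legal'' is, once existentially quantified over histories, equivalent to the $\Gamma^{\sep}_{d,s,\Upsilon_{s,h}}$ clause, which hinges on the monotonicity of $\preceq_s$ and of the languages $\calL(D_{s,h})$ (Lemma~\ref{lemma-upwcl}, Proposition~\ref{prop-Dsh}) and on $c_{s,\tau}$ correctly summarising what the induction hypothesis makes achievable below each $(q_\kappa,d_\kappa,s_\kappa)$. Second, that the glued map really stays inside~\eqref{eq-T}: the branch chosen at $\rho$ may depend on all universal strategies along strict prefixes of $\rho$, and at $\rho$ only on those quantified before the variable currently being defined, but on nothing more -- the point being that splits are decided at a module leaf, i.e.\ after all variables at $\rho$ have been played, so a split affects only the strategies strictly after $\rho$, matching the ``strict prefix'' proviso of timeline maps exactly. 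Everything else -- the base case, the strategy/map translation, the invocation of (memoryless) determinacy -- goes as in the proof of Proposition~\ref{prop-negSL1G}.
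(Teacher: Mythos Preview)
Your approach is essentially the paper's: encode $\Gamma_{d,s,h}$ into a two-player parity game built from quantification modules as in Proposition~\ref{prop-negSL1G}, augmented with a split mechanism; use determinacy and the translation $\HG/\overline{\HG}$ to extract timeline maps; conclude via the $\preceq_s$-maximality of~$b_{q,d,s}$. Two small remarks. First, in the paper's construction the behaviour at a module leaf is \emph{determined}, not chosen by $P_\exists$: if all goals enabled by~$s$ agree on their successor, the game sticks; otherwise one checks whether the induced extended partition lies in~$\Upsilon_{s,h}$ and routes to a winning or losing sink accordingly. Your variant, where $P_\exists$ explicitly declares ``stick'' or ``split along $\tau$'' subject to legality, is equivalent (and arguably cleaner), but you should note that the two formulations coincide. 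Second, your paragraph on gluing $\vartheta_{q_\kappa,d_\kappa,s_\kappa}$ beyond a split is unnecessary \emph{for this lemma}: the operator $\Gamma^{\sep}_{d,s,\Upsilon_{s,h}}$ imposes no condition after the split point, so the witness map $\vartheta_{q,d,s}$ may act arbitrarily past the split. That gluing is exactly what the paper does \emph{later}, when compiling the global maps~$\theta$ and~$\bar\theta$; keeping the two constructions separate avoids conflating the local lemma with the global witness.
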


  \begin{proof}
    The first result is a direct consequence of the
    construction. To~prove the second part, we~again turn the
    satisfaction of~$\Gamma_{d,s,h}$, for~$h\succ_s b_{q,d,s}$, into a
    parity game, as for the proof of Prop.~\ref{prop-negSL1G}. We~only
    sketch the proof here, as it involves the same ingredients.

    The parity game is obtained from~$\calG$ by replacing each state
    by a quantification game.  We~also introduce two sink states,
    $q_{\even}$ and~$q_{\odd}$, which respectively are winning for
    Player~$P_\exists$ and for Player~$P_{\forall}$.  When arriving at
    a leaf~$(q,d,\frakm)$ of the $(q,d)$-copy of the quantification
    game, there may be one of the following three transitions
    available:
    \begin{itemize}
    \item if there is a state~$q'$ such that for all $j$ with $s(j)=1$,
      it~holds $q'=\Delta(q, m_{\beta_j})$ (in~other terms, the moves
      selected in the current quantification game generate the same
      transition for all the goals enabled by~$s$), then there is a
      single transition to~$(q',d',\epsilon)$, where $d'=\succs(d,q')$.
    \item otherwise, if there is an extended
      partition~$\tau=(s_\kappa,q_\kappa,d_\kappa)_{1\leq\kappa\leq\lambda}$
      of~$s$ such that $c_{s,\tau} \succeq_s h$ and, for
      all~$1\leq\kappa\leq\lambda$, for all~$j$ such that
      $s_\kappa(j)=1$, we~have $\Delta(q,m_{\beta_j})=q_\kappa$ and
      $\succs(d,q_\kappa)=d_\kappa$, then there is a transition
      from~$(q,d,\frakm)$ to~$q_{\even}$.
    \item otherwise, there is a transition from~$(q,d,\frakm)$
      to~$q_{\odd}$.
    \end{itemize}
    The priorities defining the parity condition are inherited from
    those in~$D_{s,h}$.

    Since $\calG,q\not\models^{T} (Q_i x_i)_{1\leq i\leq l}.\
    \Gamma_{d,s,h}$, Player~$P_\exists$ does not have a winning
    strategy in this game, and by determinacy Player~$P_\forall$
    has~one. From the winning strategy of Player~$P_\forall$,
    we~obtain a timeline map~$\bar\vartheta_{q,d,s,h}$ for~$(\bar Q_i
    x_i)_{1\leq i\leq l}$ witnessing the fact that $\calG,q\models^T
    (\bar Q_i x_i)_{1\leq i\leq l}.\ \non\Gamma_{d,s,h}$.
  \end{proof}
  
  \begin{remark}
    While the definition of $\Gamma_{d,s,b_{q,d,s}}$ (and in
    particular of~$b_{q,d,s}$) is not effective, the parity games
    defined in the proof above can be used to compute each $b_{q,d,s}$
    and $\Gamma_{d,s,b_{q,d,s}}$. Indeed, such parity games can be
    used to decide whether $\calG,q\models^T (Q_i x_i)_{1\leq i\leq
      l}.\ \Gamma_{d,s,h}$ for all~$h$, and selecting a maximal value
    for which the result holds.
    Then $b_{q_0,d_0,\bfone}\in F^n$ implies $\calG,q_0\models^T (Q_i
    x_i)_{1\leq i\leq l} F^n(\beta_j\phi_j)_{1\leq j\leq n}$.

    Each parity game has size doubly-exponential, with
    exponentially-many priorities; hence they can be solved in
    \EXPTIME[2]. The number of games to solve is also
    doubly-exponential, so that the whole algorithm runs in
    \EXPTIME[2].
%
%
  \end{remark}

  We~now focus on the operator obtained at the end of the induction,
  when~$s=\bfone$. Following Prop.~\ref{prop-Dsh},
  $\calL(D_{\bfone,f})$~does not depend on the exact value of~$f$, as
  soon as it is in~$F^n$. We~then let
  \[
  \Gamma_{F^n} = \Gamma^{\stick}_{d_0,\bfone,f} \ou \Gamma^{\sep}_{d_0,\bfone,\Upsilon_{F^n}}
  \]
  where $f$ is any element of~$F^n$ (remember $F^n$ is assumed to be
  non-empty), $d_0$ is the vector of initial states of the automata
  in~$\calD$, and $\Upsilon_{F^n}=\{\Part(\bfone) \mid c_{\bfone,\tau}
  \in F^n\}$.
  We~write $\vartheta_\bfone$ and $\bar\vartheta_\bfone$ for the maps
  $\vartheta_{q_0,d_0,\bfone}$ and $\bar\vartheta_{q_0,d_0,\bfone,h}$
  for some~$h\in F^n$, as given by Lemma~\ref{lemma-tmap}. From the
  discussion above, $\bar\vartheta_{q_0,d_0,\bfone,h}$ does not depend
  on the choice of~$h$ in~$F^n$, and we simply write~it
  $\bar\vartheta_{q_0,d_0,\bfone}$.

  Then:
  \begin{lemma}
    \label{LemmaBidexistsmapsvarrhosProofThmModelCheck}
    \label{lemma-rho1}
    If $\calG ,q_0\models^{T} (Q_i x_i)_{1\leq i\leq
      l}.\ \Gamma_{F^n}$, then $\vartheta_{\bfone}$ witnesses that $
    \calG,q_0 \models^{T} (Q_i x_i)_{1\leq i\leq l}.\ \Gamma_{F^n} $.
    Conversely, if $\calG ,q_0 \not\models^{T} (Q_i x_i)_{1\leq i\leq
      l}.\ \Gamma_{F^n}$, then $\bar\vartheta_{\bfone}$ witness
    that $ \calG,q_0 \models^{T} (\bar Q_i x_i)_{1\leq i\leq
      l}.\ \neg \Gamma_{F^n} $.
  \end{lemma}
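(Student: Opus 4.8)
The plan is to identify the operator $\Gamma_{F^n}$ with the instance $\Gamma_{d_0,\bfone,f}$ of the inductively-defined operator, for an arbitrary $f\in F^n$, and then to quote the two witnessing clauses of Lemma~\ref{lemma-tmap} (themselves obtained by determinacy of parity games, as in Prop.~\ref{prop-negSL1G}) together with a trivial monotonicity remark. I would begin by recording two preliminary facts. First, for $s=\bfone$ the quasi-order $\preceq_\bfone$ collapses on $F^n$: since $\mathbb{F}^n(h,\bfone)$ equals $\ZO^n$ when $h\in F^n$ and is empty otherwise, one has, for every $f\in F^n$ and every $c\in\ZO^n$, that $f\preceq_\bfone c$ iff $c\in F^n$; in particular all elements of $F^n$ are $=_\bfone$-equivalent. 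Combining this with Prop.~\ref{prop-Dsh}(3) (the language of $D_{\bfone,f}$ does not depend on the choice of $f\in F^n$) and unfolding definitions, we get $\Upsilon_{\bfone,f}=\{\tau\in\Part(\bfone)\mid f\preceq_\bfone c_{\bfone,\tau}\}=\{\tau\in\Part(\bfone)\mid c_{\bfone,\tau}\in F^n\}=\Upsilon_{F^n}$, hence $\Gamma_{d_0,\bfone,f}=\Gamma_{F^n}$ for every $f\in F^n$ (and this operator is the same for all such $f$). Second, $\Gamma_{d,s,\cdot}$ is antitone along $\preceq_s$: if $h_1\preceq_s h_2$ then $\calL(D^d_{s,h_2})\subseteq\calL(D^d_{s,h_1})$ by Prop.~\ref{prop-Dsh}(2) and $\Upsilon_{s,h_2}\subseteq\Upsilon_{s,h_1}$ by transitivity of $\preceq_s$, so a routine induction on $\size s$ along the definition of $\Gamma$ gives that $\calG,q\models_\chi\Gamma_{d,s,h_2}$ implies $\calG,q\models_\chi\Gamma_{d,s,h_1}$ for every context $\chi$ with $\calV\subseteq\dom\chi$ and every state $q$. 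Note also that whether a timeline map $\vartheta$ for $(Q_ix_i)_{1\leq i\leq l}$ witnesses $\calG,q_0\models^{T}(Q_ix_i)_{1\leq i\leq l}.\ \alpha$ is preserved when $\alpha$ is replaced by a weaker matrix.

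For the positive direction, assume $\calG,q_0\models^{T}(Q_i x_i)_{1\leq i\leq l}.\ \Gamma_{F^n}$; by the first fact this reads $\calG,q_0\models^{T}(Q_i x_i)_{1\leq i\leq l}.\ \Gamma_{d_0,\bfone,f}$ for any fixed $f\in F^n$. By maximality of $b_{q_0,d_0,\bfone}$ among the $b$ with $\calG,q_0\models^{T}(Q_i x_i)_{1\leq i\leq l}.\ \Gamma_{d_0,\bfone,b}$, and by totality of $\preceq_\bfone$, this forces $f\preceq_\bfone b_{q_0,d_0,\bfone}$; hence, by the antitonicity fact, $\Gamma_{d_0,\bfone,b_{q_0,d_0,\bfone}}$ entails $\Gamma_{d_0,\bfone,f}=\Gamma_{F^n}$ in every context. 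By Lemma~\ref{lemma-tmap}, $\vartheta_{\bfone}=\vartheta_{q_0,d_0,\bfone}$ is a timeline map for $(Q_ix_i)_{1\leq i\leq l}$ with $\calG,q_0\models_{\vartheta_\bfone(w)}\Gamma_{d_0,\bfone,b_{q_0,d_0,\bfone}}$ for all $w\in\Strat^{\calV^\forall}$; therefore the same $\vartheta_\bfone$ witnesses $\calG,q_0\models^{T}(Q_ix_i)_{1\leq i\leq l}.\ \Gamma_{F^n}$.

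For the converse, assume $\calG,q_0\not\models^{T}(Q_ix_i)_{1\leq i\leq l}.\ \Gamma_{F^n}$, i.e. $\calG,q_0\not\models^{T}(Q_ix_i)_{1\leq i\leq l}.\ \Gamma_{d_0,\bfone,f}$ for $f\in F^n$. Were $f\preceq_\bfone b_{q_0,d_0,\bfone}$, the antitonicity fact together with $\calG,q_0\models^{T}(Q_ix_i)_{1\leq i\leq l}.\ \Gamma_{d_0,\bfone,b_{q_0,d_0,\bfone}}$ (Lemma~\ref{lemma-tmap}) would give $\calG,q_0\models^{T}(Q_ix_i)_{1\leq i\leq l}.\ \Gamma_{d_0,\bfone,f}$, a contradiction; so by totality $f\succ_\bfone b_{q_0,d_0,\bfone}$. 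Hence Lemma~\ref{lemma-tmap} provides a timeline map $\bar\vartheta_{q_0,d_0,\bfone,f}$ for $(\bar Q_ix_i)_{1\leq i\leq l}$ with $\calG,q_0\models_{\bar\vartheta_{q_0,d_0,\bfone,f}(\bar w)}\non\Gamma_{d_0,\bfone,f}=\non\Gamma_{F^n}$ for all $\bar w\in\Strat^{\calV^\exists}$; since $\Gamma_{d_0,\bfone,f}=\Gamma_{F^n}$ is the same for every $f\in F^n$, this is the map we denoted $\bar\vartheta_\bfone$, and it witnesses $\calG,q_0\models^{T}(\bar Q_ix_i)_{1\leq i\leq l}.\ \non\Gamma_{F^n}$.

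The only point requiring care is the first preliminary fact, i.e. checking that $\Gamma_{F^n}$ is literally the instance $\Gamma_{d_0,\bfone,f}$ of the operator built in the induction: this is where one must unwind the definitions of $\Upsilon_{F^n}$ and $\Upsilon_{\bfone,f}$ and invoke both the collapse of $\preceq_\bfone$ over $F^n$ and Prop.~\ref{prop-Dsh}(3). Everything else is the monotonicity argument above plus the two already-established clauses of Lemma~\ref{lemma-tmap}.
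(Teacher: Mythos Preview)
Your proof is correct and follows essentially the same approach as the paper's proof. The paper's argument is extremely terse (``the first part directly follows from the previous lemma'', and for the second part it simply observes that $\calG,q_0\not\models^{T}(Q_ix_i)_i.\,\Gamma_{F^n}$ means $b_{q_0,d_0,\bfone}\notin F^n$, whence $f\succ_\bfone b_{q_0,d_0,\bfone}$ for every $f\in F^n$, and Lemma~\ref{lemma-tmap} applies); you have unpacked exactly the steps the paper leaves implicit, namely the identification $\Gamma_{F^n}=\Gamma_{d_0,\bfone,f}$ via the collapse of $\preceq_\bfone$ on $F^n$, and the monotonicity linking $\Gamma_{d_0,\bfone,b_{q_0,d_0,\bfone}}$ to $\Gamma_{F^n}$. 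One minor remark: for $s=\bfone$ the quasi-order has only two classes ($F^n$ and its complement), so once you have $f\preceq_\bfone b_{q_0,d_0,\bfone}$ you actually get $f =_\bfone b_{q_0,d_0,\bfone}$ and hence $\Gamma_{d_0,\bfone,b_{q_0,d_0,\bfone}}=\Gamma_{F^n}$ outright; your antitonicity argument is correct but slightly stronger than needed here (and the ``induction on $\size s$'' is vacuous, since $h$ enters $\Gamma_{d,s,h}$ only through $D_{s,h}$ and $\Upsilon_{s,h}$).
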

  
  \begin{proof}
    The first part directly follows from the previous lemma. For the
    second part, $\calG ,q_0 \not\models^{T} (Q_i x_i)_{1\leq i\leq
      l}.\ \Gamma_{F^n}$ means that $b_{q_0,d_0,\bfone}\notin
    F^n$. Hence for any~$f\in F^n$, we~have $f\succ_s
    b_{q_0,d_0,\bfone}$, so that $\bar\vartheta_{q_0,d_0\bfone}$ is a
    witness that $\calG,q\models^T (\bar Q_i x_i)_{1\leq i\leq
      l}.\ \non\Gamma_{F^n}$.
  \end{proof}

  \subsubsection{Compiling optimal maps}
  From Lemma~\ref{lemma-tmap}, we~have timeline maps for each~$q$, $d$
  and~$s$. We~now compile them into two map~$\theta$ and~$\bar\theta$.
  The construction is inductive, along histories.

  Pick a history~$\rho$ starting from~$q_0$ and strategies for
  universally-quantified variables $w\colon \calV^\forall \to (\Hist
  \to \Act)$. Assuming $\theta$ has been defined along all strict
  prefixes of~$\rho$, a~goal~$\beta_j \phi_j$ is said \emph{active}
  after~$\rho$ w.r.t.~$\theta(w)$ if the following condition holds:
  \[
  \forall i<\size\rho.\ \rho(i+1) = \Delta(\rho(i),
  (\theta(w)(\beta_j(A))(\rho_{\leq i}))_{A\in\Agt}).
  \]
  In~other terms, $\beta_j \phi_j$ is active after~$\rho$
  w.r.t.~$\theta(w)$ if~$\rho$ is the outcome of strategies prescribed
  by~$\theta(w)$ under assignment~$\beta_j$. We~let
  $s_{\rho,\theta(w)}$ be the element of~$\ZO^n$ such that
  $s_{\rho,\theta(w)}(j)=1$ if, and only~if, $\beta_j \phi_j$ is
  active after~$\rho$ w.r.t.~$\theta(w)$.

  We~now define~$\theta(w)(x_i)(\rho)$ for all~$x_i\in\calV$:
  \begin{itemize}
  \item if~$x_i\in\calV^\forall$, we~let $\theta(w)(x_i)(\rho)= w(x_i)(\rho)$;
  \item if~$x_i\in\calV^\exists$, we consider two cases:
    \begin{itemize}
    \item if $s_{\rho,\theta(w)}=1$, then all goals are still active,
      and $\theta$ follows the map~$\vartheta_\bfone$:
      $\theta(w)(x_i)(\rho) = \vartheta_{\bfone}(w)(x_i)(\rho)$.
    \item otherwise, we~let~$\rho_1$ be the maximal prefix of~$\rho$
      for which $s_{\rho_1,\theta(w)} \not=
      s_{\rho,\theta(w)}$. We~may then write $\rho=\rho_1\cdot\rho_2$,
      and let~$q_1=\lst{\rho_1}$ and $d_1=\succs(d_0,\rho_1)$.
      We~then let
      $\theta(w)(x_i)(\rho)=\vartheta_{q_1,d_1,s_{\rho,\theta(w)}}
        (w_{\overrightarrow{\rho_1}})(x_i)(\rho_2)$.
    \end{itemize}
  \end{itemize}
  The dual map~$\bar\theta$ is defined in the same way, using
  maps~$\bar\vartheta$ in place of~$\vartheta$.

  The following result will conclude our proof of Theorem~\ref{ThmModelCheck}.
  \begin{restatable}{lemma}{lemmadeuxiemeannexethmmodelcheck}
    \label{lemmaintermediaireavantElemeoptim}  
    There exists a valuation~$\chi$ of domain~$\calV$ such that
    $\theta(\chi_{|\calV^\forall})=\chi$ and
    $\overline{\theta}(\chi_{|\calV^\exists})=\chi$. It~satisfies
    \begin{alignat*}3
      \calG ,q_0 &\models_{\chi} ~\Gamma_{F^n}
      &&\quad\impl\quad
      &\forall w\in(\Hist_\calG \to \Act)^{\calV^\forall}
      .\ \calG ,q_0 &\models_{\theta(w)}F^n(\beta_j\formuleLTL_j)_{1\leq j\leq n}
      \\
      \calG ,q_0 &\models_{\chi} ~ \neg \Gamma_{F^n}  
      &&\quad\impl\quad
      &\forall \overline{w}\in(\Hist_\calG \to \Act)^{\calV^\exists}
      .\ \calG ,q_0 &\models_{\overline{\theta}(\overline{w})} 
      \overline{F^n}(\beta_j\formuleLTL_j)_{1\leq j\leq n}
    \end{alignat*}
  \end{restatable}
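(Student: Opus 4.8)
\emph{Overall plan.} I will reduce both implications to a single monotonicity statement about the compiled maps: at every ``configuration'' $(q,d,s)$ appearing in the construction, the map produced by the compiling procedure started at $(q,d,s)$ realizes the optimal vector $b_{q,d,s}$ of Lemma~\ref{lemma-tmap}, in the precise sense stated below; reading this off at the initial configuration $(q_0,d_0,\bfone)$ then yields the two implications. First, the valuation~$\chi$ is obtained exactly as in the proof of Proposition~\ref{PropNegationsAndCo::DeuxTroisAvril::ccr}: one defines $\chi(x_i)(\rho)$ by induction on the length of~$\rho$ and then on~$i$, letting $\chi(x_i)(\rho)=\overline\theta(\chi_{|\calV^\exists})(x_i)(\rho)$ when $x_i\in\calV^\forall$ and $\chi(x_i)(\rho)=\theta(\chi_{|\calV^\forall})(x_i)(\rho)$ when $x_i\in\calV^\exists$. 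Each right-hand side is well defined because $\theta$ and $\overline\theta$ are timeline maps: the value at~$\rho$ for~$x_i$ depends only on values already fixed at this point of the induction, namely those of the strategies for $x_1,\dots,x_{i-1}$ along $\Prefx(\rho)\cup\{\rho\}$ and of all strategies along $\Prefx(\rho)$. As $\theta$ is a $\phi$-map it equals its argument on $\calV^\forall$, and likewise $\overline\theta$ on $\calV^\exists$, so the construction forces $\theta(\chi_{|\calV^\forall})=\chi$ and $\overline\theta(\chi_{|\calV^\exists})=\chi$.

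\emph{The key claim.} For a configuration $(q,d,s)$ with $\size s\geq 1$, write $\theta^{q,d,s}$ for the map obtained by the compiling construction started from $(q,d,s)$ (so $\theta^{q_0,d_0,\bfone}=\theta$), and for $w\in(\Hist_\calG(q)\to\Act)^{\calV^\forall}$ let $g_w\in\ZO^n$ be defined by $g_w(j)=1$ iff $s(j)=1$ and the goal $\beta_j\formuleLTL_j$ holds along $\out(q,\theta^{q,d,s}(w)\circ\beta_j)$, relative to the automaton state recorded in~$d$. I claim $b_{q,d,s}\preceq_s g_w$ for every~$w$, and, dually, $g_{\overline w}\preceq_s b_{q,d,s}$ for the analogous vector built from $\overline\theta^{q,d,s}$. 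The proof is by induction on $\size s$. When $\size s=1$ there are no splits, so $\theta^{q,d,s}=\vartheta_{q,d,s}$ and the claim is read off $D_{s,b_{q,d,s}}$ and the formula $\Phi_{s,b_{q,d,s}}$ using Proposition~\ref{prop-Dsh} and Lemma~\ref{lemma-upwcl}. When $\size s>1$, fix~$w$ and examine $\vartheta_{q,d,s}(w)$: by Lemma~\ref{lemma-tmap} its enabled outcomes satisfy $\Gamma^{\stick}_{d,s,b_{q,d,s}}\ou\Gamma^{\sep}_{d,s,\Upsilon_{s,b_{q,d,s}}}$. In the stick case all enabled goals share one infinite outcome $\rho\in\calL(D^d_{s,b_{q,d,s}})$, the active set stays equal to~$s$ forever so $\theta^{q,d,s}(w)=\vartheta_{q,d,s}(w)$, and $\rho\models\Phi_{s,b_{q,d,s}}$ together with Lemma~\ref{lemma-upwcl} gives $b_{q,d,s}\preceq_s g_w$. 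In the sep case the enabled goals follow a common history~$\rho_1$ and then split according to some $\tau=(s_\kappa,q_\kappa,d_\kappa)_\kappa$ with $b_{q,d,s}\preceq_s c_{s,\tau}$; up to the splitting move $\theta^{q,d,s}(w)$ still agrees with $\vartheta_{q,d,s}(w)$, and on the branch rooted at $\rho_1\cdot q_\kappa$ it coincides, after shifting by $\rho_1\cdot q_\kappa$, with $\theta^{q_\kappa,d_\kappa,s_\kappa}$ (with $d_\kappa=\succs(d,\rho_1\cdot q_\kappa)$ matching the automaton progress). The induction hypothesis gives $b_{q_\kappa,d_\kappa,s_\kappa}\preceq_{s_\kappa}g_w^\kappa$ for each branch; since the $s_\kappa$ partition~$s$, one has $g_w\inter s=\bigcurlyvee_\kappa(s_\kappa\inter g_w^\kappa)$ and $c_{s,\tau}=\bigcurlyvee_\kappa(s_\kappa\inter b_{q_\kappa,d_\kappa,s_\kappa})$, and an iterated substitution in the semi-stable set~$F^n$ — replacing $s_\kappa\inter b_{q_\kappa,d_\kappa,s_\kappa}$ by $s_\kappa\inter g_w^\kappa$ one block at a time, using $b_{q_\kappa,d_\kappa,s_\kappa}\preceq_{s_\kappa}g_w^\kappa$ and the disjointness of the~$s_\kappa$ — yields $c_{s,\tau}\preceq_s g_w$, hence $b_{q,d,s}\preceq_s g_w$. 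The dual statement is symmetric, using that $\bar\vartheta_{q,d,s,h}$ refutes $\Gamma_{d,s,h}$ for every $h\succ_s b_{q,d,s}$ and that in the total quasi-order $\preceq_s$ an element strictly below every such~$h$ is $\preceq_s b_{q,d,s}$.

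\emph{Conclusion.} At $(q_0,d_0,\bfone)$, ``relative to~$d_0$'' is plain satisfaction, and $\mathbb F^n(h,\bfone)$ equals $\ZO^n$ if $h\in F^n$ and $\emptyset$ otherwise. Thus, if $b_{q_0,d_0,\bfone}\in F^n$ then $b_{q_0,d_0,\bfone}\preceq_\bfone g_w$ forces $g_w\in F^n$ for every~$w$, i.e. $\calG,q_0\models_{\theta(w)}F^n(\beta_j\formuleLTL_j)_j$; and if $b_{q_0,d_0,\bfone}\notin F^n$ then $g_{\overline w}\preceq_\bfone b_{q_0,d_0,\bfone}$ forces $g_{\overline w}\notin F^n$, i.e. $\calG,q_0\models_{\overline\theta(\overline w)}\overline{F^n}(\beta_j\formuleLTL_j)_j$ for every~$\overline w$. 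It remains to see that the hypothesis $\calG,q_0\models_\chi\Gamma_{F^n}$ rules out $b_{q_0,d_0,\bfone}\notin F^n$: in that case Lemma~\ref{lemma-rho1} gives $\bar\vartheta_\bfone$ witnessing $\calG,q_0\models^T(\overline{Q}_i x_i)_i.\ \neg\Gamma_{F^n}$, and since $\Gamma_{F^n}$ and $\neg\Gamma_{F^n}$ depend on a valuation only through the history preceding the first split and through the states~$q_\kappa$ and automaton-vectors~$d_\kappa$ reached there — data on which $\chi$ agrees with $\bar\vartheta_\bfone(\chi_{|\calV^\exists})$, the compiled map following $\bar\vartheta_\bfone$ up to that first split — we would get $\calG,q_0\models_\chi\neg\Gamma_{F^n}$, contradicting $\calG,q_0\models_\chi\Gamma_{F^n}$. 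Hence $\calG,q_0\models_\chi\Gamma_{F^n}$ entails $b_{q_0,d_0,\bfone}\in F^n$, which gives the first implication; the case $\calG,q_0\models_\chi\neg\Gamma_{F^n}$ is handled symmetrically.

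\emph{Main obstacle.} The delicate part is the induction step of the key claim: making precise how the compiled map decomposes at a split into the maps $\theta^{q_\kappa,d_\kappa,s_\kappa}$ (including the timeline shift $w\mapsto w_{\overrightarrow{\rho_1\cdot q_\kappa}}$ and the match $d_\kappa=\succs(d,\rho_1\cdot q_\kappa)$ of automaton progress), and carrying out cleanly the iterated substitution inside the semi-stable set~$F^n$. A subsidiary point is to verify that $\theta$ and $\overline\theta$ are genuinely timeline maps, which follows from each $\vartheta_{q,d,s}$ being timeline and from the split point depending only on timeline-admissible information about earlier strategy values.
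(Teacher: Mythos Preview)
Your proposal is correct and follows essentially the same route as the paper: the construction of~$\chi$ is identical, your ``key claim'' $b_{q,d,s}\preceq_s g_w$ is the paper's Lemma~\ref{lemma-FinitionThmModelCheckOptimal} (the paper tracks configurations via a relation~$R^w$ on pairs $(s,\rho)$ rather than introducing the family~$\theta^{q,d,s}$, but the induction on~$\size s$ with stick/sep cases is the same), your ``iterated substitution'' is exactly the paper's repeated use of Lemma~\ref{LemmaOrderGetCoarserwithS}, and your contrapositive argument linking $\calG,q_0\models_\chi\Gamma_{F^n}$ to $b_{q_0,d_0,\bfone}\in F^n$ via Lemma~\ref{lemma-rho1} is the content of the paper's Lemma~\ref{lemma-zpazjfazekbfk}.
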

  
  \begin{proof}
    We~use the same technique as in the proof of
    Prop.~\ref{PropNegationsAndCo::DeuxTroisAvril::ccr} (see
    Appendix~\ref{app-prop8}): from~$\theta$ and~$\bar\theta$,
    we~build a strategy valuation~$\chi$ on~$\calV$ such that
    $\theta(\chi_{|\calV^\forall})=\chi$ and
    $\overline{\theta}(\chi_{|\calV^\exists})=\chi$.

    We~introduce some more notations.
    For $w\colon \calV^\forall\to (\Hist_\calG \to \Act)$,  we~let
    \begin{itemize}
      \item $\pi^w_j$ be the outcome
        $\out(q_0,(\theta(w)((\beta_j(A))_{A\in\Agt}))$ for all~$1\leq
        j\leq n$;
      \item $f^w$ be the element of~$\ZO^n$ such that $f^w(j)=1$ if,
        and only~if, $\pi^w_j\models \phi_j$;
      \item $R^w \subseteq \ZO^n\times \Hist_\calG$ be the
        relation such that $(s,\rho)\in R^w$ if, and only~if,
        $s=s_{\rho,\theta(w)}$ and $\rho$ is minimal (meaning for any
        strict prefix $\rho'$ of~$\rho$, it~holds $(s,\rho')\notin
        R^w$).
    \end{itemize}
    
    \begin{lemma}
      \label{lemma-FinitionThmModelCheckOptimal}
      For any $w\colon \calV^\forall \to (\Hist_\calG \to \Act)$ and
      any~$\rho\in\Hist$, letting $d_\rho= \succs (d_0,\rho)$,
      it~holds
      \[
      \forall s\in\ZO^n.\ 
      (s,\rho)\in R^w
      \impl
      b_{\lst{\rho},d_\rho,s} \preceq_s f^w.
      \] 
    \end{lemma}

    \begin{proof}
      Fix some $w\in(\Hist_\calG \to \Act)^{\calV^\forall}$.
      The proof proceeds by induction on~$\size s$.
      
      \subparagraph*{Base case: $\size{s}=1$.}
      Assume $(s,\rho)\in R^w$.
      As $\size{s}=1$,
      there is a unique goal, say~$\beta_{j_0}\phi_{j_0}$,
      active along~$\rho$ w.r.t.~$\theta(w)$.
      By~definition of~$\theta$, $\pi_{j_0}=
      \rho\cdot \eta$ where $\eta$~is the outcome
      of $\vartheta_{\lst{\rho},d_\rho, s}(w_{\overrightarrow
        \rho})((\beta_j(A))_{A\in\Agt})$ from~$\lst{\rho}$.
      
      
      Because $\size{s}=1$, we~have
      $\Gamma_{d_\rho,s,b_{\lst{\rho},d_\rho, s}} =
      \Gamma^{\stick}_{d_\rho,s,b_{\lst{\rho},d_\rho, s}}$.
      The~map~$\vartheta_{\lst{\rho},d_\rho, s}$ is a witness that
      $\calG,\lst\rho \models^{T} (Q_i x_i)_{1\leq i\leq l}
      \Gamma_{d_\rho,s,b_{\lst{\rho},d_\rho, s}}$; therefore it also
      witnesses that $\calG,\lst\rho \models^{T} (Q_i x_i)_{1\leq
        i\leq l} \Gamma^{\stick}_{d_\rho,s,b_{\lst{\rho},d_\rho, s}}$.
      By~definition of the $\Gamma^{stick}$ operators, this implies
      that for any~$w$, the outcome of~$\vartheta_{\lst{\rho},d_\rho,
        s}(w_{\overrightarrow\rho})$ from~$\lst\rho$ is accepted by
      the automaton $D^{d_\rho}_{s,b_{\lst{\rho},d_\rho,s}}$;
      in~particular, $\eta$~is accepted by
      $D^{d_\rho}_{s,b_{\lst{\rho},d_\rho,s}}$.
      
      \medskip

      The automaton $D^{d_\rho}_{s,b_{\lst{\rho},d_\rho,s}}$ accepts
      paths which give better results (w.r.t.~$\preceq_s$) for the
      objectives $(\beta_j\varphi_j)_{j \mid s(j)=1}$ than
      $b_{\lst{\rho},d_\rho,s}$.
      In other terms, we~have $b_{\lst{\rho},d_\rho,s} \preceq_s f^w$.
      
      \subparagraph*{Induction step.}
      We assume that the
      Proposition~\ref{lemma-FinitionThmModelCheckOptimal} holds for
      any elements~$s\in\ZO^n$ of size $\size{s}<\alpha$. We~now
      consider for the induction step an element~$s\in\ZO^n$ such that
      $\size s=\alpha$ and $(s,\rho)\in R^w$.
      \begin{itemize}
      \item if the enabled goals all follow the same outcome, i.e., if
        there exists an infinite path~$\eta$ such that
        $\pi_j=\rho\cdot\eta$ for all~$j$ having~$s(j)=1$, then with
        arguments similar to those of the base case, we~get
	$b_{\lst{\rho},d_\rho,s} \preceq_s f^w$.
	
      \item otherwise, the goals enabled by~$s$ split following an
        extended partition
        $\tau=(s_\kappa,q_\kappa,d_\kappa)_{\kappa\leq \lambda}$.  We
        let~$\eta$ be the history from the last state of~$\rho$ to the
        point where the goals split.
	
        
	
	The map $\vartheta_{\lst{\rho},d_\rho,s}$ witnesses that $\calG,\lst{\rho}
	\models^{T} \Gamma_{d,s,b_{\lst{\rho},d_\rho,s}}$;
	therefore $\eta$ may only reach
	a partition~$\tau$ such that
	\begin{equation}
	  b_{\lst{\rho},d_\rho,s} \preceq_s c_{s,\tau} 
	  \label{Label::3:Mars:2017::numero1::Preuve::Prop23}
	\end{equation}
	This partition~$\tau$ is such that 
	for any $1\leq \kappa \leq \lambda$, 
        it~holds $(s_\kappa,\rho\cdot \eta \cdot q_\kappa)\in R^w$;
        using the induction hypothesis, we~get
	\begin{equation}
	  s_\kappa\inter b_{q_\kappa,d_\kappa,s_\kappa} \preceq_{s_\kappa} f^w
	  \label{Label::3:Mars:2017::numero2::Preuve::Prop23}
	\end{equation}
	
	Then, using Lemma~\ref{LemmaOrderGetCoarserwithS}
        (see~Appendix~\ref{app-lemma18}) repeatedly on the
        $(s_\kappa)_{1\leq \kappa\leq\lambda}$, and
        Equation~\eqref{Label::3:Mars:2017::numero2::Preuve::Prop23},
        we obtain
	\begin{alignat*}1
	   s_1\inter b_{q_1,d_1,s_1} \preceq_{s_1} f^w
	  &\impl  (s_1  \inter b_{q_1,d_1,s_1}) \union (s_2 \inter b_{q_2,d_2,s_2}) \preceq_{s_1 \union s_2} f^w
	  \\
	  & \impl \dots 
	  \\
	  &\impl  (s_1  \inter b_{q_1,d_1,s_1}) \union \dots \union (s_\lambda \inter b_{q_\lambda,d_\lambda,s_\lambda}) \preceq_{s_1 \union  \dots  \union s_\lambda} f^w
	  \\
	  &\impl c_{s,\tau} \preceq_s f^w.
	\end{alignat*}
	Combined with~\eqref{Label::3:Mars:2017::numero1::Preuve::Prop23}, 
	we get $b_{\lst{\rho},d_\rho,s}\preceq_s  c_{s,\tau} \preceq_s f^w$.  \popQED
      \end{itemize}
      \let\qed\relax
    \end{proof}
    
    \begin{lemma}
      \label{lemma-zpazjfazekbfk}
      $\calG,q_0\models_\chi \Gamma_{F^n}$ if, and only~if,
      $b_{q_0,d_0 , \bfone} \in F^n$.
    \end{lemma}

    \begin{proof}
      Assume that $b_{q_0,d_0 , \bfone} \in
      \overline{F^n}$.
      Then 
      $\calG,q_0\not\models^{T} (Q_i x_i)_{1\leq i\leq l}.\ \Gamma_{F^n}$.
      Applying
      Lemma~\ref{LemmaBidexistsmapsvarrhosProofThmModelCheck}, 
      the map $\overline{\vartheta_\bfone}$ (and therefore
      $\overline{\theta}$, which act as $\overline{\vartheta_\bfone}$
      before goals branch along different paths) witnesses $\calG,q_0
      \not\models^{T} (Q_i x_i)_{1\leq i\leq l}.\ \Gamma_{F^n}$.
      This~implies that $\calG,q_0
      \not\models_{\chi} \Gamma_{F^n}$, which contradicts 
      the hypothesis.

      Conversely, if $b_{q_0,d_0 , \bfone} \in F^n$, then
      $\calG,q_0\models^{T} (Q_i x_i)_{1\leq i\leq l}.\ \Gamma_{F^n}$,
      which is witnessed by map~$\vartheta_1$. Thus
      $\calG,q_0\models_\chi \Gamma_{F^n}$.
    \end{proof}
    
    We are now ready to prove the first part of
    Lemma~\ref{lemmaintermediaireavantElemeoptim}: consider a function
    $w\colon \calV^\forall \to (\Hist_\calG \to \Act)$.
    By~Lemma~\ref{lemma-FinitionThmModelCheckOptimal} applied to~$w$,
    $s=\bfone$, and~$\rho=q_0$, we get that $ b_{q_0,d_0,\bfone}
    \preceq_\bfone f^w$.
    Now, by Lemma~\ref{lemma-zpazjfazekbfk}, $b_{q_0,d_0 , \bfone} \in
    F^n$, therefore the element $f^w$, being greater than $
    b_{q_0,d_0,\bfone}$ for~$\preceq_\bfone$, must also be in~$F^n$,
    which means that $\calG,q_0 \models_{\theta(w)}
    F^n(\beta_j\phi_j)_{1\leq j\leq n}$.
    
    \medskip
    
    The second implication of the lemma is proven using similar arguments.
  \end{proof}
  
  Lemma~\ref{lemmaintermediaireavantElemeoptim} allows us to conclude
  that at least one of $\phi$ and~$\neg \phi$ must hold on~\calG
  for~$\models^{T}$.
  Lemma~\ref{PropNegationsAndCo::DeuxTroisAvril::ccr} implies that at
  most one can hold. Combining both we get that exactly one holds.
\end{proof}

\subsection{Proof of Proposition~\ref{ch:EG::PropMaximaliteSLEGvisavisSLBG}}
\MaximalitySLEGThmRestatableThingymarkOEOS*

\begin{proof}
  We consider the game~$\calG$ depicted on Figure~\ref{fig-g} with two agents \BoxFill[\FigColB]{} and \CircleFill[\FigColA]{}.
  Let $F^n$ be a non-\semist set over $\{0,1\}^n$. Then there must exist $f_1, f_2 \in
  F^n$, and $s \in \ZO^n$, such that
  ${(f_1 \inter s) \union (f_2 \inter \bar s) \notin F^n}$
  and $(f_2 \inter s) \union (f_1 \inter \bar s) \notin F^n$.
  We then let
  \[
  \phi =
  \forall y^{\BoxUnderFill[\FigColB]{}}_t.\ \forall y^{\BoxUnderFill[\FigColB]{}}_{u}.\ \forall x^{\CircleUnderFill[\FigColA]{}}_t.\ \exists x^{\CircleUnderFill[\FigColA]{}}_{u}
  .\ F^n(\beta_1 \varphi_1,\dots,\beta_n \varphi_n)
  \]
  where 
  \[
  \beta_i = 
  \begin{cases}
    \assign{\BoxFill[\FigColB]{}\mapsto y^{\BoxUnderFill[\FigColB]{}}_t;\ \CircleFill[\FigColA]{}\mapsto x^{\CircleUnderFill[\FigColA]{}}_t}  &\text{if $s(i)=1$}\\
    \assign{\BoxFill[\FigColB]{}\mapsto y^{\BoxUnderFill[\FigColB]{}}_{u};\ \CircleFill[\FigColA]{}\mapsto x^{\CircleUnderFill[\FigColA]{}}_{u}} &\text{if $s(i)=0$}
  \end{cases}
  \]
  and
  \[
  \varphi_i = \left\{\begin{array}{ll}
    \F p_1 \vee \F p_2 & \text{if}\ f_1(i)=f_2(i)=1 \\
    \F p_1 & \text{if}\ f_1(i)=1\ \text{and}\ f_2(i)=0 \\
    \F p_2 & \text{if}\ f_1(i)=0\ \text{and}\ f_2(i)=1 \\
    \texttt{false} & \text{if}\ f_1(i)=f_2(i)=0
  \end{array}\right.
  \]
  
    \begin{figure}[bt]
    \centering
    \begin{tikzpicture}[yscale=.9]
      \draw (0,0) node [draw=\FigColB, fill=\FigColB!40!white, rectangle] (q0) {$q_0$};
      \draw (-2,-1) node [draw=\FigColA, fill=\FigColA!40!white, circle] (q1) {$q_t$};
      \draw (2,-1) node [draw=\FigColA, fill=\FigColA!40!white, circle] (q2) {$q_{u}$};
      \draw (-3,-2) node [draw=black!40!white, circle] (q1f1) {$q_{t1}$} node [left=11pt] {$p_1$};
      \draw (-1,-2) node [draw=black!40!white, circle] (q1f2) {$q_{t2}$} node [left=11pt] {$p_2$};
      \draw (1,-2) node [draw=black!40!white, circle] (q2f1) {$q_{u1}$} node [left=11pt] {$p_1$};
      \draw (3,-2) node [draw=black!40!white, circle] (q2f2) {$q_{u2}$}node [left=11pt] {$p_2$};
      
      \foreach \n in {q1f1,q1f2,q2f1,q2f2}
      {\draw (\n) edge[out=-60,in=-120,looseness=6,-latex'] (\n);}
      \draw [-latex'] (q0) -- (q1);
      \draw [-latex'] (q0) -- (q2);
      \draw [-latex'] (q1) -- (q1f1);
      \draw [-latex'] (q1) -- (q1f2);
      \draw [-latex'] (q2) -- (q2f1);
      \draw [-latex'] (q2) -- (q2f2);
    \end{tikzpicture}
    \caption{The two-agents turn-based game~$\calG$}
    \label{fig-g}
  \end{figure}
  
  It~is not hard to check that the following holds:
  \begin{lemma}
    \label{lemma:intuition}
    Let $\rho$ be a maximal run of $\mathcal{G}$ from~$q_0$. 
    Let~$k\in\{1,2\}$ be such that $\rho$ visits a state labelled with~$p_k$.
    Then
    for~any~$1\leq i\leq n$, we~have $\rho \models \varphi_i$ if, and only~if,
    $f_k(i)=1$. 
  \end{lemma}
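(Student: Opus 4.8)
The plan is to read the claim off the very simple shape of the runs of~$\calG$, followed by a four-way case split on the form of~$\varphi_i$.

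First I would record the structure of maximal runs from~$q_0$. The only moves from~$q_0$ lead to~$q_t$ or~$q_u$; from~$q_t$ (resp.\ $q_u$) the only moves lead to~$q_{t1}$ or~$q_{t2}$ (resp.\ $q_{u1}$ or~$q_{u2}$); and the four states $q_{t1},q_{t2},q_{u1},q_{u2}$ are sinks carrying a self-loop. Hence every maximal run~$\rho$ from~$q_0$ has the form $q_0\cdot q_x\cdot (q_{xm})^\omega$ for some $x\in\{t,u\}$ and $m\in\{1,2\}$, and the set of atomic propositions ever seen along~$\rho$ is exactly~$\{p_m\}$ (recall $q_{x1}$ is labelled~$p_1$ and $q_{x2}$ is labelled~$p_2$, while $q_0$, $q_t$, $q_u$ carry no proposition). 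In particular the index~$k$ of the statement is well defined and equals~$m$; the formula $\F p_1 \vee \F p_2$ holds along~$\rho$; $\texttt{false}$ does not; and $\rho\models \F p_1$ iff $k=1$ while $\rho\models \F p_2$ iff $k=2$, so $\rho$ satisfies exactly one of $\F p_1$ and $\F p_2$.

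Then I would fix $i\in[1,n]$ and distinguish the four cases defining~$\varphi_i$. If $f_1(i)=f_2(i)=1$, then $\varphi_i=\F p_1\vee\F p_2$, which holds along~$\rho$, and $f_k(i)=1$ for either value of~$k$, so both sides of the equivalence are true. If $f_1(i)=1$ and $f_2(i)=0$, then $\varphi_i=\F p_1$, hence $\rho\models\varphi_i$ iff $k=1$ iff $f_k(i)=1$. The case $f_1(i)=0$, $f_2(i)=1$ is symmetric, using $\varphi_i=\F p_2$. Finally if $f_1(i)=f_2(i)=0$, then $\varphi_i=\texttt{false}$, which fails along~$\rho$, and $f_k(i)=0$ for either value of~$k$, so both sides are false. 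In every case $\rho\models\varphi_i \iff f_k(i)=1$, which is the claim.

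There is essentially no real obstacle here: the only point that deserves care is to observe that a maximal run is forced to reach and then loop on one of the four sink states, which is precisely what makes~$k$ well defined and makes $\F p_1$ and $\F p_2$ mutually exclusive and jointly exhaustive along~$\rho$; once this is noted, the four cases follow immediately from the definition of the~$\varphi_i$'s.
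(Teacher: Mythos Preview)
Your proof is correct. The paper does not actually prove this lemma; it simply says ``It~is not hard to check that the following holds'' and moves on. Your argument fills in exactly the routine verification the authors omit: analyzing the shape of maximal runs and doing the four-way case split on the definition of~$\varphi_i$. There is no alternative approach to compare with.
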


  The following two lemmas conclude the proof:
  \begin{lemma}
    $\calG, q_0 \not\models^{T} \phi$
    \label{lemma-ThemMaximaliteSLEGPropTwyou}
  \end{lemma}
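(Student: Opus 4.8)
The plan is to show that no timeline map can witness $\calG,q_0\models^T\phi$, by turning the failure of semi‑stability of~$F^n$ (the vectors $f_1,f_2\in F^n$ and $s\in\ZO^n$ with $(f_1\inter s)\union(f_2\inter\bar s)\notin F^n$ and $(f_2\inter s)\union(f_1\inter\bar s)\notin F^n$) into a strategic obstruction. First I would record two structural facts. (i)~In~$\calG$ (Fig.~\ref{fig-g}) the only relevant decisions are the move of~$\BoxFill[\FigColB]{}$ at~$q_0$ (to~$q_t$ or~$q_u$) and the moves of~$\CircleFill[\FigColA]{}$ at~$q_t$ and at~$q_u$, and every maximal run visits exactly one of~$p_1$,~$p_2$. (ii)~The~formula $\phi=\forall y_t.\forall y_u.\forall x_t.\exists x_u.\ F^n(\beta_i\varphi_i)_{i\leq n}$ (abbreviating the four decorated variables as $y_t,y_u,x_t,x_u$) has a single existential variable~$x_u$, placed last; hence condition~\eqref{eq-T} for~$x_u$ reads simply that $\theta(w)(x_u)(\rho)$ may depend only on the values $w(y_t),w(y_u),w(x_t)$ along~$\rho$ and its strict prefixes.

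Second, I would compute the goal vector produced by a candidate map~$\theta$. Fix $w\colon\calV^\forall\to\Strat(q_0)$ and write $f^w\in\ZO^n$ for the vector with $f^w(i)=1$ iff $\calG,q_0\models_{\theta(w)}\beta_i$. The goals~$\beta_i$ with $s(i)=1$ all use the assignment $\BoxFill[\FigColB]{}\mapsto y_t$, $\CircleFill[\FigColA]{}\mapsto x_t$; their common outcome visits some state labelled~$p_{k_t}$ with $k_t\in\{1,2\}$, and by Lemma~\ref{lemma:intuition}, $\calG,q_0\models_{\theta(w)}\beta_i$ iff $f_{k_t}(i)=1$. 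Symmetrically, the goals with $s(i)=0$ use $\BoxFill[\FigColB]{}\mapsto y_u$, $\CircleFill[\FigColA]{}\mapsto x_u$, their outcome visits some~$p_{k_u}$, and $\calG,q_0\models_{\theta(w)}\beta_i$ iff $f_{k_u}(i)=1$. Hence $f^w=(f_{k_t}\inter s)\union(f_{k_u}\inter\bar s)$: this equals~$f_1$ if $k_t=k_u=1$, equals~$f_2$ if $k_t=k_u=2$, and otherwise equals one of the two non‑semi‑stability witnesses, which lie outside~$F^n$. Therefore any \ref{eq-T}-witness~$\theta$ must force $k_t=k_u$ for every~$w$; note that $k_t$ is determined by $w(y_t)(q_0)$ together with the move of $w(x_t)$ at the reached state, and $k_u$ by $w(y_u)(q_0)$ together with the move of $\theta(w)(x_u)$ at the reached state.

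Third comes the contradiction. I would take two valuations $w_1,w_2$ of~$\calV^\forall$ that agree on~$y_t$, on~$y_u$ and on~$x_t$ over the histories $q_0$ and $q_0\cdot q_t$, with $w_1(y_t)(q_0)=w_2(y_t)(q_0)=q_u$, $w_1(y_u)(q_0)=w_2(y_u)(q_0)=q_t$, and $w_1(x_t)(q_0\cdot q_u)=q_{u1}$ while $w_2(x_t)(q_0\cdot q_u)=q_{u2}$ (agreeing arbitrarily elsewhere). Since $q_0\cdot q_u\notin\Prefx(q_0\cdot q_t)\cup\{q_0\cdot q_t\}$, the hypothesis of~\eqref{eq-T} holds for $w_1,w_2$, the variable~$x_u$ and the history~$\rho=q_0\cdot q_t$, so $\theta(w_1)(x_u)(q_0\cdot q_t)=\theta(w_2)(x_u)(q_0\cdot q_t)=:a$. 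Now under~$w_1$ the $s(i)=1$-goals run $q_0\cdot q_u\cdot q_{u1}^\omega$, so $k_t=1$, while the $s(i)=0$-goals run $q_0\cdot q_t\cdot a^\omega$, so $k_u=1$ iff $a=q_{t1}$; since $\theta$ must enforce $k_t=k_u$, necessarily $a=q_{t1}$. But then under~$w_2$ the $s(i)=1$-goals run $q_0\cdot q_u\cdot q_{u2}^\omega$, giving $k_t=2$, whereas the $s(i)=0$-goals still run $q_0\cdot q_t\cdot q_{t1}^\omega$, giving $k_u=1$; hence $f^{w_2}=(f_2\inter s)\union(f_1\inter\bar s)\notin F^n$, contradicting that $\theta$ is a witness. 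Thus no timeline map witnesses $\calG,q_0\models^T\phi$, i.e.\ $\calG,q_0\not\models^T\phi$.

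The part needing the most care is the simultaneous engineering of $w_1$ and $w_2$: one must route the two families of goals through distinct $\CircleFill[\FigColA]{}$-controlled states (via $y_t$ and $y_u$), arrange that $x_t$'s decisive choice lies on the history~$q_0\cdot q_u$ --- which $\theta$ is not allowed to consult when fixing $x_u$'s action at $q_0\cdot q_t$ --- and yet keep $w_1,w_2$ identical on all histories that condition~\eqref{eq-T} lets $\theta$ observe. Once this configuration is in place, the identification $f^w=(f_{k_t}\inter s)\union(f_{k_u}\inter\bar s)$ via Lemma~\ref{lemma:intuition} and the case analysis on~$a$ are routine.
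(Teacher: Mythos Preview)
Your proof is correct and follows essentially the same approach as the paper's: both arguments exploit that the timeline condition forbids $\theta(w)(x_u)$ on one branch from observing $w(x_t)$ on the other branch, then vary $x_t$ there to force two incompatible values of~$k_t$ against a single fixed~$k_u$, yielding one of the two forbidden mixtures $(f_1\inter s)\union(f_2\inter\bar s)$ or $(f_2\inter s)\union(f_1\inter\bar s)$. The only difference is cosmetic: the paper sends $y_t$ to~$q_t$ and $y_u$ to~$q_u$ (the ``natural'' assignment), while you swap them; since $q_t$ and $q_u$ are symmetric in~$\calG$, this is immaterial, and your explicit verification of condition~\eqref{eq-T} at~$\rho=q_0\cdot q_t$ is a welcome addition over the paper's more terse ``this value is then independent''.
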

  \begin{proof}
    Towards a contradiction, assume that $\calG, q_0 \models^{T} \phi$.    
    We let $\sigma_t$ (resp.~$\sigma_{u}$) be the strategy that maps
    history~$q_0$ to $q_t$  (resp.~$q_{u}$). We fix strategy $\tau_t$ such
    that $\tau_t(q_0\cdot q_t) = 
    q_{t1}$. There is a strategy
    $\tau_{u}$ (with local and timeline dependences) such that
    \[
    \calG, q_0 \models_\chi F^n(\beta_1 \varphi_1,\dots,\beta_n \varphi_n)
    \]
    where $\chi$ maps~$y^{\BoxUnderFill[\FigColB]{}}_t$ to~$\sigma_t$, $y^{\CircleUnderFill[\FigColA]{}}$~to~$\sigma_u$,
    $x^{\CircleUnderFill[\FigColA]{}}_t$~to~$\tau_t$ and $x^{\CircleUnderFill[\FigColA]{}}_u$~to~$\tau_u$.
    
    Since $x^{\CircleUnderFill[\FigColA]{}}_{u}$ is only jointly applied with~$y^{\BoxUnderFill[\FigColB]{}}_{u}$, the only important
    information about $\tau_{u}$ is its value on history $q_0
    \sigma_{u}(q_0) = q_0 q_{u}$. This value is then independent on the
    value of $\tau_t(q_0 q_t) = \tau_t(q_0 \sigma_t(q_0))$. In~particular,
    writing~$\chi'$ for the context obtained from~$\chi$ by replacing $\chi(y^{\BoxUnderFill[\FigColB]{}}_t)=\tau_t$
    with~$\tau'_t$, where $\tau'_t(q_0 q_t) = q_{t2}$, we~also have
    \[
    \calG, q_0 \models_{\chi'} F^n(\beta_1 \varphi_1,\dots,\beta_n \varphi_n)
    \]
    
    Let $v$ and $v'$ be the vectors in~$\ZO^n$
    representing the values of the goals $(\beta_1 \varphi_1,\dots,\beta_n
    \varphi_n)$ under~$\chi$ and~$\chi'$.
    Then $v$ and $v'$ are in~$F^n$. However:
    \begin{itemize}
      \item If $\tau_{u}(q_0 q_{u}) = q_{u1}$, then
      $v' = (f_1 \inter \bar s) \union (f_2 \inter s)$.
      \item If $\tau_{t}(q_0 q_{t}) = q_{t2}$, then
      $v = (f_1 \inter s) \union (f_2 \inter \bar s)$.
    \end{itemize}  
    In both cases, by hypothesis, this does not belong to~$F^n$, which is
    a contradiction. 
  \end{proof}

  \begin{lemma}
    $\calG ,q_0 \not\models^{T} \neg \formule$
    \label{lemma-ThemMaximaliteSLEGPropTrwe}
  \end{lemma}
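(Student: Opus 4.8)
The plan is to argue by contradiction, mirroring the argument used for Lemma~\ref{lemma-ThemMaximaliteSLEGPropTwyou}. Denote by $y_t,y_u,x_t,x_u$ the four variables of~$\phi$ in the order in which they are quantified, so that $\neg\phi = \exists y_t.\ \exists y_u.\ \exists x_t.\ \forall x_u.\ \overline{F^n}(\beta_1\varphi_1,\dots,\beta_n\varphi_n)$. Thus $x_u$ is the \emph{unique} universally-quantified variable of~$\neg\phi$, and it occurs \emph{last} in the quantifier prefix. Since no universal variable precedes any existential one, the timeline condition~\eqref{eq-T} reduces to the following: for any strategy~$\tau$ for~$x_u$ and any history~$\rho$, the values $\bar\theta(\tau)(y_t)(\rho)$, $\bar\theta(\tau)(y_u)(\rho)$ and $\bar\theta(\tau)(x_t)(\rho)$ depend only on the restriction of~$\tau$ to the \emph{strict} prefixes of~$\rho$. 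Suppose towards a contradiction that some timeline map~$\bar\theta$ is a \ref{eq-T}-witness of~$\neg\phi$ at~$q_0$.

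First I would read off the shape of the outcomes. Fix any~$\tau$, let $\chi=\bar\theta(\tau)$, and let $v\in\ZO^n$ be the vector with $v(i)=1$ iff $\calG,q_0\models_\chi\beta_i\varphi_i$. All goals~$\beta_i$ with $s(i)=1$ carry the same assignment (box player to~$y_t$, circle player to~$x_t$), hence the same outcome, which is a maximal run of~$\calG$ ending in one of the four sinks and so visiting exactly one label~$p_{k_1}$ with $k_1\in\{1,2\}$; by Lemma~\ref{lemma:intuition}, $v(i)=f_{k_1}(i)$ for all~$i$ with $s(i)=1$, i.e.\ $v\inter s=f_{k_1}\inter s$. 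Symmetrically, the goals with $s(i)=0$ share the assignment (box player to~$y_u$, circle player to~$x_u$), their common outcome visits a single label~$p_{k_2}$, and $v\inter\bar s=f_{k_2}\inter\bar s$. Hence $v=(f_{k_1}\inter s)\union(f_{k_2}\inter\bar s)$ for some $k_1,k_2\in\{1,2\}$.

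The crux is that the adversary cannot influence~$k_1$, so it can force $k_2=k_1$. Indeed $b_t:=\chi(y_t)(q_0)$ and $b_u:=\chi(y_u)(q_0)$, both in~$\{q_t,q_u\}$, do \emph{not} depend on~$\tau$ at all, since $q_0$ has no strict prefix; and the circle move of~$\chi(x_t)$ along the first outcome is played at history~$q_0\cdot b_t$, whose only strict prefix is~$q_0$, so the sink it leads to (hence~$k_1$) depends on~$\tau$ only through~$\tau(q_0)$. Now pick a strategy~$\tau'$ for~$x_u$ that agrees with~$\tau$ at~$q_0$ and such that $\tau'(q_0\cdot b_u)$ is the circle action leading from~$b_u$ to its $p_{k_1}$-labelled successor (with~$\tau'$ arbitrary elsewhere; such a strategy exists since strategies are arbitrary maps). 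Letting $\chi'=\bar\theta(\tau')$: the first outcome under~$\chi'$ still visits a $p_{k_1}$-sink, because that sink depends on~$\tau'$ only via~$\tau'(q_0)=\tau(q_0)$; and the second outcome under~$\chi'$ now also visits a $p_{k_1}$-sink, by the choice of~$\tau'(q_0\cdot b_u)$. Applying Lemma~\ref{lemma:intuition} once more, the value vector~$v'$ of the goals under~$\chi'$ satisfies $v'(i)=f_{k_1}(i)$ for \emph{all}~$i$, i.e.\ $v'=f_{k_1}$. But $f_{k_1}\in\{f_1,f_2\}\subseteq F^n$, whereas $\bar\theta$ being a \ref{eq-T}-witness of~$\neg\phi$ forces $v'\in\overline{F^n}$: a contradiction. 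Hence $\neg\phi$ has no \ref{eq-T}-witness at~$q_0$, that~is, $\calG,q_0\not\models^T\neg\phi$.

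The only genuinely delicate point in carrying this out is the bookkeeping of the timeline dependences: one must verify that $b_t$ and $b_u$ are fully independent of~$\tau$, and that~$k_1$ depends on~$\tau$ only through~$\tau(q_0)$, so that the surgery on~$\tau$ at history~$q_0\cdot b_u$ reroutes the second outcome without perturbing the first. Granting that, the remainder is routine and is essentially the mirror image of the proofs of Lemma~\ref{lemma-ThemMaximaliteSLEGPropTwyou} and Lemma~\ref{lemma:intuition}.
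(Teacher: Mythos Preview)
Your proof is correct and follows the same line as the paper's: both assume a timeline witness for~$\neg\phi$ and exhibit a strategy for the sole universal variable~$x_u$ that forces the two outcomes to end in states carrying the same label~$p_k$, so that the value vector equals~$f_k\in F^n$, a contradiction. Your bookkeeping of the timeline dependences (noting that~$b_t,b_u$ are independent of~$\tau$ while~$k_1$ depends only on~$\tau(q_0)$, and then choosing~$\tau'$ to agree with~$\tau$ at~$q_0$) is in fact more precise than the paper's, which simply asserts that the three existential strategies are independent of~$\tau_u$ without discussing the possible dependence of~$x_t$ at length-two histories on~$\tau_u(q_0)$.
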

  
  \begin{proof}
    Similarly, assume $\calG ,q_0 \models^{T} \neg \formule$.
    Fix any three strategies $\sigma_t$, $\sigma_{u}$ and $\tau_t$ 
    respectively intended for the existentially quantified variables 
    $y^{\BoxUnderFill[\FigColB]{}}_t$, $y^{\BoxUnderFill[\FigColB]{}}_{u}$ and $x^{\CircleUnderFill[\FigColA]{}}_t$.
    Due to the nature of $\models^{T}$, these three strategies are independent from the strategy $\tau_u$ of $x^{\CircleUnderFill[\FigColA]{}}_{u}$.
    Consider then the following strategy $\tau_u$:
    \[
     \tau_u(q_0.\sigma_u(q_0))= \tau_t(q_0.\sigma_t(q_0))
     \]
     Let $\chi$ be the resulting context and $v$ the vector representing the values of the goals $(\beta_1 \varphi_1,\dots,\beta_n
    \varphi_n)$ under~$\chi$. Either $v=f_1$ or $v=f_2$; in both case $v\in F^n$, which is a contradiction.
  \end{proof}
\let\qed\relax
\end{proof}





%
%
%

\end{document}